\newtheorem{definition}{Definition}
\newtheorem{theorem}{Theorem}
\newtheorem{property}{Property}
\newtheorem{example}{Example}
\newtheorem{remark}{Remark}
\newtheorem{lemma}{Lemma}
\newtheorem{corollary}{Corollary}
\newtheorem{notation}{Notation}{\bf}{\it}
\newcommand{\lin}{\mathrel{-\!\circ}}
\newcommand{\dem}{\small \triangleright \normalsize}
\newcommand{\der}{\vdash }
\newcommand{\deri}{\vdash_{\tt I} }
\newcommand{\ders}{\vdash_{\tiny \STA} }
\newcommand{\STA}{{\tt STA}}
\newcommand{\STR}{{\tt STR}}
\newcommand{\IN}{{\tt INTER}}
\newcommand{\DLAL}{{\tt DLAL}}
\newcommand{\LAL}{{\tt LAL}}
\newcommand{\FPTIME}{{\tt FPTIME}}
\newcommand{\FV}[1]{{\tt FV}(#1)}
\newcommand{\FTV}[1]{{\tt FTV}(#1)}
\newcommand{\dom}[1]{{\tt dom}(#1)}
\newcommand{\rk}[1]{{\tt rk}(#1)}
\newcommand{\rks}[1]{{\tt rks}(#1)}
\newcommand{\D}[1]{{\tt d}(#1)}
\newcommand{\DS}[1]{{\tt ds}(#1)}
\newcommand{\W}[2]{{\tt W}(#1, #2)}
\newcommand{\WS}[2]{{\tt Ws}(#1, #2)}
\newcommand{\SN}{\bf \rm SN}
\newcommand{\sm}[3]{#1_{#2}^{#3}}
\newcommand{\eqdom}{\boxminus}
\newcommand{\lam}{\lambda}
\newcommand{\M}{{\tt M}}
\newcommand{\N}{{\tt N}}
\newcommand{\R}{{\tt P}}
\newcommand{\Q}{{\tt Q}}
\newcommand{\U}{{\tt U}}
\newcommand{\V}{{\tt V}}
\newcommand{\x}{{\tt x}}
\newcommand{\y}{{\tt y}}
\newcommand{\z}{{\tt z}}
\newcommand{\w}{{\tt w}}
\newcommand{\s}[1]{{\tt s}_{#1}}
\newcommand{\redbeta}{\longrightarrow_{\beta}}
\newcommand{\redbetas}{\overset{*}{\longrightarrow_{\beta}}}
\newcommand{\sub}[3]{#1 [#2/#3]}
\newcommand{\subi}[5]{#1 [#2/#3,...,#4/#5]}
\newcommand{\T}{{\mathcal T}}
\newcommand{\TS}{{\mathcal TS}}
\newcommand{\A}{{\tt A}}
\newcommand{\B}{{\tt B}}
\newcommand{\C}{{\tt C}}
\newcommand{\tvar}{{\tt a}}
\newcommand{\tvarb}{{\tt b}}
\renewcommand{\bar}[1]{\overline{#1}}
\newcommand{\stra}[1]{\{ #1 \}}
\newcommand{\str}[2]{\{ #1_{1}, ...\ , #1_{#2} \}}
\newcommand{\num}[1]{\underline{ #1 }}
\newcommand{\Num}{{\mathcal N}}
\newcommand{\Wtype}{{\bf W}}
\newcommand{\Witype}{{\bf WI}}
\newcommand{\Wtypei}[2]{{\bf W}_{#1,#2}}
\newcommand{\Witypei}[2]{{\bf WI}_{#1,#2}}
\newcommand{\SI}[1]{(#1)^{\circ}}
\begin{document}
\title{A type assignment for $\lambda$-calculus complete both for FPTIME and strong normalization}
\author{Erika De Benedetti and Simona Ronchi Della Rocca}
\date{}
\address{Dipartimento di Informatica - Corso Svizzera 185 - 10149 Torino\\
{\tt \{debenede, ronchi\}@di.unito.it}}

\begin{abstract}
One of the aims of Implicit Computational Complexity is the design of programming languages with bounded computational complexity; indeed, guaranteeing and certifying a limited resources usage is of central importance for various aspects of computer science. One of the more promising approaches to this aim is based on the use of lambda-calculus as paradigmatic programming language and the design of type assignment systems for lambda-terms, where types guarantee both the functional correctness and the complexity bound. Some systems characterizing polynomial time complexity have been designed, inspired by the Light Logics, all of which give type to a proper subset of strongly normalizing terms.\\
We propose a system of stratified types, inspired by intersection types, where intersection is a non-associative operator. The system, called \STR, is correct and complete for polynomial time computations; moreover, all the strongly normalizing terms are typed in it, thus increasing the typing power with respect to the previous proposals.
Finally, \STR\ enjoys a stronger expressivity with respect to the previous system \STA, since it allows to type a restricted version of iteration.
\end{abstract}

\maketitle
\section{Introduction}
The importance of controlling (and producing a formal certification of) the resource usage of programs is already recognized by the scientific community. In this general setting, we are interested in the design of programming languages with an intrinsically polynomial computational bound. 
We are interested in an ML-like approach, so our starting points will be:
\begin{itemize}
\item the use of  $\lambda$-calculus as an abstract paradigm of programming languages;
\item the use of types to certificate program properties.
\end{itemize}
In this line, the aim is to design a type assignment system for $\lambda$-calculus, where
types certificate both the functional correctness and the polynomial bound of terms.
There are already two proposals along this line: the systems \DLAL\ by Baillot and Terui \cite{Baillot04lics} and the system \STA\ by Gaboardi and Ronchi Della Rocca \cite{GaboardiRonchi07csl}. Both systems are based on Light Logics, derived from the Linear Logic of Girard \cite{Girard87tcs}. More precisely, types of \DLAL\ are a proper subset of formulae of \LAL\ by Asperti and Roversi \cite{Asperti02tocl}, a simplified affine version of the Light Linear Logic of Girard \cite{Girard98ic}, while types of \STA\ are a proper subset of formulae of the Soft Linear Logic by Lafont \cite{Lafont02SLL}. The design of both systems is based on the transfer of the complexity properties from logics to terms, according to the proofs-as-programs approach inspired by the Curry-Howard isomorphism. Both characterize the polynomial functions, in the sense that all and only the polynomial functions can be coded in such systems, according to the standard coding of functions by $\lambda$-terms. The logical inspiration of both systems is at the same time the key ingredient of their correctness and the responsible for their weak expressivity, since they can code few algorithms. From a typability point of view,
both systems give types to a proper subset of the strongly normalizing terms.\\
A stronger expressive power could be achieved by enriching the language, and this approach has been followed by many authors.
In particular, an extension of \STA\ has been designed in \cite{ChrzaszczShubert/csl/2012}, where some features like ML-polymorphism have been added to $\lambda$-terms, and a typed extension of \DLAL\ has been proposed in \cite{BaillotGaboardiMogbil10esop}, where a typed recursion has been introduced, besides other programming features.

Here we want to explore a different direction; namely, we want to preserve having pure $\lambda$-calculus as a programming language, but at the same time we want to design a system with stronger typability power, hoping to obtain, as side effect, also a gain in expressivity.
% , and we want to extend the set of types in order to assign type to more terms, while preserving the polynomial bound.
The resulting system, called \STR\, is polynomial (in the previous sense), and moreover all the strongly normalizing terms are typed in it, so increasing in a very significant way the typability power with respect to both \DLAL\ and \STA. In particular, \STR\ is more expressive than \STA, since a restricted form of iteration can be typed in \STR, which cannot be expressed in \STA.\\
In \STR\ types can be either linear or stratified. Linear types represent linear premises, in the sense of Linear Logic, and the operation of stratification is a sort of soft promotion. From a logical point of view, while in \STA\ the promotion is a sort of multiple contraction for different copies of the same premise, here we can contract also premises having different types. This feature can no more be expressed in a logical way; indeed, \STR\ is introduced as type assignment system without a pure logical counterpart.\\
In order to build \STR\  we were inspired by intersection types \cite{CoppoDezani:NDJoFL-80}. Indeed, the relation between \STA\ and \STR\ reminds, in a very rough way, the relation between simple types and intersection types assignment system, the second being derived from the first, but allowing a variable to be assigned different types. The relation of the present work with intersection types is further discussed in the conclusion.\\
\STR\ preserves the polynomial bound; indeed, the introduction of the intersection increases the typability power without increasing the computability power, as proved in \cite{BucciaPiper03}.
In the relation between \STA\ and \STR, the same phenomenon happens: \STR\ allows to type all the strongly normalizing terms, but it characterizes the same class of functions as \STA, i.e. \FPTIME.
However the expressivity is increased, since bounded iteration cannot be expressed in \STA\ , while in \STR\ a restricted iteration construct can be typed.
%In order to prove this claim, we were thinking of intersection types and of the proof, given in \cite{BucciaPiper03}, that intersection types code the same set of functions as the simple types, while increasing the typability power. In the same way, we show that both \STA\ and \STR\ characterize all and only the polynomial functions, whereas the typing power of \STR\ is higher: indeed, in \STR\ all and only the strong normalizing terms can be typed. 
%In particular, bounded iteration cannot be expressed in \STA\ , while in \STR\ a restricted iteration construct can be typed.

The paper is organized as follows:
in Section~\ref{sec:STR} we introduce the type assignment system \STR\ and we prove that it enjoys the subject reduction property.
In Section~\ref{sec:sn} we prove that \STR\ characterizes strong normalizazion.
%, by proving that any typed term is strongly normalizing (Subsection~\ref{subsec:tsn}) and that any strongly normalizing term can be assigned a type (Subsection~\ref{subsec:snt}).
In Section~\ref{sec:poly}, we
%recall system \STA\ and give the translation from \STA\ to \STR\ (Subsection~\ref{subsec:trad}); then 
we prove that \STR\ is sound and complete with respect to \FPTIME.
%, by showing that in \STR\ all and only the functions typed in \STA\ are represented and that the polynomial bound holds for the numerical functions (Subsection~\ref{subsec:W}).
%In this section we give also an example of the gain in expressivity of \STR\  with respect to \STA.
In Section~\ref{sec:strat-int}, we comment on the choice of stratified types with respect to intersection types.
Finally, in Section~\ref{sec:concl}, we conclude with some technical observations on the use of intersection types for quantitative purposes.
\label{sec:intro}
\section{The \STR\ type assignment system}
\label{sec:STR}
In this section we introduce the type assignment system for $\lambda$-calculus named \STR, based on the notion of \textsl{stratification} of types, and we prove that it enjoys subject reduction.

\begin{definition}\label{def:int-types}
\begin{enumerate}[(1)]\ 
\item The set $\Lambda$ of {\em terms} is defined by the following syntax:
$$\M ::= \x \mid \lam \x. \M \mid \M \M$$
where $\x$ ranges over a countable set ot variables $\mathcal{X}$. 
$\FV{\M}$ denotes the set of free variables of the term $\M$.
Terms are considered modulo $\alpha$-equivalence; moreover, bound variables are assumed to be all distinct and different from free ones.
The symbol $\equiv$ denotes the identity on terms, modulo renaming of bound variables.\\
A {\em (term) context} is generated by the same grammar, starting from a constant $[.]$ (the hole), in addition to variables.
Term contexts are denoted by $\C[.]$, and $\C[\M]$ denotes the result of plugging $\M$ into every occurrence of $[.]$ in $\C[.]$. Observe that, as usual, the plugging operation allows the capture of free variables.

\item The reduction relation $\redbeta$ is the contextual closure of the rule ${(\lam \x. \M) \N \rightarrow \M[\N/\x]}$, where the substitution ${\M[\N_{1}/\x_{1}, ... , \N_{n}/\x_{n}]}$, also denoted by $\M[\N_{i}/\x_{i}]_{i=1}^{n}$, is the capture-free substitution of $\N_{i}$ to all the free occurrences of $\x_{i}$ in $\M$ ${(1 \leq i \leq n)}$. The relation $\redbetas$ is the reflexive and transitive closure of $\redbeta$.

\item The set of pre-types is defined by the following syntax:
$$\A::= \tvar \mid \sigma \lin \A \mid \forall \tvar.\A \quad \mbox{ (linear pre-types)}$$
$$\sigma ::= \A \mid  \stra{\underbrace{\sigma,...,\sigma}_{n}} \quad  \mbox{ for } n>0 
\mbox{ (stratified pre-types) }$$
where $\tvar$ ranges over a countable set of type variables. Type variables are ranged over by $\tvar,\tvarb$, linear pre-types are ranged over by $\A, \B, \C$, and stratified pre-types by 
$\sigma, \tau$. $\FTV{\sigma}$ denotes the set of free type variables of $\sigma$.

Let $\sim$ denote the syntactical equality between (stratified) pre-types. On pre-types we define the following equivalence $=$, modulo renaming of bound variables:

$\begin{array}{lcl}
\A \sim \B &\mbox{ implies }& \A=\B\\ 
\A = \B & \mbox{ implies } & \forall \tvar. \A = \forall \tvar. \B \\
\sigma = \tau, \A=\B  &\mbox{ implies }& \sigma \lin \A = \tau \lin \B \\
\stra{\sigma_{1},...,\sigma_{n}}= \stra{\tau_{1},...,\tau_{m}} &\mbox{ iff } &
\forall i.\exists j. \sigma_{i}=\tau_{j} \mbox{ and }
\forall j. \exists i. \sigma_{i}=\tau_{j} \\
 (1 \leq i \leq n, 1 \leq j\leq m ) &&
\end{array}$
i.e., a stratified pre-type represents a set.

\item {\em Types} are pre-types modulo the equivalence relation $=$. The set of types is denoted by $\T$. In order to avoid reasoning modulo $=$, when writing $\{\sigma_{1}, ... , \sigma_{n}\}$ we assume that $\sigma_{i}\not = \sigma_{j}$, for $i\not=j$ ($1\leq i,j \leq n$), where $\sigma_{1}, ... , \sigma_{n}$ are the {\em components} of $\{\sigma_{1}, ... , \sigma_{n}\}$. \\
A multiset over $\T$ is an unordered list $[\sigma_{1}, ... , \sigma_{n}]$, where the number of occurrences of $\sigma_i$ is its {\em multiplicity}. The multiset union $\uplus$ is the concatenation of lists.
The multiset of the linear components of $\sigma$ is defined inductively as
\[ { \bar \A = [ \A ]} \qquad {\overline{\str{\sigma}{k}} = \bar \sigma_{1} \uplus ... \uplus \bar \sigma_{k}}. \]
%where $\uplus$ is the union operator on multisets.
We use $\{ \sigma \}^{n}$ as a short for $\underbrace{ \{ ... \{ }_{n} \sigma \underbrace{ \} ... \} }_{n}$.

\item Contexts are partial functions with finite domain from variables to types.

\end{enumerate}

\end{definition}
\begin{example} Let $\sigma = \stra{ \A, \stra{\A, \B}}$: then $\bar \sigma = [ \A, \A, \B ]$.
\end{example}

We introduce a few notations that are used throughout the paper.

\begin{notation}

{\em(Types){\bf.}} Operations on sets are naturally extended to stratified types; in particular, we denote by $\sm{\cup}{i=1}{n} \stra{\sigma_{i}}$ the stratified type obtained by unifying the singletons $\stra{\sigma_{1}}, ... , \stra{\sigma_{n}}$. In order to avoid unnecessary parentheses, we assume that $\lin$ takes precedence over $\forall$, i.e. ${\forall \tvar. \sigma \lin \A}$ is equivalent to ${\forall \tvar. (\sigma \lin \A)}$.

\medskip

{\em(Contexts){\bf.}} The domain of $\Gamma$ is denoted by $\dom{\Gamma}$;
$\stra{\Gamma}^{n}$ is the context such that $\stra{\Gamma}^{n}(\x) = \stra{\Gamma(\x)}^{n}$,
while $\emptyset$ denotes the context with empty domain.\\
%We denote by $\sm{\#}{i=1}{n} \Delta_{i}$ the fact that $j \not= h$ implies ${\dom{\Delta_{j}} \cap \dom{\Delta_{h}} = \emptyset}$ ${(1 \leq j, h \leq n)}$, while $\sm{\eqdom}{i=1}{n} \Gamma_{i}$ denotes the fact that $j \not= h$ implies ${\dom{\Delta_{j}} = \dom{\Delta_{h}}}$ ${(1 \leq j, h \leq n)}$;
We denote by ${\Gamma_{1} \# ... \# \Gamma_{n} }$ (resp. ${\Gamma_{1} \eqdom ... \eqdom \Gamma_{n}}$) the fact that $j \not= h$ implies ${\dom{\Delta_{j}} \cap \dom{\Delta_{h}} = \emptyset}$ (resp. ${\dom{\Delta_{j}} = \dom{\Delta_{h}}}$), for ${1 \leq j, h \leq n}$, and we often write it as $\sm{\#}{i=1}{n} \Delta_{i}$ (resp. $\sm{\eqdom}{i=1}{n} \Gamma_{i}$).\\
If $\sm{\#}{i=1}{n} \Gamma_{i}$, then $\Gamma_{1}, ... , \Gamma_{n}$ is the context such that ${(\Gamma_{1}, ... , \Gamma_{n})(\x)= \Gamma_{i}(\x)}$, where ${\x \in \dom{\Gamma_{i}}}$ ${(1 \leq i \leq n)}$.
If $\sm{\eqdom}{i=1}{n} \Gamma_{i}$, then ${\sm{\cup}{i=1}{n} \stra{\Gamma_{i}}}$ is the context such that $(\sm{\cup}{i=1}{n} \stra{\Gamma_{i}})(\x)= \sm{\cup}{i=1}{n} 
\stra{\Gamma_{i}(\x)}$.

\end{notation}

 The system \STR\ proves judgments of the kind $\Gamma \der \M: \A$, where $\Gamma$ is a context, $\M$ a term and $\sigma$ a type.
 The rules of the system are shown in Table~\ref{tab:STR}. 
\begin{notation}[Derivations]
Type derivations are denoted by $\Sigma$, $\Pi$.
We denote by $\Gamma \der \M: \sigma$ the existence of a derivation proving such statement, while $\Pi \dem \Gamma \der \M: \sigma$ denotes a particular derivation $\Pi$, and we abbreviate $\emptyset \der \M:\sigma$ by $\der \M:\sigma$.
Given the application of a rule in a derivation, the derivations to which is is applied are its {\em premises}.
Moreover $\dom{\Sigma}$ represents the set of term variables $\sm{\cup}{\Gamma \in \Sigma}{\;} \dom{\Gamma}$, where $\Gamma \in \Sigma$ means that $\Gamma$ is a context occurring in any application of rule of $\Sigma$; by abuse of notation, we denote by ${\Sigma_{1} \# ... \# \Sigma_{n} }$, or $\sm{\#}{i=1}{n} \Sigma_{i}$, the fact that $j \not= h$ implies ${\dom{\Sigma_{j}} \cap \dom{\Sigma_{h}} = \emptyset}$, for $1 \leq j,h \leq n$.
\end{notation}

\begin{table}
\begin{center}
\fbox{
\begin{minipage}{12cm}
\medskip
$$
\infer[(Ax)]{\x:\A \der \x:\A}{}
$$
$$
\infer[(w)]{\Gamma, \x: \A \der \M: \sigma}{\Gamma \der \M:\sigma & (\x \not\in \dom{\Gamma}) }
\qquad
\infer[(\lin I)]{\Gamma \der \lambda \x.\M:\sigma\lin \B}{\Gamma, \x: \sigma \der \M:\B }
$$
$$
\infer[(\lin E)]{\Gamma_{1},\Gamma_{2} \der \M\N:\A}
{\Gamma_{1}\der \M:\sigma\lin \A & 
\Gamma_{2} \der \N:\sigma & (\Gamma_{1} \# \Gamma_{2})}
$$
$$
\infer[(m)]{\Gamma, \x: \sm{\cup}{i=1}{n} \stra{\sigma_{i}}  \der \M[\x/\x_{1},...,\x/\x_{n}]:\tau}
{\Gamma, \x_{1}:\sigma_{1},...,\x_{n}:\sigma_{n} \der \M:\tau}
$$
$$
\infer[(st)]{\sm{\cup}{i=1}{n} \stra{\Gamma_{i}} \der \M: \stra{\sigma_{1},...,\sigma_{n}}}
{( \Gamma_{i} \der \M:\sigma_{i} )_{1 \leq i \leq n} & \sm{\eqdom}{i=1}{n} \Gamma_{i}}
$$
$$
\infer[(\forall I)]{\Gamma \der \M: \forall \tvar. \A}
{\Gamma \der \M:\A & (\tvar \not \in \FTV{\Gamma})}
\qquad
\infer[(\forall E)]{\Gamma \der\M:\sub{\B}{\A}{\tvar}}
{\Gamma\der \M:\forall \tvar .\B }
$$

\end{minipage}
}
\caption{The \STR\ Type Assignment system. \label{tab:STR}}
\end{center}
\end{table}

A few comments about the system are in order. Observe that rules $(Ax)$ and $(w)$ introduce only linear types; moreover, by rule $(\lin E)$ only terms having disjoint sets of free variables can be applied to each other: however, more general applications can be built by applying the multiplexor rule $(m)$ and renaming term variables. 
Observe that rule $(st)$ introduces the stratification both in the premises and in the subject.
Finally, rule $(\forall E)$ allows to replace type variables by linear types only, in order to preserve the syntax.

Note that a more general weakening rule is derivable:
\begin{property}\label{prop:weak}
$\Gamma \der \M:\sigma$ and $\x \not\in \dom{\Gamma}$ imply $\Gamma, \x:\tau \der \M:\sigma$, for every $\tau$.
\end{property}
\begin{proof}
By induction on $\tau$.
If $\tau$ is linear, it is sufficient to apply rule $(w)$.
Otherwise, if $\bar{\tau} = [\A_{1}, ... , \A_{n}]$ then we can build the following derivation:
\[\infer=[\delta]{\Gamma, \x: \tau \der \M: \sigma}{\infer=[(w)]{\Gamma, \x_{1}: \A_{1}, ... , \x_{n}: \A_{n} \der \M: \sigma}{\Gamma \der \M: \sigma}}\]
where $\delta$ is a suitable sequence of applications of rule $(m)$.
\end{proof}
\begin{example} Let $\Gamma \der \M: \sigma$ and ${\tau = \stra{\stra{\A}, \stra{\A, \B}} }$, so ${ \bar{\tau} = [\A, \A, \B] }$: then we can build the following derivation:
\[\infer[(m)]{\Gamma, \x: \stra{\stra{\A}, \stra{\A, \B}} \der \M: \sigma}{
\infer[(m)]{\Gamma, \x_{1}: \stra{\A}, \x_{2}: \stra{\A, \B} \der \M: \sigma}{
\infer[(m)]{\Gamma, \x_{1}: \stra{\A}, \y_{2}: \A, \y_{3}: \B \der \M: \sigma}{
\infer=[(w)]{\Gamma, \y_{1}: \A, \y_{2}: \A, \y_{3}: \B \der \M: \sigma}{\Gamma \der \M: \sigma}}}}\]
\end{example}

Note that, based on the previous property, the condition on the contexts in rule $(st)$ is not restrictive.

Rules can be classified into  \textit{constructive rules}, $\{(Ax),(\lin I), (\lin E)\}$ which contribute to building the subject,
and non-constructive ones. The latter can be further classified into \textit{quantifier rules}, $\{(\forall I), (\forall E)\}$, modifying the types but not the terms,
\textit{renaming rules}, $\{(w),(m)\}$,
renaming variables in terms or introducing new variables in the context, and the \textit{stratification rule}, $(st)$, merging derivations having the same subject. A sequence of applications of renaming (and quantifier) rules is called a {\em renaming (and quantifier) sequence}.

\begin{definition}[Instance of a term] \label{def:instance} 
A term $\M$ is an {\em instance} of $\M'$ if 
there are
an integer $n \geq 0$, $\mathcal{X}_1,\ldots,\mathcal{X}_n$ subsets of $\FV{\M'}$ and
fresh variables $\y_1,\dots, \y_n$ such that
$\M$
is obtained from $\M'$ by renaming all variables in $\mathcal{X}_i$ by $\y_i$. 
An instance is a {\em copy} if 
all the sets $\mathcal{X}_i$ are singletons. 
The notions of copy can be extended to contexts and derivations in a straightforward way.
\end{definition}
In order to easily reason about derivations in proofs, we introduce the notion of clean derivation.
The proof of the following property is obvious, thanks to the renaming rule $(m)$.
\begin{property}
A derivation is {\em clean} if, in every application of rule $(\lin E)$ with premises $\Pi_{1}$ and $\Pi_{2}$, $\Pi_{1} \# \Pi_{2}$. 
For every derivation $\Pi$, there is a clean derivation $\Pi'$ proving the same statement.
\end{property}
From now on, we assume that all derivations are clean.

\begin{lemma}\label{rem:instance} Given $\Pi\dem\Gamma \der \M: \sigma$ and a instance $\N$ of $\M$, there is $\Delta$ such that $\Sigma \dem \Delta \der \N: \sigma$ where $\Sigma$ is obtained from $\Pi$ by applying a suitable sequence of $(m)$ rule.
\end{lemma}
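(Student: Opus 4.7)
The plan is to obtain $\Sigma$ from $\Pi$ by appending one application of rule $(m)$ for each subset $\mathcal{X}_i$ occurring in the renaming that produces $\N$ from $\M$. By Definition \ref{def:instance}, $\N$ results from $\M$ by simultaneously replacing every variable in each $\mathcal{X}_i \subseteq \FV{\M}$ by a single fresh variable $\y_i$. First I observe that every free variable of $\M$ must occur in $\dom{\Gamma}$: this follows by a straightforward induction on $\Pi$, since $(Ax)$ introduces the subject variable into the context, and the only rule that removes a variable is $(\lin I)$, which does so precisely when the variable is bound in the subject. Hence each $\x \in \mathcal{X}_i$ carries a type $\Gamma(\x)$ on which rule $(m)$ can act.

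I then proceed by induction on the number $n$ of subsets. The base case $n = 0$ is immediate, with $\N \equiv \M$ and $\Sigma = \Pi$. For the inductive step, write $\mathcal{X}_1 = \{\x_1, \ldots, \x_k\}$ and $\sigma_j = \Gamma(\x_j)$, and append below $\Pi$ a single application of $(m)$ that merges $\x_1, \ldots, \x_k$ into $\y_1$, producing
\[
\Gamma', \y_1: \sm{\cup}{j=1}{k} \stra{\sigma_j} \der \M[\y_1/\x_1, \ldots, \y_1/\x_k]: \sigma,
\]
where $\Gamma'$ is the restriction of $\Gamma$ to $\dom{\Gamma} \setminus \{\x_1, \ldots, \x_k\}$. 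The new subject $\M' = \M[\y_1/\x_1, \ldots, \y_1/\x_k]$ still satisfies the condition that $\N$ is an instance of $\M'$, via the remaining subsets $\mathcal{X}_2, \ldots, \mathcal{X}_n$ (now viewed as subsets of $\FV{\M'}$, since the $\y_j$'s are distinct from $\y_1$) and fresh targets $\y_2, \ldots, \y_n$. Applying the induction hypothesis to the extended derivation yields $\Sigma$.

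The only thing that needs verifying, essentially pure bookkeeping, is that successive applications of $(m)$ do not interfere. This is immediate: the $\y_i$'s are fresh and distinct, and the $\mathcal{X}_i$'s are pairwise disjoint, since any given free variable of $\M$ can be renamed to at most one fresh $\y_i$; hence merging $\mathcal{X}_i$ leaves the variables in $\mathcal{X}_j$ ($j \neq i$) together with their types intact in the context, ready for the next merge. The resulting context $\Delta$ is therefore determined by $\Delta(\y_i) = \bigcup_{\x \in \mathcal{X}_i} \stra{\Gamma(\x)}$ and $\Delta(\z) = \Gamma(\z)$ for $\z \notin \bigcup_i \mathcal{X}_i$, which is indeed a well-defined context of \STR.
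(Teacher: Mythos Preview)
Your proof is correct and follows the natural approach. The paper states this lemma without proof, evidently regarding it as routine; your argument---one application of $(m)$ per renaming set $\mathcal{X}_i$, processed by induction on $n$---is exactly the intended construction, and your bookkeeping remarks about disjointness of the $\mathcal{X}_i$ and freshness of the $\y_i$ correctly dispatch the only possible obstructions. The inline verification that $\FV{\M} \subseteq \dom{\Gamma}$ is also fine (the paper records this fact later as part of the Generation Lemma, but your direct induction is self-contained and does not create any circularity).
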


We also introduce some notations for renaming rules.

\begin{definition}
The {\em domain} and the {\em range} of an application of $(m)$ rule are respectively the set of variables contracted by it and the singleton of the new introduced variable. The {\em domain} and {\em range} of an application $(w)$ rule are respectively the empty set and the singleton of the new introduced variable. \\
Two applications of renaming rules are {\em disjoint} iff both their domains and their ranges are disjoint.
%Let $\Pi\dem\Gamma, \x:\sigma \der \M: \tau$. The set of ancestors of $\x$ in $\Pi$ (denoted by $A(\x,\Pi)$) is defined inductively as follows:
%\begin{itemize}
%\item if $\Pi$ is $$\infer[(Ax)]{\x:A \der \x:A}{}$$ then $A(\x,\Pi )=\{\x\}$.
%\item if $\Pi$ is $$\infer[(w)]{\Gamma, \y:A \der \M:\sigma}{\Pi'\dem \Gamma\der \M:\sigma}$$ then if $\y=\x$ then $A(\x,\Pi )=\{\x\}$, otherwise $A(\x,\Pi )=A(\x,\Pi' )$.
%\item if $\Pi$ is $$\infer[(\lin I)]{\Gamma \der \lambda \y.\M:\sigma\lin \B}{\Pi'\dem\Gamma, \y: \sigma \der \M:\B }
%$$
%then $A(\x,\Pi )=A(\x,\Pi' )$.
%\item if $\Pi$ is $$\infer[(m)]{\Gamma, \x: \str{\sigma}{n} \der \subi{\M}{\x}{\x_1}{\x}{\x_n}: \tau}
%{\Pi'\dem\Gamma, \x_1:\sigma_1,...,\x_n:\sigma_n\der \M: \tau}
%$$
%then $A(\x,\Pi)=\{\x\}\cup_{1\leq i\leq n} A(\x_i,\Pi')$. 
%\item if $\Pi$ is
%$$\infer[(st)]{\sm{\cup}{i=1}{n} \stra{\Gamma_{i}} \der \M: \stra{\sigma_{1},...,\sigma_{n}}}
%{(\Pi_i \dem \Gamma_{i} \der \M:\sigma_{i})_{1 \leq i \leq n} }$$
%then $A(\x,\Pi)=\cup_{1\leq i\leq n}A(\x,\Pi_i)$.
%\item in all other cases, $\Pi$ ends by a rule:
%$$
%\infer[(R)]{\Gamma \der \M: \sigma}{\Pi' \dem\Gamma \der \M':\sigma'}$$
%and $A(\x,\Pi)=A(\x,\Pi')$.
%\end{itemize} 
\end{definition}

The following is a key property of the system.
%allows to deconstruct a derivation assigning a stratified type into a given number of premises assigning a simpler type:

\begin{property}[Subject with stratified type]
\label{prop:stra}
Let $\Pi \dem \Gamma \der \M: \str{\sigma}{n}$; then there are $\Pi_{i} \dem \Gamma_{i} \der \N: \sigma_{i}$ $(1 \leq i \leq n)$ such that $\Pi$ consists of an application of rule $(st)$ with premises $(\Pi_{i})_{1 \leq i \leq n}$, followed by a renaming sequence.
\end{property}
\begin{proof}
By induction on $\Pi \dem \Gamma \der \M: \str{\sigma}{n}$; observe that the last rule of $\Pi$ can be either $(w)$, $(m)$ or $(st)$.
If $\Pi$ ends with an application of rule $(w)$ or $(m)$, then the proof follows by induction. Otherwise, let $\Pi$ be
\[ \infer[(st)]{\sm{\cup}{i=1}{n} \stra{\Gamma_{i}} \der \M: \str{\sigma}{n} }{(\Gamma_{i} \der \M: \sigma_{i})_{1 \leq i \leq n}} \]
Then the proof is trivial, since $\M \equiv \N$ and the renaming sequence is empty.
\end{proof}

\begin{corollary}\label{cor:stra}
Let $\Gamma \der \M: \str{\sigma}{n}$ and $\bar{\sigma} = [ \A_{1}, ... , \A_{n} ]$; then there is $\Gamma_{i}$ such that $\Gamma_{i} \der \M: \A_{i}$ $(1 \leq i \leq n)$.
\end{corollary}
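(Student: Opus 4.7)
The plan is to proceed by induction on the number of stratification braces in $\sigma$, peeling off one layer at a time using Property~\ref{prop:stra} until only linear components remain.

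The first step is to apply Property~\ref{prop:stra} to $\Pi \dem \Gamma \der \M : \{\sigma_1,\ldots,\sigma_n\}$, obtaining premise derivations $\Pi_i \dem \Gamma_i \der \N : \sigma_i$ with $\Gamma_i \eqdom \Gamma_j$, together with a renaming sequence $\rho$ (a sequence of $(w)$ and $(m)$ rules) that, applied after the $(st)$ step, transforms $\N$ into $\M$ and $\bigcup_i \{\Gamma_i\}$ into $\Gamma$. Since $(w)$ and $(m)$ do not touch the type and are context-oblivious in the sense that $(w)$ adjoins a fresh variable with an arbitrary type while $(m)$ merges whatever types sit in the context into their stratified union, the sequence $\rho$ can be replayed on each $\Pi_i$ individually; the $\eqdom$ condition of $(st)$ ensures every variable manipulated by $\rho$ is already present in each $\Gamma_i$, so no extra weakening is needed. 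This yields, for every $i$, a derivation $\Pi'_i \dem \Gamma'_i \der \M : \sigma_i$.

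Now the induction kicks in. If $\sigma_i$ is linear, then $\sigma_i$ is one of the $\A_j$ in $\bar{\sigma}$ and $\Pi'_i$ is already the required derivation. Otherwise $\sigma_i$ is itself stratified and its number of braces is strictly smaller than that of $\sigma$, so the induction hypothesis applied to $\Pi'_i$ supplies a derivation $\Delta \der \M : \B$ for every $\B \in \bar{\sigma_i}$. Collecting across all $i$ exhausts every linear component of $\bar{\sigma} = \bar{\sigma_1} \uplus \ldots \uplus \bar{\sigma_n} = [\A_1, \ldots, \A_n]$, as required.

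The one step deserving care is the second paragraph, namely justifying that each elementary rule of $\rho$ is still a well-formed application when the ambient context is $\Gamma_i$ rather than $\bigcup_j \{\Gamma_j\}$. In principle this reduces to the observation about the $\eqdom$ condition and the context-obliviousness of $(w)$ and $(m)$ noted above, but a rigorous treatment requires tracking, step by step through $\rho$, how the types of the merged variables evolve in parallel between the original derivation and its transplant onto $\Pi_i$, so that each $(m)$ in the transplant remains a legal instance of the rule even though the stratified unions it produces differ from those in the original.
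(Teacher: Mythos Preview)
The paper states this corollary without proof, treating it as an immediate consequence of Property~\ref{prop:stra}. Your argument is correct and follows the natural route: peel off one layer of braces with Property~\ref{prop:stra}, pass from the premise subject $\N$ back to $\M$, and recurse.

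Two remarks. First, your second paragraph reinvents Lemma~\ref{rem:instance}. Since the renaming sequence $\rho$ only affects the subject through its $(m)$ steps, $\M$ is an instance of $\N$ in the sense of Definition~\ref{def:instance}, and Lemma~\ref{rem:instance} already gives you $\Gamma'_i \der \M : \sigma_i$ directly from $\Gamma_i \der \N : \sigma_i$. Citing that lemma would eliminate the need to replay $\rho$ step by step and would dissolve the worry you raise in your final paragraph. Second, that worry is in any case unfounded: as you note, the $\eqdom$ side condition guarantees that the domains of the contexts stay synchronized throughout $\rho$, and the $(m)$ rule places no constraint on the \emph{types} of the merged variables, only on their presence in the context; so every $(m)$ in $\rho$ transplants cleanly onto $\Pi_i$, with different stratified unions appearing in the conclusion but the rule instance remaining well-formed.
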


The Generation Lemma connects the shape of a term with its possible typings.
%Observe that, by the previous Property, we can consider w.l.o.g. derivations assigning linear types to terms, since $\M$ being an instance of $\M'$ implies that they have the same structure.

\begin{lemma}[Generation Lemma]\label{lem:gen}
Let $\Pi \dem \Gamma \der \M: \A$.
\begin{enumerate}
\item $\FV{\M}\subseteq\dom{\Gamma}$.

\item Let $\M \equiv \x$. Then $\Pi$ consists of an $(Ax)$ rule followed by a (possibly empty) renaming and quantifier sequence.

\item Let $\M \equiv \lambda \x. \N$. Then $\A = \forall \vec \tvar. \tau \lin \B$ and there is $\lambda \x.\R$ such that $\lambda \x.\N$ is an instance of $\lambda \x.\R$ and there is a derivation:
$$
\infer=[\delta]{\Gamma \der \lambda \x.\N:\forall \vec \tvar. \tau \lin \B}
{\infer[(\lin I)]{\Gamma' \der \lambda \x.\R: \rho \lin \C}{\Gamma',\x:\rho \der \R:\C}}
$$
for some $\rho,\C$, where $\delta$ is a (possibly empty) renaming and quantifier sequence.

\item Let $\M \equiv \N\R$. Then there is $\N'\R'$ such that $\N\R$ is an instance of $\N'\R'$ and there is a derivation:
$$
\infer=[\delta]{\Gamma \der \N\R:\A'}
{\infer[(\lin E)]{\Gamma_{1},\Gamma_{2} \der \N'\R': \A}{\Gamma_{1} \der \N':\sigma \lin \A & \Gamma_{2} \der \R':\sigma & \Gamma_{1}\#\Gamma_{2}}}
$$

for some $\sigma$, where $\delta$ is a (possibly empty) renaming and quantifier sequence.

\end{enumerate}
\end{lemma}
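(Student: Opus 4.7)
The plan is to proceed by structural induction on $\Pi$, handling the four items simultaneously. A crucial preliminary observation is that, since $\A$ is a linear pre-type, the last rule of $\Pi$ cannot be $(st)$: that rule always produces a stratified pre-type $\stra{\sigma_{1}, ..., \sigma_{n}}$, which by the grammar is not linear for any $n \geq 1$. So only the seven remaining rules need to be inspected. Item (1) is then handled by direct inspection of each: $(Ax)$ puts the sole free variable into the context; $(w)$ enlarges the context; $(m)$ contracts $\x_{1}, ..., \x_{n}$ into $\x$ consistently in subject and context; $(\lin I)$ moves $\x$ from the context into a binder; $(\lin E)$ combines two disjoint contexts whose union covers the free variables of the application; and the quantifier rules touch only types.

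For items (2), (3) and (4), I would perform a case analysis on the last rule, exploiting the fact that the outer shape of the subject sharply restricts which rules could have produced it. The base cases are the three constructive rules of the appropriate shape: $(Ax)$ for item (2), with empty trailing sequence; $(\lin I)$ for item (3), with $\vec{\tvar}$ and $\delta$ empty and $\R \equiv \N$; and $(\lin E)$ for item (4), with $\delta$ empty and $\N'\R' \equiv \N\R$. The remaining candidates $(w), (m), (\forall I), (\forall E)$ all preserve the outermost shape of the subject, acting only on the context, on types, or by renaming free variables, so the inductive hypothesis applied to the premise furnishes a factorisation of the required form, to which the current rule is appended, extending the trailing renaming-and-quantifier sequence $\delta$. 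In the $(\forall I)$ case this adds a $\forall$ to the universal prefix; in the $(\forall E)$ case a type-substitution is performed inside that prefix, and inspection of the grammar of linear pre-types confirms that the result still has the shape $\forall \vec{\tvar}. \tau \lin \B$ required for item (3), and of course an arbitrary linear pre-type for item (4).

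The delicate point lies in the $(m)$ case of items (3) and (4), where the premise has some subject $\M_{0}$ with $\M \equiv \M_{0}[\x/\x_{1}, ..., \x/\x_{n}]$, so $\M$ is an instance of $\M_{0}$ in the sense of Definition~\ref{def:instance}. The inductive hypothesis applied to the premise yields a prototype $\M'_{0}$ such that $\M_{0}$ is an instance of $\M'_{0}$ and the premise factorises through $(\lin I)$ or $(\lin E)$ acting on $\M'_{0}$; composing the two renamings (which is again a valid instance renaming, since the $\x_{i}$'s are fresh with respect to those introduced by the first step, so the disjointness condition of the definition is preserved) makes $\M$ an instance of $\M'_{0}$, and appending the final $(m)$ to $\delta$ closes the case. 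This transitivity of the instance relation, together with the observation that $(st)$ is excluded by the linearity of $\A$, are the only real conceptual ingredients; everything else reduces to mechanical bookkeeping.
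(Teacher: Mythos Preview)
Your proposal is correct and follows exactly the natural inductive approach that the paper has in mind; the paper's own proof consists of the single word ``Easy'', so there is nothing substantive to compare.

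One small technical wrinkle is worth flagging. Because you run the induction only over derivations whose conclusion carries a \emph{linear} type, in the $(\lin E)$ case the right premise $\Gamma_2 \der \R': \sigma$ may well have $\sigma$ stratified, and then your inductive hypothesis does not directly deliver $\FV{\R'} \subseteq \dom{\Gamma_2}$ for item~(1). The fix is immediate: item~(1) in fact holds for arbitrary judgments $\Gamma \der \M: \sigma$, and the extra $(st)$ case is trivial since all premises of an $(st)$ application share the same domain. With item~(1) thus decoupled and proved in full generality, your treatment of items~(2)--(4) goes through exactly as you describe; in particular, your observation that $(st)$ is excluded by linearity of $\A$, and your handling of the $(m)$ case via transitivity of the instance relation, are precisely the right ingredients.
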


\begin{proof} 
Easy.
\end{proof}

The following technical Lemma is useful to prove the Substitution Lemma.

\begin{lemma}\label{lem:st-m}
Let $\Sigma_{i}\dem\Theta_{i} \der \N_{i} : \sigma_{i}$ be a copy of $\Pi_{i}\dem\Delta_{i} \der \N : \sigma_{i}$ such that $\sm{\#}{i=1}{n} \Theta_{i}$ and $\sm{\eqdom}{i=1}{n} \Delta_{i}$, with $\Delta = \sm{\cup}{i=1}{n} \stra{\Delta_{i}}$ $(1 \leq i \leq n)$.
Then, from every ${\Gamma, \Theta_{1}, ... , \Theta_{n} \der \M[\N_{i}/\x_{i}]_{i=1}^{n}: \tau}$ such that $\Gamma \# \Delta$, we can derive ${\Gamma, \Delta \der \M[\N/\x_{i}]_{i=1}^{n} : \tau}$ by a renaming sequence.
\end{lemma}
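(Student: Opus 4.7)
The plan is to apply the multiplexor rule $(m)$ once for each variable in $\dom{\Delta}$, contracting the $n$ fresh renamed versions that are spread across $\Theta_1,\ldots,\Theta_n$ back into a single variable. This simultaneously causes the $n$ distinct copies $\N_1,\ldots,\N_n$ sitting inside the subject to collapse into $n$ identical occurrences of the original $\N$.

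To make this precise, first unpack what a ``copy'' gives. Since each $\Sigma_i$ is a copy of $\Pi_i$, there is a bijective renaming $\rho_i$ from $\dom{\Delta_i}$ onto $\dom{\Theta_i}$ with $\Theta_i(\rho_i(\y))=\Delta_i(\y)$ for every $\y\in\dom{\Delta_i}$, and $\N_i$ is obtained from $\N$ by renaming each free variable $\y$ to $\rho_i(\y)$. The hypothesis $\sm{\eqdom}{i=1}{n}\Delta_i$ gives a common domain $\dom{\Delta}$ for all the $\rho_i$; the hypotheses $\sm{\#}{i=1}{n}\Theta_i$ and $\Gamma\#\Delta$, combined with the freshness of copies, ensure that the variables $\rho_i(\y)$ (for $i\le n$, $\y\in\dom{\Delta}$) are pairwise distinct and disjoint from $\dom{\Gamma}\cup\dom{\Delta}$.

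The construction is then straightforward: enumerate $\dom{\Delta}$ as $\y^{(1)},\ldots,\y^{(k)}$, and for each $j=1,\ldots,k$ apply rule $(m)$ to contract the $n$ variables $\rho_1(\y^{(j)}),\ldots,\rho_n(\y^{(j)})$, of respective types $\Delta_1(\y^{(j)}),\ldots,\Delta_n(\y^{(j)})$, into the single target $\y^{(j)}$. The rule produces the type $\sm{\cup}{i=1}{n}\stra{\Delta_i(\y^{(j)})}=\Delta(\y^{(j)})$ and rewrites every free occurrence of each $\rho_i(\y^{(j)})$ in the subject to $\y^{(j)}$. After all $k$ applications the context is exactly $\Gamma,\Delta$; and since every free variable of each $\N_i$ appears precisely as one of the $\rho_i(\y^{(j)})$, each $\N_i$ in the subject has been renamed back to $\N$, yielding $\M[\N/\x_i]_{i=1}^n$ as required.

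The only delicate point is to verify that the iterated $(m)$-applications genuinely transform $\M[\N_i/\x_i]_{i=1}^n$ into $\M[\N/\x_i]_{i=1}^n$ at the level of the subject. This rests entirely on the bookkeeping above: inside the $i$-th substituted copy, every free variable occurs as a unique $\rho_i(\y^{(j)})$, so contracting those renamed variables in one sweep literally inverts the copy-renaming at all $n$ substitution sites simultaneously. Everything else (the type side of each $(m)$-application, the well-formedness of contexts after each step, and the fact that the whole sequence is a renaming sequence) is mechanical.
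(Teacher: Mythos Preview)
Your proposal is correct and follows essentially the same approach as the paper: the paper argues by induction on the size of $\dom{\Delta}$, peeling off one variable $\y$ at a time and applying rule $(m)$ to contract its $n$ fresh copies, which is exactly your iterative construction unrolled. The bookkeeping you spell out (freshness of the $\rho_i(\y^{(j)})$ relative to $\dom{\Gamma}\cup\dom{\Delta}$, and the resulting collapse of each $\N_i$ back to $\N$) matches the paper's reasoning.
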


\begin{proof}
By induction on $\Delta$.
If $\Delta = \emptyset$, then $\M[\N_{i}/\x_{i}]_{i=1}^{n} = \M[\N/\x_{i}]_{i=1}^{n}$, so the  renaming sequence is empty.

Otherwise, let $\Delta_{i} = \Delta'_{i}, \y: \rho_{i}$ and ${\Delta = \Delta'_{i}, \y: \rho = \sm{\cup}{i=1}{n} \stra{\Delta'_{i}}, \sm{\cup}{i=1}{n} \stra{\rho_{i}}}$.
Moreover,
let $\Theta'_{i}, \y_{i}: \rho_{i} \der \N_{i}[\y_{i}/\y] : \sigma_{i}$ be a copy of ${\Delta'_{i}, \y: \rho_{i} \der \N: \sigma_{i}}$ such that $\sm{\#}{i=1}{n} \Theta'_{i}$ and $\sm{\eqdom}{i=1}{n} \Delta'_{i}$, with $\Delta' = \sm{\cup}{i=1}{n} \stra{\Delta'_{i}}$ $(1 \leq i \leq n)$.
By inductive hypothesis, from ${\Gamma, \Theta'_{1}, ... , \Theta'_{n}, \y_{1}: \rho_{1}, ... , \y_{n}: \rho_{n} \der \M[\N_{i}[\y_{i}/\y]/\x_{i}]_{i=1}^{n}: \tau}$ such that ${\Gamma \# \Delta'}$, we can derive ${\Gamma, \Delta', \y_{1}: \rho_{1}, ... , \y_{n}: \rho_{n} \der \M[\N[\y_{i}/\y]/\x_{i}]_{i=1}^{n} : \tau}$ by a renaming sequence:
from this derivation we obtain ${\Gamma, \Delta', \y: \rho \der \M[\N/\x_{i}]_{i=1}^{n} : \tau}$ by applying rule $(m)$ with domain $\{ \y_{1}, ... , \y_{n} \}$ and range $\{ \y \}$.
\end{proof}

We supply an example in order to make the previous lemma clearer.

\begin{example}
Let
$$\infer[(st)]{\x: \stra{\sigma_{1}, \sigma_{2}}, \y: \stra{\tau_{1}, \tau_{2}}, \z: \stra{\rho_{1}, \rho_{2}} \der \N: \stra{\phi_{1}, \phi_{2}}}{\Sigma_{1} \dem \x: \sigma_{1}, \y: \tau_{1}, \z: \rho_{1}\der \N: \phi_{1} & \Sigma_{2} \dem \x: \sigma_{2}, \y: \tau_{2}, \z: \rho_{2}\der \N: \phi_{2}}$$
and consider
$$\Sigma'_{1} \dem \x_{1}: \sigma_{1}, \y_{1}: \tau_{1}, \z_{1}: \rho_{1}\der \N_{1}: \phi_{1}$$
$$\Sigma'_{2} \dem \x_{2}: \sigma_{2}, \y_{2}: \tau_{2}, \z_{2}: \rho_{2}\der \N_{2}: \phi_{2}$$
where $\N_{1} \equiv \N[\x_{1} / \x, \y_{1} / \y, \z_{1} / \z]$ and $\N_{2} \equiv \N[\x_{2} / \x, \y_{2} / \y, \z_{2} / \z]$,
as copies of $\Sigma_{1}$ and $\Sigma_{2}$ respectively.

Then, from 
${\Gamma,  \x_{1}: \sigma_{1}, \y_{1}: \tau_{1}, \z_{1}: \rho_{1}, \x_{2}: \sigma_{2}, \y_{2}: \tau_{2}, \z_{2}: \rho_{2} \der \M[\N_{1}/\w_{1}, \N_{2}/\w_{2}]: \phi}$
such that $\{ \x, \y, \z \} \cap \dom{\Gamma} = \emptyset$,
we can obtain the following derivation by applying some renaming rules:
\small
\[ \infer[(m)]{\Delta,  \x: \stra{\sigma_{1}, \sigma_{2}}, \y: \stra{\tau_{1}, \tau_{2}}, \z: \stra{\rho_{1}, \rho_{2}} \der \M[\N / \w_{1}, \N / \w_{2}]: \phi}{\infer[(m)]{\Delta,  \x: \stra{\sigma_{1}, \sigma_{2}}, \y: \stra{\tau_{1}, \tau_{2}}, \z_{1}: \rho_{1}, \z_{2}: \rho_{2} \der \M[\N_{1}[\x/\x_{1}, \y/\y_{1}]/\w_{1}, \N_{2}[\x/\x_{2}, \y/\y_{2}]/\w_{2}]: \phi}{
\infer[(m)]{\Delta,  \x: \stra{\sigma_{1}, \sigma_{2}}, \y_{1}: \tau_{1}, \z_{1}: \rho_{1}, \y_{2}: \tau_{2}, \z_{2}: \rho_{2} \der \M[\N_{1}[\x/\x_{1}]/\w_{1}, \N_{2}[\x/\x_{2}]/\w_{2}]: \phi}{\Gamma,  \x_{1}: \sigma_{1}, \y_{1}: \tau_{1}, \z_{1}: \rho_{1}, \x_{2}: \sigma_{2}, \y_{2}: \tau_{2}, \z_{2}: \rho_{2} \der \M[\N_{1}/\w_{1}, \N_{2}/\w_{2}]: \phi}}} \]
\normalsize
\end{example}

In order to state the Substitution Lemma, we introduce the notation
 $S(\Sigma_{1}, ... , \Sigma_{n}, \Pi)$, which stands for the substitution of derivations $\Sigma_{1}, ... , \Sigma_{n}$ in derivation $\Pi$.

\begin{lemma}[Substitution]
\label{lem:subs}
Let $\Pi \dem \Gamma, \x_{1}: \sigma_{1},...,\x_{n}:\sigma_{n}\der \M: \tau$ and $\Sigma_{i} \dem \Delta_{i}\der \N_{i}: \sigma_{i}$, for $1 \leq i \leq n$, such that  ${ \Gamma \# \Delta_{1} \# ... \# \Delta_{n} }$; then there is a clean derivation
${S(\Sigma_{1},...,\Sigma_{n},\Pi) \dem \Gamma,\Delta_{1},...,\Delta_{n}\der \M [ \N_{i}/\x_{i}]_{i=1}^{n}: \tau}$.
\end{lemma}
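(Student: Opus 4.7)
The plan is to proceed by structural induction on $\Pi$, with case analysis on the last rule applied. The inductive hypothesis provides, for any sub-derivation of $\Pi$, a substituted derivation (possibly with only a subset of the $\Sigma_i$ used, according to which $\x_i$ actually appear in the sub-derivation's context). Since the hypothesis requires $\Gamma \# \Delta_1 \# \ldots \# \Delta_n$ and derivations are assumed clean, freshness of the variables introduced by $\Sigma_i$ is preserved throughout.

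Most cases are routine. If the last rule is $(Ax)$, then $n=1$, $\M \equiv \x_1$, and $S(\Sigma_1, \Pi) := \Sigma_1$ works directly. If the last rule is $(w)$, $(\forall I)$, $(\forall E)$, or $(\lin I)$, we apply the inductive hypothesis to the premise and re-apply the same rule (the side conditions are preserved by the disjointness of the $\Delta_i$ from $\Gamma$ and the freshness of bound type variables). If the last rule is $(\lin E)$ with premises typing $\M \equiv \N\R$ under contexts $\Gamma_1 \# \Gamma_2$, every $\x_i$ belongs to exactly one of the two contexts, so we partition the $\Sigma_i$ accordingly, apply the inductive hypothesis to each premise with its share of the $\Sigma_i$, and close with $(\lin E)$; cleanness is preserved because the $\Delta_i$ are pairwise disjoint. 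If the last rule is $(st)$, then $\tau = \stra{\tau_1, \ldots, \tau_m}$ and $\Pi$ has $m$ premises typing $\M$ with each $\tau_j$ under $\eqdom$-equivalent contexts sharing the same $\x_i : \sigma_i$; we apply the inductive hypothesis to each premise, using (by Lemma~\ref{rem:instance}) a copy of each $\Sigma_i$ per branch to preserve disjointness, then close with $(st)$ and a renaming sequence that merges the copies of the $\Delta_i$ back together.

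The main obstacle is the case where the last rule is $(m)$ with range $\{\x_i\}$ for some substituted variable. Here $\sigma_i = \sm{\cup}{j=1}{k} \stra{\rho_j}$, the premise of $(m)$ types $\M'$ (such that $\M \equiv \M'[\x_i/\x_{i,1}, \ldots, \x_i/\x_{i,k}]$) under a context containing $\x_{i,1} : \rho_1, \ldots, \x_{i,k} : \rho_k$, and we need to substitute $\N_i$ for every $\x_{i,j}$ while $\Sigma_i$ types $\N_i$ with the stratified type $\sigma_i$, not with the individual $\rho_j$. To bridge this gap we invoke Property~\ref{prop:stra} (Subject with stratified type) on $\Sigma_i$, which decomposes it into derivations $\Sigma_{i,j} \dem \Delta_{i,j} \der \N_i : \rho_j$ (with $\sm{\eqdom}{j=1}{k} \Delta_{i,j}$ and $\Delta_i = \sm{\cup}{j=1}{k} \stra{\Delta_{i,j}}$) below an $(st)$ and a renaming sequence; we then take by Lemma~\ref{rem:instance} a disjoint copy $\Sigma'_{i,j} \dem \Theta_{i,j} \der \N_{i,j} : \rho_j$ of each $\Sigma_{i,j}$, where the $\N_{i,j}$ are disjointly renamed copies of $\N_i$. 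Applying the inductive hypothesis to the premise of $(m)$ with these copies (together with the $\Sigma_\ell$ for $\ell \neq i$) yields a derivation of $\M'[\N_{i,j}/\x_{i,j}]_{j=1}^{k}$ under a context containing the $\Theta_{i,j}$. Lemma~\ref{lem:st-m} applies exactly in this situation and produces, via a suitable renaming sequence, the desired derivation of $\M'[\N_i/\x_{i,j}]_{j=1}^{k} \equiv \M[\N_i/\x_i]$ under the context $\Gamma, \Delta_i, \ldots$ The case where the range of $(m)$ is not a substituted variable is immediate: apply the inductive hypothesis to the premise and re-apply $(m)$, since the substituted $\N_\ell$ introduce no conflicts with the contracted variables (by cleanness and the $\#$-hypotheses).

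Cleanness of the resulting derivation is maintained throughout: all copies introduced to handle the $(m)$ and $(st)$ cases use fresh variables, and the disjointness assumption on $\Gamma$ and the $\Delta_i$ ensures that no newly formed $(\lin E)$-step violates the $\Pi_1 \# \Pi_2$ side condition.
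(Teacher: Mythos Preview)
Your overall strategy matches the paper's, and the $(m)$ case is handled essentially as in the paper (decompose $\Sigma_i$ via Property~\ref{prop:stra}, take disjoint copies, apply induction, then Lemma~\ref{lem:st-m}). However, your treatment of the $(st)$ case contains a genuine gap.

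You write that the $m$ premises of $(st)$ type $\M$ ``under $\eqdom$-equivalent contexts sharing the same $\x_i : \sigma_i$''. This is not correct: by the shape of rule $(st)$, if the conclusion's context assigns $\x_i : \sigma_i$, then the $k$-th premise assigns $\x_i : \sigma_i^k$ for some $\sigma_i^k$ with $\sigma_i = \sm{\cup}{k=1}{h}\stra{\sigma_i^k}$. In general $\sigma_i^k \neq \sigma_i$. Consequently you cannot apply the inductive hypothesis to the $k$-th premise using (a copy of) $\Sigma_i$, since $\Sigma_i$ types $\N_i$ with $\sigma_i$, not with $\sigma_i^k$. What the paper does---and what is needed---is to invoke Property~\ref{prop:stra} on each $\Sigma_i$ exactly as you did in the $(m)$ case, obtaining component derivations $\Sigma_i^s$ for the distinct elements of $\sigma_i$, and then feeding the appropriate component $\Sigma_i^{s_k}$ into the $k$-th branch before reassembling via $(st)$ and the accumulated renaming sequences. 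Without this decomposition, the induction does not go through for $(st)$.

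A minor point: in the $(w)$ case, when the weakened variable is one of the $\x_i$, you cannot literally ``re-apply the same rule'' to recover $\Delta_i$ in the context; you need Property~\ref{prop:weak} (generalized weakening), as the paper does. This is routine but worth noting.
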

\begin{proof} 

By induction on the shape of $\Pi$.

Let $\Pi$ be
$$
\infer[(Ax)]{\x:\A \der \x:\A}{}
$$
Then $n=1$, $\Sigma_1 \dem \Delta_1 \der \N_1: \A$, and $S(\Sigma_1, \Pi) = \Sigma_1$.

\medskip

Let $\Pi$ end with an application of rule $(w)$ having range $\{\y\}$. If ${ \y \not\in \{ \x_1,...,\x_n\} }$, then the proof follows by induction. Otherwise, let $\y \equiv \x_1$ and let $\Pi$ be
$$
\infer[(w)]{\Gamma, \x_1: \A ,\x_2: \sigma_2,..., \x_n: \sigma_n\der \M: \tau}{\Pi' \dem \Gamma,\x_2: \sigma_2,..., \x_n: \sigma_n \der \M: \tau & \x_1 \not\in \dom{\Gamma} }
$$
so $\Sigma_1 \dem \Delta_1 \der \N_1: \A$ and $\Sigma_i \dem \Delta_i \der \N_i: \sigma_i$ ($2 \leq i \leq n$).
By induction, there is $S(\Sigma_2,...,\Sigma_n, \Pi) \dem \Gamma, \Delta_2,...,\Delta_{n} \der \subi{\M}{\N_2}{\x_2}{\N_n}{\x_n}: \tau $; then the desired proof is obtained from $S(\Sigma_2,...,\Sigma_n, \Pi)$ by Property~\ref{prop:weak}.

\medskip

Let $\Pi$ end with an application of rule $(\lin I)$: then the result follows easily by induction.

\medskip

Let $\Pi$ be
\[ \infer[(\lin E)]{\Gamma_{1},\Gamma_{2} \der \M\R: \B}
{\Pi_{1} \dem \Gamma_{1}\der \M: \tau \lin \B & 
\Pi_{2} \dem \Gamma_{2}\der \R: \tau & \Gamma_{1} \# \Gamma_{2}} \]
where $\Gamma= \Gamma_1,\Gamma_2$.
Since $\Gamma_1 \# \Gamma_{2}$, we can w.l.o.g. consider a partition of $n$ such that 
${ \Gamma_1= \Gamma'_1, \x_1: \sigma_1,...,\x_k: \sigma_{k} }$ and ${ \Gamma_2= \Gamma'_2, \x_{k+1}: \sigma_{k+1},...,\x_{n}: \sigma_{n} }$;
then the result follows by induction hypothesis and by one application of rule $(\lin E)$.

\medskip

Let $\Pi$ end with an application of rule $(m)$ having range $\{ \y \}$. If ${\y \not\in \{\x_1,...,\x_n\} }$, then the proof follows by induction.
Otherwise, let $\y \equiv \x_1$ and let $\Pi$ be
\[ \infer[(m)]{\Gamma, \x_{1}: \sigma_{1}, \x_{2}: \sigma_{2}, ...\ , \x_{n}: \sigma_{n} \der \M: \tau}
{\Pi' \dem \Gamma, \y_{1}:\rho_{1},...\ ,\y_{h}: \rho_{h},  \x_{2}: \sigma_{2}, ...\ , \x_{n}: \sigma_{n}\der \R:\tau} \]
where $\M \equiv \R[\x_{1}/\y_{k}]_{k=1}^{h}$ and $\sigma_{1} = \sm{\cup}{k=1}{h} \stra{\rho_{k}}$.

Let $\sigma_{1} = \str{\mu}{m}$, for some $m \leq h$.
By Property~\ref{prop:stra}, ${ \Sigma_{1} \dem \Delta_{1} \der \N_{1}: \str{\mu}{m} }$ implies there are $\Psi_{s} \dem \Theta_{s} \der \Q: \mu_{s}$ $(1 \leq s \leq m)$ such that $\Sigma_{1}$ is obtained by an application of rule $(st)$ to $(\Psi_{s})_{s=1}^{m}$, followed by a renaming sequence $\delta$.\\
Observe that, for every $k$ ($1 \leq k\leq h$), there is $s_{k}$ such that $\rho_{k}=\mu_{s_{k}}$; moreover $\rho_{k}=\rho_{k'}$ implies $\Psi_{s_{k}} = \Psi_{s_{k'}}$. 
Let $\Psi'_{s_{k}} \dem \Theta'_{s_{k}} \der \Q_{s_{k}}: \mu_{s_{k}}$ be a copy of $\Psi_{s_{k}}$, for $1 \leq k \leq h$, such that ${(\Gamma, \Delta_{2}, ... , \Delta_{n}) \# \Theta'_{s_{1}} \# ... \# \Theta'_{s_{h}} }$.\\
By induction hypothesis there is
$S(\Psi'_{s_{1}}, ... , \Psi'_{s_{h}}, \Sigma_{2}, ... , \Sigma_{n}, \Pi')$ proving ${\Gamma, \Theta'_{s_{1}}, ... , \Theta'_{s_{h}}, \Delta_{2}, ... , \Delta_{n} \der \R[ \Q_{s_{k}} / \y_{k}]_{k=1}^{h} [\N_{i} / \x_{i}]_{i=2}^{n} : \tau}$, from which, by Lemma~\ref{lem:st-m}, we can derive $\Pi''\dem {\Gamma, \sm{\cup}{k=1}{h} \stra{\Theta_{s_{k}}}, \Delta_{2}, ... , \Delta_{n} \der \R[ \Q / \y_{k}]_{k=1}^{h} [\N_{i} / \x_{i}]_{i=2}^{n} : \tau}$ by a renaming sequence. Note that, since $ \sm{\cup}{k=1}{h} \stra{\Theta_{s_{k}}}= \sm{\cup}{s=1}{m} \stra{\Theta_{s}}$, ${\Pi'' \dem \Gamma, \sm{\cup}{s=1}{m} \stra{\Theta_{s}}, \Delta_{2}, ... , \Delta_{n} \der \R[ \Q / \y_{k}]_{k=1}^{h} [\N_{i} / \x_{i}]_{i=2}^{n} : \tau}$.
Finally, by applying renaming sequence $\delta$ to $\Pi''$, we obtain 
$S(\Sigma_{1}, ... , \Sigma_{n}, \Pi)$.

\medskip

Let $\Pi$ be
\[ \infer[(st)]{\Gamma, \x_1: \sigma_{1} ,...,\x_n: \sigma_{n} \der \M: \tau}
{(\Pi_{k} \dem \Gamma_{k}, \x_{1}: \sigma^{k}_{1}, ... , \x_{n}: \sigma^{k}_{n} \der \M: \tau_k)_{1 \leq k \leq h} } \]

where $\Gamma = \sm{\cup}{k=1}{h} \stra{\Gamma_{k}}$, $\tau = \str{\tau}{h}$ and $\sigma_{i} = \sm{\cup}{k=1}{h} \stra{\sigma^{k}_{i}}$.\\
Let $\sigma_{i} = \stra{\rho^{1}_{i}, ... , \rho^{h_{i}}_{i}}$, where $h_{i} \leq h$.
By Property~\ref{prop:stra}, $\Sigma_{i} \dem \Delta_{i} \der \N_{i}: \stra{\rho^{1}_{i}, ... , \rho^{h_{i}}_{i}}$ implies there are $\Sigma^{s}_{i} \dem \Delta^{s}_{i} \der \R_{i}: \rho^{s}_{i}$ $(1 \leq s \leq h_{i})$ such that $\Sigma_{i}$ is obtained by an application of rule $(st)$ to $(\Sigma^{s}_{i})_{s=1}^{h_{i}}$, followed by a renaming sequence $\delta_{i}$ ${(1 \leq i \leq n)}$.\\
Since by hypothesis ${ \Gamma \# \Delta_{1} \# ... \# \Delta_{n} }$, we can safely assume ${ \Pi \# \Sigma_{1} \# ... \# \Sigma_{n} }$.
For $1 \leq i \leq n$ and $1 \leq k \leq h$, there is $s_{k}$ such that $\sigma^{k}_{i} = \rho^{s_{k}}_{i}$; moreover, $\sigma^{k}_{i} = \sigma^{k'}_{i}$ implies $\Sigma^{s_{k}}_{i} = \Sigma^{s_{k'}}_{i}$.\\
By induction there are ${S(\Sigma^{s_{k}}_{1}, ... , \Sigma^{s_{k}}_{n}, \Pi_{k}) \dem \Gamma_{k},  \Delta^{s_{k}}_{1}, ... , \Delta^{s_{k}}_{n} \der \M[\R_{i}/\x_{i}]_{i=1}^{n}}$ for $1 \leq k \leq h$.
Then $S(\Sigma_{1}, ... , \Sigma_{n}, \Pi)$ is
obtained by applying rule $(st)$ to $( S(\Sigma^{s_{k}}_{1}, ... , \Sigma^{s_{k}}_{n}, \Pi_{k}) )_{1 \leq k\leq h}$,
so obtaining ${\Pi' \dem \Gamma, \sm{\cup}{k=1}{h} \stra{\Theta^{s_{k}}_{1}}, ... , \sm{\cup}{k=1}{h} \stra{\Theta^{s_{k}}_{n}} \der  \M[\R_{i}/\x_{i}]_{i=1}^{n}: \tau}$.
Since $\sm{\cup}{k=1}{h} \stra{\Delta^{s_{k}}_{i}} = \sm{\cup}{s=1}{h_{i}} \stra{\Delta^{s}_{i}}$,
${\Pi' \dem \Gamma, \sm{\cup}{s=1}{h_{1}} \stra{\Theta^{s}_{1}}, ... , \sm{\cup}{s=1}{h_{n}} \stra{\Theta^{s}_{n}} \der  \M[\R_{i}/\x_{i}]_{i=1}^{n}: \tau}$.
Note that, since by hypothesis $\sm{\#}{i=1}{n} \Delta_{i}$, we can apply renaming sequences $\delta_{1}, ... , \delta_{n}$ consecutively to $\Pi'$, so obtaining 
$S(\Sigma_{1}, ... , \Sigma_{n}, \Pi)$.

\medskip

The cases of rules $(\forall I)$ and $(\forall E)$ follow directly by induction.
Note that, since the hypothesis $ \Gamma \# \Delta_{1} \# ... \# \Delta_{n} $, it is immediate to see that $S(\Sigma_{1},...,\Sigma_{n},\Pi)$ is clean.
\end{proof}

Finally we can prove subject reduction. The crucial ingredient, as usual, is the property of detour elimination of a type derivation.
\begin{definition}\label{def:detour}
\begin{enumerate}[(i)]\ 
\item\label{def:foralldet} A $\forall$-detour is a derivation ending with an application of rule $(\forall I)$, immediately followed by an application of rule $(\forall E)$; such detour is eliminated by the following rule:
\small
\[
\begin{array}{lcl}
\infer[(\forall E)]{\Gamma \der \M: \B [\A/\tvar]}
{\infer[(\forall I)]{\Gamma \der \M:\forall \tvar. \B}
{\Pi \dem  \Gamma \der \M: \B \quad \tvar \not\in\dom{\Gamma}}
}
&
\mapsto & \Pi[\A/\tvar] \dem \Gamma \der \M:\B[\A/\tvar]
\end{array}
\]
\normalsize
where $\Pi[\A/\tvar]$ denotes the derivation obtained from $\Pi$ by replacing every occurrence of $\tvar$ by $\A$.
\item\label{def:lindet} A $\lin$-detour is a derivation ending with an application of rule $(\lin I)$, immediately followed by an application of rule $(\lin E)$; such detour is eliminated by the following rule:
\small
\[
\begin{array}{lcl}
\infer[(\lin E)]{\Gamma \der (\lambda \x.\M)\N:\A}
{\infer[(\lin I)]{ \Gamma \der \lambda \x.\M:\sigma\lin \A}
{\Pi \dem\Gamma, \x:\sigma \der \M:\A }
& \Sigma \dem \Delta\der \N:\sigma
}
&
\mapsto & S(\Sigma, \Pi)\dem \Gamma,\Delta \der \M[\N/\x]:\A
\end{array}
\]
\normalsize
where $S(\Sigma, \Pi)$ has been defined in Lemma~\ref{lem:subs}.

\end{enumerate}
\end{definition}
Observe that the operation of $\lin$-detour elimination, as defined before, is not correct: indeed, when applied to a subderivation, it can transform a correct derivation into an incorrect one.
For example, consider a derivation ending with an application of rule $(st)$ to $n \geq 1$ subderivations, whose subject contains a $\beta$-redex: in this case, a $\lin$-detour having the same subject appears in $n$ different subderivations, but eliminating only one of such $\lin$-detours would result in an incorrect derivation.\\
Indeed, one $\beta$-reduction can correspond to many detour eliminations; in practice, this happens both when there are applications of quantifier rules in between the introduction and the elimination of the $\lin$, and when there is an application of rule $(st)$ to $n \geq 1$ subderivations whose subject contains the current $\beta$-redex. Then the reduction of such redex is done by first erasing in sequence all $\forall$-detours, followed by the elimination of all $\lin$-detours  \textsl{simultaneously} in every premise of the application of rule $(st)$. 

\begin{lemma}\label{lem:riarrange}
Any sequence of applications of renaming and quantifier rules can be rearranged in such a way that the  applications of the quantifier rules precede the applications of the renaming rules. 
\end{lemma}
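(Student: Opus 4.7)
The plan is to show that every adjacent pair consisting of a renaming rule immediately followed by a quantifier rule can be commuted, and then to apply a bubble-sort style argument: if we define the measure $\mu$ of a renaming/quantifier sequence to be the number of pairs $(r, q)$ such that $r$ is a renaming rule, $q$ is a quantifier rule, and $r$ occurs strictly before $q$, then each local swap strictly decreases $\mu$. Hence iterating swaps terminates in a sequence whose quantifier rules all precede the renaming rules.

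The work is therefore to verify the four possible local commutations; in each case the conclusion of the two-step derivation and the starting judgment coincide, and the side conditions remain satisfied. Consider first $(w)$ followed by $(\forall I)$: the top derivation proves $\Gamma \der \M:\B$, rule $(w)$ introduces $\x:\A$, and rule $(\forall I)$ quantifies $\tvar \not\in \FTV{\Gamma,\x:\A}$. Since $\tvar \not\in \FTV{\Gamma}$, we may apply $(\forall I)$ first obtaining $\Gamma \der \M:\forall\tvar.\B$, then $(w)$ to get $\Gamma,\x:\A \der \M:\forall\tvar.\B$. The symmetric case $(w)$ followed by $(\forall E)$ is immediate since $(\forall E)$ carries no side condition on the context.

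For $(m)$ followed by $(\forall I)$, suppose the premise derives $\Gamma,\x_{1}:\sigma_{1},\dots,\x_{n}:\sigma_{n}\der \M:\B$, the $(m)$ step contracts these into $\x:\sigma$ with $\sigma=\sm{\cup}{i=1}{n}\stra{\sigma_{i}}$, and $(\forall I)$ requires $\tvar \not\in \FTV{\Gamma,\x:\sigma}$. Because $\FTV{\sigma}=\bigcup_{i}\FTV{\sigma_{i}}$, this condition is equivalent to $\tvar \not\in \FTV{\Gamma,\x_{1}:\sigma_{1},\dots,\x_{n}:\sigma_{n}}$, so we can first apply $(\forall I)$ to the premise and then perform the contraction $(m)$ on the quantified judgment; the multiplexor acts only on term variables, so it commutes with the type modification. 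The case $(m)$ followed by $(\forall E)$ is again immediate: $(\forall E)$ instantiates the type universally, which is independent of the contraction of term variables, and no side condition needs to be re-checked.

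The only delicate point is bookkeeping of the side condition of $(\forall I)$ after the swap: one must observe, as above, that the free-type-variable content of a stratified type equals the union of the free-type-variable contents of its components, so moving $(\forall I)$ past $(m)$ preserves the freshness hypothesis. Once all four swaps are verified, termination via the measure $\mu$ yields the desired rearrangement, so I expect no real obstacle beyond this verification.
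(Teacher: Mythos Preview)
Your proposal is correct and follows essentially the same approach as the paper: both reduce the lemma to verifying that an adjacent pair (renaming rule, quantifier rule) can be swapped, and then appeal to an induction/bubble-sort argument. The paper's proof is terser (it dispatches the $(m)$ cases as ``obvious'' and only remarks on the $(w)$ case), whereas you spell out all four local commutations and the termination measure explicitly; the content is the same.
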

\begin{proof}
Observe that quantifier rules deal with the type, while renaming rules deal with the subject and the variables in the context. Let $(R)$ and $(R')$ be respectively a renaming and a quantifier rule: it is sufficient to prove that the sequence of applications of rules $(R)(R')$ can be replaced by the sequence of applications $(R')(R)$.
\\If $(R)=(m)$ the proof is obvious. If $(R)=(w)$, since the application of $(w)$ may introduce new type variables, the constraints on the application of $(\forall I)$ rule are obviously preserved. 
\end{proof}

\begin{theorem}[Subject Reduction]
$\Gamma  \der \R :\sigma$ and $\R \redbeta \Q$ implies $\Gamma \der \Q: \sigma$.
\end{theorem}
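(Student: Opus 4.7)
The plan is to argue by induction on the derivation $\Pi \dem \Gamma \der \R : \sigma$, with a case analysis on the last rule, combined with induction on the depth of the contracted redex inside $\R$. If the reduction $\R \redbeta \Q$ takes place strictly inside a proper subterm typed by a premise of the last rule of $\Pi$, the inductive hypothesis gives us a replacement derivation for that premise, and we simply reapply the last rule. For the rules $(w)$, $(m)$, $(\forall I)$, $(\forall E)$, and $(\lin I)$/$(\lin E)$ applied to a position not containing the redex, this is immediate. The case of $(st)$ is slightly more delicate: since all premises share the same subject, the contracted redex occurs in each of them, and one uses the inductive hypothesis in every premise simultaneously before reapplying $(st)$.

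The real content of the theorem is the base case, where the contracted redex sits at the very head of the subject being typed by the subderivation currently under examination, i.e. $\R \equiv (\lam\x.\M)\N$ and $\Q \equiv \M[\N/\x]$ (up to renaming of copies introduced by the clean-derivation discipline). Here I would invoke the Generation Lemma~\ref{lem:gen}(4) to decompose $\Pi$ as an application of $(\lin E)$ on some $(\lam\x.\M')\N'$, of which $(\lam\x.\M)\N$ is an instance, followed by a renaming-plus-quantifier sequence $\delta$. Then I apply the Generation Lemma~\ref{lem:gen}(3) to the left premise to decompose the derivation of $\lam\x.\M'$ as a $(\lin I)$ step followed by a further renaming-plus-quantifier sequence $\delta'$. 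Lemma~\ref{lem:riarrange} lets me rearrange $\delta'$ so that all quantifier rules occur above the renaming rules.

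Once this normal form is reached, the $\forall$-detours introduced by matching quantifier rules on the left premise against their uses inside the right context can be eliminated one at a time, as in Definition~\ref{def:detour}(\ref{def:foralldet}), producing a clean $\lin$-detour at the top of the derivation. The $\lin$-detour is then eliminated by Definition~\ref{def:detour}(\ref{def:lindet}), which is where the Substitution Lemma~\ref{lem:subs} does all the work: it produces a clean derivation of $\Gamma,\Delta \der \M[\N/\x] : \A$ from derivations of $\Gamma,\x{:}\sigma \der \M : \A$ and $\Delta \der \N : \sigma$. Reapplying the outer renaming sequence and Lemma~\ref{rem:instance} (to move from the chosen instance $(\lam\x.\M')\N'$ back to the original instance of $\M[\N/\x]$ that is the actual reduct $\Q$) concludes the base case.

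The main obstacle, as the authors already flag in their commentary after Definition~\ref{def:detour}, is that a single $\beta$-step in $\R$ may correspond to several detour eliminations inside $\Pi$: copies of the redex produced by an $(st)$ rule above the redex yield parallel $\lin$-detours that must be contracted at once, and intervening $\forall$-rules between the $(\lin I)$ and its matching $(\lin E)$ must first be pushed out of the way. My induction structure handles the $(st)$ duplication by doing the inductive step on all premises simultaneously (never touching just one copy), while Lemma~\ref{lem:riarrange} and the separate $\forall$-detour elimination step deal with the quantifier interleaving before the Substitution Lemma is invoked. Everything else—weakening contexts to match the union of the $\Delta_i$'s produced by the Substitution Lemma, and preserving cleanness—follows from Property~\ref{prop:weak} and the final clause of Lemma~\ref{lem:subs}.
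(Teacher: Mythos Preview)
Your proposal is correct and follows essentially the same strategy as the paper. The paper organizes the argument as an outer induction on the term context $\C[.]$ together with an inner induction on the type $\sigma$ in the base case (invoking Property~\ref{prop:stra} to descend through $(st)$), whereas you induct directly on the derivation $\Pi$; both routes converge on the same core step—rearranging the renaming/quantifier sequence between $(\lin I)$ and $(\lin E)$ via Lemma~\ref{lem:riarrange}, eliminating the $\forall$-detours, and invoking the Substitution Lemma—and both handle the $(st)$ duplication by treating all premises simultaneously. One small point the paper makes explicit and you leave implicit: after rearranging $\delta'$ so that quantifier rules come first, the remaining renaming rules must be commuted \emph{below} the $(\lin E)$ before the $\lin$-detour is exposed; this is what justifies your phrase ``producing a clean $\lin$-detour at the top,'' and it is the step that makes the subsequent application of Lemma~\ref{lem:subs} type-correct.
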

\label{th:subj}
\begin{proof} 
If $\R \redbeta \Q$, then there is a term context $\C[.]$ such that ${\R\equiv \C[(\lam \x. \M) \N]}$ and ${\Q \equiv \C[\M [ \N / \x ] ]}$. The proof is by induction on $\C[.]$.\\
Let $\C[.]\equiv [.]$, and $\Gamma\der (\lambda \x. \M) \N:\sigma$. We proceed by induction on $\sigma$.\\
Let $\sigma$ be a linear type $\A$. By Lemma~\ref{lem:gen}, we can assume that $\Pi$ has the following shape:
\[ \infer=[\delta_{2}]{\Theta, \Delta \der (\lambda \x. \M) \N: \A }
{\infer[(\lin E)]{ \Theta', \Delta' \der (\lambda \x. \M') \N': \A'}{\infer=[\delta_{1}]{\Theta' \der \lambda \x. \M' : \sigma' \lin \A'}{\infer[(\lin I)]{\Theta'' \der \lambda \x. \M'' : \sigma'' \lin \A''}{\Pi' \dem \Theta'', \x: \sigma'' \der \M'' : \A''}} & \Sigma \dem \Delta' \der \N' : \sigma' & \Theta' \# \Delta'}} \]
where $\M$ is an instance of $\M'$ (which, in turn, is an instance of $\M''$) and $\N$ is an instance of $\N'$, $\Gamma = \Theta, \Delta$ and $\delta_{1}, \delta_{2}$ are renaming and quantifier sequences. \\
By Lemma~\ref{lem:riarrange}, $\delta_1$ can be replaced by $\delta',\delta''$, where $\delta'$ contains only quantifier rules and $\delta''$ contains only renaming rules. Moreover by the assumption that all derivations are clean, $\Theta'' \# \Delta'$.
Then we can rewrite $\Pi$ in the following way:
\[ \infer=[\delta_{2}]{\Theta, \Delta \der (\lambda \x. \M) \N: \A }
{\infer=[\delta'']{\Theta', \Delta' \der (\lambda \x. \M') \N': \A'}{\infer[(\lin E)]{ \Theta'', \Delta' \der (\lambda \x. \M'') \N': \A'}{\infer=[\delta']{\Theta'' \der \lambda \x. \M'' : \sigma' \lin \A'}{\infer[(\lin I)]{\Theta'' \der \lambda \x. \M'' : \sigma'' \lin \A''}{\Pi' \dem \Theta'', \x: \sigma'' \der \M'' : \A''}} & \Sigma \dem \Delta' \der \N' : \sigma' & \Theta'' \# \Delta'}}} \]

Let us assume the non-trivial case in which the sequence $\delta'$ is not empty.
Then, since $\sigma' \lin \A'$ is a $\lin$ type, sequence $\delta'$ must end with one application of $(\forall E)$ rule; however, since $\sigma'' \lin \A''$ is also a $\lin$ type, sequence $\delta'$ must contain a matching application of $(\forall I)$ rule: therefore, $\delta'$ contains a $\forall$-detour, which can be eliminated as shown in Definition~\ref{def:detour}.\ref{def:foralldet}; then sequence $\delta_{1}$ decreases by two applications of quantifier rules. By erasing all $\forall$-detours in sequence $\delta'$,
we obtain the following derivation:

\[ \infer=[\delta_{2}]{\Theta, \Delta \der (\lambda \x. \M) \N: \A }
{\infer=[\delta'']{\Theta', \Delta' \der (\lambda \x. \M') \N': \A'}{\infer[(\lin E)]{ \Theta'', \Delta' \der (\lambda \x. \M'') \N': \A'}{\infer[(\lin I)]{\Theta'' \der \lambda \x. \M'' : \sigma' \lin \A'}{\Pi'' \dem \Theta'', \x: \sigma' \der \M'' : \A'} & \Sigma \dem \Delta' \der \N' : \sigma' & \Theta' \# \Delta'}}} \]

Finally, by applying Lemma~\ref{lem:subs} and substituting $\Sigma$ in $\Pi''$ as in Definition~\ref{def:detour}.\ref{def:lindet}, the resulting derivation is

\[ \infer=[\delta_{2}]{\Theta, \Delta \der \M[\N / \x]: \A }{\infer[\delta'']{\Theta', \Delta' \der \M'[\N' / \x]: \A'}{S(\Sigma, \Pi'') \dem \Theta'', \Delta' \der \M'' [ \N' / \x]: \A'}} \]

The case $\delta'$ empty is easier.
Notice that, since the property to be clean is preserved by substitution, the resulting proof is clean.
\medskip

Now let $\sigma$ be a stratified type $\{\sigma_1,...,\sigma_n\}$. By Property~\ref{prop:stra} there are ${\Pi_i \dem \Gamma_i \der (\lambda \x. \M') \N': \sigma_i}$ ($1\leq i \leq n$) and a renaming sequence $\delta$ such that $\Pi$ has the following shape:
\[ \infer=[\delta]{\Gamma \der (\lambda \x. \M) \N:\sigma}{\infer[(st)]{\sm{\cup}{i=1}{n} \stra{\Gamma_{i}} \der (\lambda \x. \M') \N':\sigma }{( \Pi_i\dem\Gamma_i \der (\lambda \x. \M') \N': \sigma_i)_{1\leq i \leq n} }} \]

By inductive hypothesis there are
$\Phi_{i} \dem \Gamma_i \der \M'[\N'/\x]: \sigma_i$, for $1 \leq i \leq n$; then the result follows by applying rule $(st)$ to $\Phi_{1}, ... , \Phi_{n}$, followed by sequence $\delta$.

\medskip

The induction case for $\C[.]\not\equiv[.]$ is straightforward.
\end{proof}
\section{Strong normalization}
\label{sec:sn}
The type assignment system \STR\ characterizes strong normalization, i.e., the following theorem holds.
\begin{theorem}
$\Gamma \der \M: \sigma$, for some $\Gamma$ and $\sigma$, if and only if $\M$ is strongly normalizing.
\end{theorem}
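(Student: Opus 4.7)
The plan is to prove the two implications separately, using different techniques for each. The forward direction (typability implies strong normalization) calls for a reducibility candidates argument in the style of Tait--Girard, suitably adapted to handle the stratified/intersection-flavoured rules $(st)$ and $(m)$. The reverse direction (strong normalization implies typability) is typically established by typing normal forms directly and lifting along reductions via a subject expansion lemma; this is the pattern familiar from intersection type systems, which the authors explicitly take as inspiration for \STR.

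For the forward direction, I would define an interpretation $\llbracket \sigma \rrbracket$ of each type as a saturated set of $\lambda$-terms: atoms $\tvar$ interpreted by reducibility candidates, $\llbracket \sigma \lin \A \rrbracket = \{\M \mid \forall \N \in \llbracket \sigma \rrbracket, \M\N \in \llbracket \A \rrbracket\}$, $\llbracket \forall \tvar.\A \rrbracket = \bigcap_{X \in \mathrm{CR}} \llbracket \A \rrbracket[X/\tvar]$ (Girard's second-order trick), and crucially $\llbracket \{\sigma_{1},\ldots,\sigma_{n}\} \rrbracket = \bigcap_{i} \llbracket \sigma_{i} \rrbracket$, so that stratification behaves exactly like intersection. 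The key soundness statement is: if $\Pi \dem \x_{1}:\sigma_{1},\ldots,\x_{k}:\sigma_{k} \der \M:\tau$ and $\N_{i} \in \llbracket \sigma_{i} \rrbracket$ for each $i$, then $\M[\N_{i}/\x_{i}]_{i=1}^{k} \in \llbracket \tau \rrbracket$. The proof goes by induction on $\Pi$; rules $(Ax), (\lin I), (\lin E), (\forall I), (\forall E)$ are standard. Rule $(st)$ is handled by the definition as intersection: if every premise substitutes to a term in $\llbracket \sigma_{i} \rrbracket$, the subject lies in $\bigcap_{i} \llbracket \sigma_{i} \rrbracket$. Rule $(m)$ is the one point that needs care: to substitute a single $\N \in \llbracket \bigcup_{i} \{\sigma_{i}\} \rrbracket$ for the merged variable $\x$, one observes that membership in the intersection gives simultaneously $\N \in \llbracket \sigma_{i} \rrbracket$ for every $i$, which is exactly what the $(m)$ premise requires when we regard it as substituting the same $\N$ for each $\x_{i}$. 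Once soundness is proved, we conclude by applying it to the identity substitution (each $\x_{i}$ is in $\llbracket \sigma_{i} \rrbracket$ because candidates contain variables) and observing that every $\llbracket \tau \rrbracket \subseteq \SN$.

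For the reverse direction, I would first show by structural induction that every normal form is typable: variables use $(Ax)$ (possibly followed by $(w)$ to absorb unused context), abstractions use $(\lin I)$ on the inductively obtained typing of the body, and neutral applications $\x\M_{1}\cdots\M_{k}$ are typed by choosing an appropriate $\forall$-type for $\x$ and applying $(\lin E)$ repeatedly, with the help of $(\forall E)$ to instantiate arrows from the types of the $\M_{i}$. Then I would establish a subject expansion lemma: if $\M \redbeta \M'$ and $\Gamma \der \M' : \sigma$, then there exist $\Gamma''$ and $\sigma'$ with $\Gamma'' \der \M : \sigma'$ (possibly a different typing). The lemma is proved by induction on the context enclosing the redex; at the redex $(\lam \x.\R)\Q$ with $\R[\Q/\x]$ typed, we must reconstruct a typing of the redex. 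Here $(st)$ plays the crucial role: if $\Q$ occurs $n$ times in $\R[\Q/\x]$ with types $\sigma_{1},\ldots,\sigma_{n}$, we apply $(st)$ to glue $n$ typings of $\Q$ into one of type $\{\sigma_{1},\ldots,\sigma_{n}\}$, then $(m)$ to contract the occurrences of $\x$, and finally $(\lin I), (\lin E)$ to reconstruct the redex. If $\Q$ does not occur in $\R[\Q/\x]$, we use the derived general weakening (Property~\ref{prop:weak}). Finally, by induction on the length of the longest reduction sequence out of $\M$ (which is finite by $\SN$), combined with the base case that normal forms are typable and the inductive step that uses subject expansion, every strongly normalizing term is typable.

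The main obstacle I expect is the subject expansion step, specifically reconstructing a coherent derivation around a collapsed redex: one has to recover, from the typing of $\M'[\N'/\x]$, exactly how many copies of $\N'$ there were and with which types, and reassemble these using $(st)$ and $(m)$ into a single typing of the abstraction applied to a single copy of $\N'$. The non-associative, set-like stratification of \STR\ means that bookkeeping over the linear components via $\bar{\sigma}$ (and Corollary~\ref{cor:stra}) will be needed to identify the appropriate stratified type to assign to $\N'$, and to ensure the resulting derivation is clean. On the reducibility side, the secondary difficulty is verifying that the interpretation of $\forall \tvar.\A$ is stable under the particular form of $(\forall E)$ that instantiates with linear types only — but this constraint is actually harmless for saturation, and the induction for the $(m)$ rule goes through because intersections of candidates are themselves candidates.
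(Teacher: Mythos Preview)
Your forward direction is correct but takes a genuinely different route from the paper. The paper does \emph{not} use reducibility candidates; instead it develops quantitative measures (rank, degree, weight of a derivation) and proves a Weighted Subject Reduction lemma showing that the weight strictly decreases at every $\beta$-step. This yields not only strong normalization but an explicit polynomial bound $|\M|^{\D{\Pi}+1}$ on the length of any reduction, which the paper later reuses for the \FPTIME\ soundness theorem. Your Tait--Girard argument is perfectly sound for the bare SN statement (the handling of $(st)$ as intersection and of $(m)$ via membership in each component is exactly right), but it gives no quantitative information, so it would not plug into Section~\ref{sec:poly}.

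Your reverse direction is close in spirit but differs in decomposition, and it contains one genuine gap. The paper does not type normal forms and then lift along an arbitrary reduction; it uses the inductive characterization of $\SN$ by the three rules and shows that each rule preserves typability, relying on an inverse substitution lemma (Lemma~\ref{lem:invsub}) and a restricted expansion lemma (Lemma~\ref{lem:subexp}) that only treats the head redex. The crucial point is rule (3), whose premises are $\M[\N/\x]\M_1\cdots\M_n \in \SN$ \emph{and} $\N \in \SN$: the second premise is what guarantees that $\N$ is typable when $\x \notin \FV{\M}$.

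This is precisely where your sketch slips. You write that in the vacuous case ``we use the derived general weakening (Property~\ref{prop:weak})'', but weakening only enlarges the context of a typing of $\R$; it does not produce a typing of $\Q$, and you still need $\Delta \der \Q:\tau$ to apply $(\lin E)$ and form $(\lambda \x.\R)\Q$. Subject expansion in the bald form ``$\M \redbeta \M'$ and $\M'$ typable implies $\M$ typable'' is in fact false in \STR\ (take $(\lambda \x.\y)\Omega \redbeta \y$). The fix is easy but must be made explicit: since the ambient term is strongly normalizing, so is the argument $\Q$, and its longest reduction sequence is strictly shorter than that of the whole term (any reduction inside $\Q$ can be followed by contracting the enclosing redex), so your outer induction on reduction length applies to $\Q$ and supplies the missing typing. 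Once you state the expansion step under the hypothesis that the redex's argument is already typable --- exactly the content of the paper's Lemma~\ref{lem:subexp} --- your argument goes through.
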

This theorem is the consequence of Theorem~\ref{cor:str1} and Theorem~\ref{cor:str2}, whose proofs are supplied respectively in the next subsections.

\subsection{Typability vs. strong normalization} \label{subsec:tsn}
Given a typable term $\M$, the stratified structure of types and derivations allows to give a bound on the number of $\beta$-reduction steps necessary to reach the normal form of $\M$, which depends both on the size of $\M$ and on the degree of the derivation $\Pi$ (i.e., the nesting of applications of rule $(st)$ in $\Pi$); namely, 
the number of reduction steps is bounded by a polynomial in the size of the term, whose degree depends on the degree of the underlying type derivation. As a consequence, all typable terms in \STR\ are strongly normalizing.

We begin with a few necessary definitions of measure.
\begin{definition}
\label{def:polydef}
\begin{enumerate}\

\item The \textbf{size} $|\M|$ of a term $\M$ is defined inductively as follows:
\[  |\x| = 1 \qquad |\lambda \x. \M| = |\M| + 1 \qquad |\M\N| = |\M| + |\N| + 1 \]

\item The \textbf{rank} of an application of rule $(m)$ with domain $\mathcal{X}$
is the cardinality of the set $\mathcal{X} \cap \FV{\M}$, i.e. the number of variables in the domain of the rule appearing free in $\M$.
Let $r$ be the maximum rank of applications of rule $(m)$ in $\Pi$; then the rank of $\Pi$, denoted by $\rk{\Pi}$, is equal to  $\max \{ 1 , r \}$.

\item The \textbf{degree} of a proof $\Pi$, denoted by $\D{\Pi}$, is the maximal nesting of applications of rule $(st)$ in $\Pi$, i.e. the maximal number of applications of rule $(st)$ in any path connecting the conclusion with an axiom of $\Pi$.

\item Let $r$ be a positive number; the \textbf{weight} $\W{\Pi}{r}$ of $\Pi$ with respect to $r$  is defined inductively as follows:
\begin{itemize}
\item if $\Pi$ ends with an application of rule $(Ax)$, then $\W{\Pi}{r} = 1$;
\item if $\Pi$ ends with an application of rule $(\lin I)$ with premise $\Pi'$, then ${\W{\Pi}{r} = \W{\Pi'}{r} + 1}$;
\item if $\Pi$ ends with an application of rule $(\lin E)$ with premises $\Pi_1$ and $\Pi_2$, then $\W{\Pi}{r} = \W{\Pi_1}{r} + \W{\Pi_2}{r} + 1$;
\item if $\Pi$ ends with an application of rule $(st)$ with premises $(\Pi_{i})_{i=1}^{n}$, then ${\W{\Pi}{r} = r \cdot \max_{i=1}^n \W{\Pi_i}{r}}$;
\item if $\Pi$ ends with an application of a renaming or quantifier rule with premise $\Pi'$, then ${\W{\Pi}{r} = \W{\Pi'}{r}}$.
\end{itemize}

\end{enumerate}
\end{definition}

These measures are related to each other as shown by the following lemma:

\begin{lemma}
\label{lem:polylem}
Let $\Pi \dem \Gamma \der \M: \sigma$. Then:
\begin{enumerate}[(i)]
\item $\rk{\Pi} \leq |\M| $
\item $\W{\Pi}{r} \leq r^{\D{\Pi}} \cdot \W{\Pi}{1}$
\item $\W{\Pi}{1} = |\M|$
\end{enumerate}
\end{lemma}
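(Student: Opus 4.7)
The plan is to prove all three items by straightforward induction on the structure of $\Pi$. The three measures $\rk{\Pi}$, $\W{\Pi}{r}$, and $|\M|$ are each defined compositionally, so the induction cases align naturally with the rules of $\STR$. I would treat the three items independently, in the order (iii), (ii), (i), since (iii) establishes the basic bridge between weight and term size, (ii) amortises the $(st)$ factor relative to $r=1$, and (i) refers only to the subject at the conclusion and the shape of $(m)$ rules inside $\Pi$.

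For (iii), the constructive clauses for $|\M|$ match exactly the corresponding clauses in the definition of $\W{\Pi}{1}$: the $(Ax)$, $(\lin I)$ and $(\lin E)$ cases are immediate. The rule $(m)$ renames variables to variables, which preserves size, so the subject of the conclusion has the same size as the subject of the premise, and $\W{\Pi}{1}$ is unchanged; the rules $(w)$, $(\forall I)$ and $(\forall E)$ leave both the subject and the weight untouched. The one case to verify is $(st)$ with $r=1$: its premises share the same subject $\M$, so $\W{\Pi}{1} = \max_i \W{\Pi_i}{1} = \max_i |\M| = |\M|$.

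For (ii), the observation is that the only rule that produces a multiplicative factor depending on $r$ is $(st)$, and that factor is exactly $r^{1}$ compared with $1^{1}$. Hence the induction goes through as follows: in the $(st)$ case, using $\D{\Pi} = 1 + \max_i \D{\Pi_i}$ and $\W{\Pi}{1} = \max_i \W{\Pi_i}{1}$, the inductive hypothesis gives
\[
\W{\Pi}{r} \;=\; r\cdot \max_i \W{\Pi_i}{r} \;\leq\; r \cdot \max_i\bigl(r^{\D{\Pi_i}}\W{\Pi_i}{1}\bigr) \;\leq\; r^{\D{\Pi}}\cdot \W{\Pi}{1}.
\]
For the non-$(st)$ rules, $\D{\Pi}$ does not grow, and the additive $+1$ contributions coming from $(\lin I)$ and $(\lin E)$ are absorbed using $r^{\D{\Pi}}\geq 1$ (since $r \geq 1$), while renaming and quantifier rules preserve both sides.

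For (i), I induct on $\Pi$, maintaining the invariant that every application of $(m)$ occurring inside $\Pi$ has rank at most $|\M|$, where $\M$ is the subject of the conclusion. The only genuinely interesting step is the case where $\Pi$ itself ends with an application of $(m)$ with domain $\{\x_1,\dots,\x_n\}$ and premise of subject $\M'$: the rank of this rule is $|\{\x_1,\dots,\x_n\}\cap \FV{\M'}|\leq |\FV{\M'}| \leq |\M'|$, and the conclusion's subject $\M = \M'[\x/\x_1,\dots,\x/\x_n]$ has the same size as $\M'$ because variable-for-variable substitution preserves size; inner $(m)$ applications are handled by the inductive hypothesis, which also yields a bound by $|\M'|=|\M|$. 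For the other rules the subject either stays the same size ($(w), (st), (\forall I), (\forall E)$) or strictly grows ($(\lin I), (\lin E)$), so the inductive bound transfers. No step is a real obstacle; the only thing to be careful about is remembering that in the $(m)$-rank definition one intersects with $\FV{\M}$ of the \emph{inner} subject, and that the final $\max\{1,r\}$ in the definition of $\rk{\Pi}$ is absorbed because $|\M|\geq 1$ for every term $\M$.
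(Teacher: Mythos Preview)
Your proposal is correct and follows exactly the approach the paper indicates: the paper's proof is simply ``All three points are easily proven by induction on $\Pi$,'' and your sketch spells out precisely that induction, handling each rule case in the natural way. The only minor remark is that part (ii) tacitly requires $r\geq 1$ (which is how the weight is used throughout the paper), and you correctly rely on this when absorbing the additive $+1$ contributions.
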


\begin{proof} All three points are easily proven by induction on $\Pi$.
\end{proof}

\begin{remark}\label{rem:sizeofcopy}
Note that $\Pi'$ being a copy of $\Pi$ implies $\W{\Pi'}{r} = \W{\Pi}{r}$, for every $r \geq 1$ since they have the same structure.
\end{remark}

Now we can state the weighted version of Lemma~\ref{lem:subs}:

\begin{lemma}[Weighted substitution]
\label{lem:wsubs}
Let ${\Pi \dem \Gamma, \x_1: \sigma_1,..., \x_n: \sigma_n\der \M: \tau}$ and ${\Sigma_i \dem \Delta_i \der \N_i: \sigma_i}$, for $1\leq i \leq n$, such that ${\Gamma \# \Delta_{1} \# ... \# \Delta_{n}}$
and ${\{ \x_{1}, ... , \x_{n} \} \cap \FV{\M} = \{ \x_{i_{1}}, ... , \x_{i_{p}} \}}$;
then $${\W{S(\Sigma_{1}, ... , \Sigma_{n}, \Pi)}{r} \leq \W{\Pi}{r} + \sm{\sum}{s=1}{p} \W{\Sigma_{i_{s}}}{r}}$$ for any ${r \geq \max \{ \rk{\Sigma_{1}}, ... , \rk{\Sigma_{n}}, \rk{\Pi} \}}$.
\end{lemma}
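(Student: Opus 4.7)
The plan is to proceed by induction on $\Pi$, mirroring the structure of the construction of $S(\Sigma_{1},\ldots,\Sigma_{n},\Pi)$ in Lemma~\ref{lem:subs}. In each case the bound follows by keeping careful track of the weight contribution of each subderivation of $\Pi$ together with the weights of the $\Sigma_{i}$'s that are actually substituted. Note that Remark~\ref{rem:sizeofcopy} lets me replace any $\Sigma_{i}$ by a copy without changing the weight, and that renaming and quantifier rules leave the weight unchanged.

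The straightforward cases are the axiom (where $S = \Sigma_{1}$ trivially), $(\lin I)$, $(\lin E)$, and $(\forall I)$, $(\forall E)$, and $(w)$. In the $(\lin E)$ case I split $\{\x_{1},\ldots,\x_{n}\}$ according to the disjoint contexts $\Gamma_{1},\Gamma_{2}$ and apply the inductive hypothesis to the two premises separately; since $\FV{\M_{1}\R} = \FV{\M_{1}} \uplus \FV{\R}$ by cleanness, the two partial sums add up to the desired global sum. The $(w)$ case in which the weakened variable coincides with some $\x_{i}$ is where the restriction of the sum to free variables becomes essential: the corresponding $\Sigma_{i}$ is introduced via Property~\ref{prop:weak} by $(w)$ and $(m)$ rules only, which do not contribute to the weight.

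The two genuinely delicate cases are $(m)$ and $(st)$; both exploit the hypothesis $r \geq \max\{\rk{\Pi},\rk{\Sigma_{1}},\ldots,\rk{\Sigma_{n}}\}$. In the $(m)$-case, assuming the contracted variable coincides with $\x_{1}$, I use Property~\ref{prop:stra} to write $\Sigma_{1}$ as a renaming sequence $\delta$ on top of an $(st)$ over premises $\Psi_{1},\ldots,\Psi_{m}$, and for each of the $h$ occurrences to be replaced I take a copy $\Psi'_{s_{k}}$ of the appropriate $\Psi_{s_{k}}$. The weight of $\Sigma_{1}$ equals $r\cdot\max_{s}\W{\Psi_{s}}{r}$. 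Only the $\Psi'_{s_{k}}$ for which $\y_{k} \in \FV{\R}$ are \emph{really} substituted, and their number is exactly the rank of the $(m)$-rule, hence at most $\rk{\Pi}\leq r$. The inductive hypothesis applied to $\Pi'$ therefore yields a sum of at most $\rk{\Pi}$ copies of $\max_{s}\W{\Psi_{s}}{r}$, which is $\leq r\cdot\max_{s}\W{\Psi_{s}}{r} = \W{\Sigma_{1}}{r}$. The concluding applications of Lemma~\ref{lem:st-m} and of $\delta$ do not affect the weight.

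In the $(st)$-case, $S(\Sigma_{1},\ldots,\Sigma_{n},\Pi)$ is obtained by applying $(st)$ to the derivations $S(\Sigma^{s_{k}}_{1},\ldots,\Sigma^{s_{k}}_{n},\Pi_{k})$ for $1\leq k\leq h$, followed by weight-invariant renaming sequences. By definition $\W{S(\Sigma_{1},\ldots,\Sigma_{n},\Pi)}{r} = r\cdot\max_{k}\W{S(\Sigma^{s_{k}}_{1},\ldots,\Sigma^{s_{k}}_{n},\Pi_{k})}{r}$, and by the inductive hypothesis each inner weight is bounded by $\W{\Pi_{k}}{r}+\sum_{s}\W{\Sigma^{s_{k}}_{s}}{r}$. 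I then distribute the outer $\max$, obtaining $r\cdot\max_{k}\W{\Pi_{k}}{r} = \W{\Pi}{r}$ on the one side, and on the other side $r\cdot\max_{k}\sum_{s}\W{\Sigma^{s_{k}}_{s}}{r} \leq \sum_{s}r\cdot\max_{t}\W{\Sigma^{t}_{s}}{r} = \sum_{s}\W{\Sigma_{s}}{r}$, where the last equality uses once more that $\Sigma_{s}$ is a renaming sequence on top of an $(st)$ over its $\Sigma^{t}_{s}$'s. Restricting the sum to indices $i_{s}$ with $\x_{i_{s}}\in\FV{\M}$ (as in the $(w)$ case) gives the stated bound. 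The main obstacle lies precisely in these last two cases: one must see that the factor $r$ that appears in front of $\max$ in the weight of an $(st)$ rule is exactly what is needed to absorb, respectively, the $\rk{\Pi}$ many copies produced by the $(m)$ case, and the passage from $\max_{t}\W{\Sigma^{t}_{s}}{r}$ to $\W{\Sigma_{s}}{r}$ in the $(st)$ case.
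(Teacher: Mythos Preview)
Your proposal is correct and follows essentially the same route as the paper: induction on $\Pi$ mirroring the construction of $S(\Sigma_{1},\ldots,\Sigma_{n},\Pi)$ in Lemma~\ref{lem:subs}, with the $(m)$ and $(st)$ cases identified as the crucial ones and handled exactly as the paper does (bounding the number of effectively substituted copies by the rank in the $(m)$ case, and distributing $r\cdot\max$ over the sum in the $(st)$ case). The only cosmetic difference is notation; your use of $s$ both as a summation index and inside $s_{k}$ in the $(st)$ case is a bit unfortunate but the argument is sound.
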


\begin{proof} By induction on $\Pi$.  In order to save space and keep the proof simple, we use the same notation as in 
Lemma~\ref{lem:subs}. 

In case $\Pi$ end with an application of rule $(Ax)$, $\W{S(\Sigma_{1}, \Pi)}{r} = \W{\Sigma_{1}}{r}$ and the proof is obvious.

\medskip

Let $\Pi$ end with an application of rule $(w)$, having range $\{ \y \}$, to $\Pi'$. If $\y \not\in \{ \x_{1}, ... , \x_{n} \}$, then the proof follows by induction. Otherwise, let $\y \equiv \x_{1}$, so $\x_{1} \not\in \{ \x_{i_{1}}, ... , \x_{i_{p}} \}$. By inductive hypothesis ${\W{S(\Sigma_{2}, ... , \Sigma_{n}, \Pi')}{r} \leq \W{\Pi'}{r} + \sm{\sum}{s=1}{p} \W{\Sigma_{i_{s}}}{r}}$; then
\small
\begin{align*}
\W{S(\Sigma_{1}, ... , \Sigma_{n}, \Pi)}{r} = \W{S(\Sigma_{2}, ... , \Sigma_{n}, \Pi')}{r} & \leq \W{\Pi'}{r} + \sm{\sum}{s=1}{p} \W{\Sigma_{i_{s}}}{r}\\
\ & = \W{\Pi}{r} + \sm{\sum}{s=1}{p} \W{\Sigma_{i_{s}}}{r} \\
\end{align*}
\normalsize

\medskip

Let $\Pi$ end with an application of rule $(\lin I)$:
then the result follows easily by induction.

\medskip

Let $\Pi$ end with an application of rule $(\lin E)$ to $\Pi_{1}$ and $\Pi_{2}$; moreover,
let ${ \{\x_{1}, ... , \x_{k} \} \cap \FV{\M} = \{ \x_{i_{1}}, ... , \x_{i_{q}} \} }$ and ${ \{\x_{k+1}, ... , \x_{n} \} \cap \FV{\R} = \{ \x_{i_{q+1}}, ... , \x_{i_{p}} \} }$. By inductive hypothesis $\W{S(\Sigma_{1}, ... , \Sigma_{k}, \Pi_{1})}{r} \leq \W{\Pi_{1}}{r} + \sm{\sum}{s=1}{q} \W{\Sigma_{i_{s}}}{r}$ and $\W{S(\Sigma_{k+1}, ... , \Sigma_{n}, \Pi_{2})}{r} \leq \W{\Pi_{2}}{r} + \sm{\sum}{s=q+1}{p} \W{\Sigma_{i_{s}}}{r}$; then
\small
\begin{align*}
\W{S(\Sigma_{1}, ... , \Sigma_{n}, \Pi)}{r} & = \W{S(\Sigma_{1}, ... , \Sigma_{k}, \Pi_{1})}{r} + \W{S(\Sigma_{k+1}, ... , \Sigma_{n}, \Pi_{2})}{r} + 1 \\
\ & \leq \W{\Pi_{1}}{r} + \W{\Pi_{2}}{r} + 1 + \sm{\sum}{s=1}{q} \W{\Sigma_{i_{s}}}{r} + \sm{\sum}{s=q+1}{p} \W{\Sigma_{i_{s}}}{r}\\
\ & = \W{\Pi}{r} + \sm{\sum}{s=1}{p} \W{\Sigma_{i_{s}}}{r}
\end{align*}
\normalsize

\medskip

Let $\Pi$ end with an application of rule $(m)$, having range $\{ \y \}$, to $\Pi'$. If $\y \not\in \{ \x_{i_{1}}, ... , \x_{i_{p}} \}$, then the proof follows by induction. Otherwise let $\y \equiv \x_{1}$, ${ \{ \y_{1}, ... , \y_{h} \} \cap \FV{\R} = \{ \y_{j_{1}}, ... , \y_{j_{q}}\} }$ and ${ \{ \x_{2}, ... , \x_{n} \} \cap \FV{\M} = \{ \x_{i_{1}}, ... , \x_{i_{p}}\} }$.
By inductive hypothesis ${\W{S(\Psi'_{s_{1}}, ... , \Psi'_{s_{h}}, \Sigma_{2}, ... , \Sigma_{n}, \Pi')}{r} \leq \W{\Pi'}{r} + \sm{\sum}{l=1}{q} \W{\Psi'_{s_{j_{l}}}}{r} + \sm{\sum}{s=1}{p} \W{\Sigma_{i_{s}}}{r} }$.\\
Note that ${ \W{\Sigma_{1}}{r} = r \cdot \sm{\max}{s=1}{m}\ \W{\Psi_{s}}{r} = r \cdot \sm{\max}{k=1}{h}\ \W{\Psi'_{s_{k}}}{r} }$ by Remark~\ref{rem:sizeofcopy}.
Moreover, $\W{\Pi}{r} = \W{\Pi'}{r}$ and $q \leq \rk{\Pi} \leq r$.\\
Since $\delta$ is a renaming sequence, by definition we have that ${ \W{S(\Sigma_{1}, ... , \Sigma_{n}, \Pi)}{r} = \W{S(\Psi'_{s_{1}}, ... , \Psi'_{s_{h}}, \Sigma_{2}, ... \Sigma_{n}, \Pi')}{r}}$; then
\small
\begin{align*}
\W{S(\Sigma_{1}, ... , \Sigma_{n}, \Pi)}{r} & = \W{S(\Psi'_{s_{1}}, ... , \Psi'_{s_{h}}, \Sigma_{2}, ... \Sigma_{n}, \Pi')}{r}\\
\ & \leq \W{\Pi'}{r} + \sm{\sum}{l=1}{q} \W{\Psi'_{s_{j_{l}}}}{r} + \sm{\sum}{s=1}{p} \W{\Sigma_{i_{s}}}{r} \\
\ & \leq \W{\Pi'}{r} + r \cdot \sm{\max}{l=1}{q}\ \W{\Psi'_{s_{j_{l}}}}{r} + \sm{\sum}{s=1}{p} \W{\Sigma_{i_{s}}}{r} \\
\ & = \W{\Pi}{r} + r \cdot \sm{\max}{k=1}{h}\ \W{\Psi'_{s_{k}}}{r} + \sm{\sum}{s=1}{p} \W{\Sigma_{i_{s}}}{r} \\
\ & = \W{\Pi}{r} + \W{\Sigma_{1}}{r} + \sm{\sum}{s=1}{p} \W{\Sigma_{i_{s}}}{r}
\end{align*}
\normalsize

\medskip

Let $\Pi$ end with an application of rule $(st)$ to $(\Pi_{k})_{1 \leq k \leq h}$.
Note that ${\W{\Sigma_{i}}{r} = r \cdot \sm{\max}{s=1}{h_{i}}\ \W{\Sigma^{s}_{i}}{r} = r \cdot \sm{\max}{k=1}{h}\ \W{\Sigma^{s_{k}}_{i}}{r} }$ $(1 \leq i \leq n)$.
By induction hypothesis ${ \W{S(\Sigma^{s_{k}}_{1},...,\Sigma^{s_{k}}_{n},\Pi_{k})}{r} \leq \W{\Pi_{k}}{r}+
\sm{\sum}{j=1}{p} \W{\Sigma^{s_{k}}_{i_{j}}}{r} }$, for every ${ 1\leq k\leq h }$; then
\small
\begin{align*}
\W{S(\Sigma^{s_{k}}_{1},...,\Sigma^{s_{k}}_{n},\Pi_{k})}{r} & = r \cdot \sm{\max}{k=1}{h}\ \W{S(\Sigma^{s_{k}}_{1},...,\Sigma^{s_{k}}_{n}, \Pi_{k})}{r} \\
\ & \leq r \cdot \sm{\max}{k=1}{h}\ \left( \W{\Pi_{k}}{r}+
\sm{\sum}{j=1}{p} \W{\Sigma^{s_{k}}_{i_{j}}}{r} \right) \\
\ & < r \cdot \sm{\max}{k=1}{h}\ \W{\Pi_{k}}{r}+
r \cdot \sm{\max}{k=1}{h}\ \sm{\sum}{j=1}{p} \W{\Sigma^{s_{k}}_{i_{j}}}{r} \\
\ & \leq r \cdot \sm{\max}{k=1}{h}\ \W{\Pi_{k}}{r}+ \sm{\sum}{j=1}{p} \left(
r \cdot \sm{\max}{k=1}{h}\ \W{\Sigma^{s_{k}}_{i_{j}}}{r} \right) \\
\ & = \W{\Pi}{r} + \sm{\sum}{j=1}{p} \W{\Sigma_{i_{j}}}{r}\\
\end{align*}
\normalsize

\medskip

Let $\Pi$ end with an application of a quantifier rule: then the result follows by induction.

\end{proof}

Observe that erasing a $\forall$-detour does not change the weight of the proof.
We prove that the weight of a proof strictly decreases with each normalization step:

\begin{lemma}[Weighted subject reduction] \label{lem:wsubj}
$\Pi \dem \Gamma  \der \R :\sigma$ and $\R \redbeta \Q$ implies $\Psi \dem \Gamma \der \Q: \sigma$,
such that
$\W{\Psi}{r} < \W{\Pi}{r}$
for every $r \geq \rk{\Pi}$.
\end{lemma}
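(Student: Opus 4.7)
The plan is to mirror the proof of Theorem~\ref{th:subj} (Subject Reduction) step-for-step, but to compute the weight at every stage and show that it strictly decreases. I will proceed by induction on the term context $\C[.]$ with $\R \equiv \C[(\lambda\x.\M)\N]$ and $\Q \equiv \C[\M[\N/\x]]$, performing a sub-induction on the structure of $\sigma$ in the base case.

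For the base case $\C[.]\equiv[.]$ with linear $\sigma=\A$, Generation Lemma~\ref{lem:gen} together with Lemma~\ref{lem:riarrange} reshape $\Pi$ into the canonical form used in Theorem~\ref{th:subj}, namely a renaming sequence above an application of $(\lin E)$ whose left premise is a $(\lin I)$ on top of some $\Pi''$ (after the weight-preserving elimination of $\forall$-detours, which only substitute types for type variables and therefore do not affect $\W{-}{r}$). Applying Lemma~\ref{lem:wsubs} to substitute $\Sigma$ into $\Pi''$ yields a new derivation of weight at most $\W{\Pi''}{r}+\W{\Sigma}{r}$, while the original weight equals $\W{\Pi''}{r}+1+\W{\Sigma}{r}+1$ (the two extra units being contributed by $(\lin I)$ and $(\lin E)$); the surrounding renaming sequences preserve $\W{-}{r}$. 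Hence the weight drops by at least $2$. The hypothesis $r\geq\rk{\Pi}$ is enough to invoke Lemma~\ref{lem:wsubs}, since every subderivation has rank bounded by $\rk{\Pi}$.

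For the base case with stratified $\sigma=\str{\sigma}{n}$, Property~\ref{prop:stra} decomposes $\Pi$ into an $(st)$ rule applied to premises $\Pi_i$ (each typing the redex with $\sigma_i$) followed by a renaming sequence. The sub-inductive hypothesis on each $\Pi_i$ produces $\Phi_i$ with $\W{\Phi_i}{r}<\W{\Pi_i}{r}$. Rebuilding with $(st)$ and the same renaming sequence yields weight $r\cdot\max_i\W{\Phi_i}{r}$. Since strictly decreasing every $\W{\Pi_i}{r}$ by at least $1$ strictly decreases their maximum, and $r\geq 1$, the new weight is strictly less than $r\cdot\max_i\W{\Pi_i}{r}=\W{\Pi}{r}$.

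The inductive step $\C[.]\not\equiv[.]$ follows by case analysis on the outermost constructor of $\C[.]$: Generation Lemma~\ref{lem:gen}, or Property~\ref{prop:stra} when the top of $\Pi$ is an $(st)$ rule, exposes a subderivation typing the immediate subcontext containing the redex, to which the inductive hypothesis applies; the surrounding constructive, renaming, quantifier, or $(st)$ rules are reattached without disturbing the strict decrease, since they either preserve weight or add a constant to both sides, or (in the case of $(st)$) scale by $r$ as above. The main obstacle I expect is exactly this $(st)$ case, in both the base and inductive steps: I must verify that the strict inequality survives the max-and-multiply-by-$r$ operation defining the weight of $(st)$, and that the sub-induction on $\sigma$ terminates, since Property~\ref{prop:stra} delivers premises of type $\sigma_i$ which must be structurally smaller than $\str{\sigma}{n}$ under the chosen measure. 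Beyond this, the argument reduces to bookkeeping inherited from Theorem~\ref{th:subj} and Lemma~\ref{lem:wsubs}.
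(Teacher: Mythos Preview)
Your proposal is correct and follows essentially the same approach as the paper's own proof: induction on the term context $\C[.]$ with a sub-induction on $\sigma$ in the base case, invoking Lemma~\ref{lem:wsubs} for the linear case (gaining the two units from the eliminated $(\lin I)$/$(\lin E)$ pair) and passing the strict inequality through the $r\cdot\max$ in the stratified case. The paper's write-up is terser---it simply refers back to the notation of Theorem~\ref{th:subj} and dismisses the case $\C[.]\not\equiv[.]$ as ``easy by induction''---but your additional remarks on rank inheritance by subderivations and on why the max-and-scale step preserves strictness are accurate and harmless elaborations of the same argument.
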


\begin{proof}
We already proved in Theorem~\ref{th:subj} that subject reduction holds, so now we just need to prove that the inequality holds. Let $\R\equiv\C[(\lambda\x.\M)\N]$ and ${\Q\equiv\C[\M[\N/\x]]}$. 
We proceed by induction on the term context and then by induction on $\sigma$. In order to save space, we use the same notation as in  Theorem~\ref{th:subj}.
Let ${\C \equiv [.]}$; if $\sigma$ is linear, then ${\W{\Pi}{r} = \W{\Pi'}{r} + \W{\Sigma}{r} + 2 = \W{\Pi''}{r} + \W{\Sigma}{r} + 2}$, since $\delta_{1}, \delta_{2}$ are sequences of renaming and quantifier rules: then
 ${\W{\Psi}{r} = \W{S(\Sigma, \Pi'')}{r} \leq \W{\Sigma}{r} + \W{\Pi''}{r} < \W{\Pi}{r}}$ by Lemma~\ref{lem:wsubs}.

Otherwise, let $\sigma = \{ \sigma_{1}, ... , \sigma_{n} \}$.
By inductive hypothesis $\W{\Phi_{i}}{r} < \W{\Pi_{i}}{r}$, for $1 \leq i \leq n$: then 
$ { \W{\Psi}{r} = r \cdot \sm{\max}{i=1}{n}\ \W{\Phi_{i}}{r} < r \cdot \sm{\max}{i=1}{n}\ \W{\Pi_{i}}{r} = \W{\Pi}{r} }$.

If $\C \not\equiv [.]$, the proof follows easily by induction.
\end{proof}

We can now state both results of complexity and strong normalization.

\begin{lemma} \label{th:polystep}
Let $\Pi \dem \Gamma \der \M: \sigma$ and let $\M$ $\beta$-reduce to $\M'$ in $m$ steps. Then:
\begin{enumerate}[(i)]
\item $m \leq |\M|^{\D{\Pi} + 1}$.
\item $|\M'| \leq |\M|^{\D{\Pi} + 1}$.
\end{enumerate}
\end{lemma}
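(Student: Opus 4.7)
The plan is to read off both bounds from the weight function $\W{\Pi}{r}$ by choosing $r$ cleverly and then applying Lemma~\ref{lem:polylem} and the weighted subject reduction, Lemma~\ref{lem:wsubj}.

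First I would set $r = \rk{\Pi}$. By Lemma~\ref{lem:polylem}(i) we have $r \leq |\M|$, and by parts (ii)–(iii) of the same lemma
\[ \W{\Pi}{r} \;\leq\; r^{\D{\Pi}} \cdot \W{\Pi}{1} \;=\; r^{\D{\Pi}} \cdot |\M| \;\leq\; |\M|^{\D{\Pi}} \cdot |\M| \;=\; |\M|^{\D{\Pi}+1}. \]
This is the single upper bound that will drive everything.

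For (i), consider any reduction sequence $\M = \M_{0} \redbeta \M_{1} \redbeta \cdots \redbeta \M_{m} = \M'$. Starting from $\Pi_{0} \equiv \Pi$ and applying Lemma~\ref{lem:wsubj} inductively, I obtain derivations $\Pi_{j}\dem\Gamma\der \M_{j}:\sigma$ with $\W{\Pi_{j+1}}{r} < \W{\Pi_{j}}{r}$ for every $r\geq \rk{\Pi}$ (note that the rank condition is preserved along the reduction, since we measure $r$ with respect to the original $\Pi$ and the inequality in Lemma~\ref{lem:wsubj} holds for \emph{every} such $r$). Since weights are positive integers and strictly decrease, $m \leq \W{\Pi}{r} \leq |\M|^{\D{\Pi}+1}$.

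For (ii), I would observe that the weight $\W{\cdot}{r}$ is non-decreasing in $r$ (an immediate induction on the derivation, since $r$ only appears as a multiplicative factor in the $(st)$ clause and every other clause is independent of $r$). Therefore, for the final derivation $\Pi_{m}$ of $\M'$,
\[ |\M'| \;=\; \W{\Pi_{m}}{1} \;\leq\; \W{\Pi_{m}}{r} \;\leq\; \W{\Pi}{r} \;\leq\; |\M|^{\D{\Pi}+1}, \]
using Lemma~\ref{lem:polylem}(iii) for the first equality, monotonicity in $r$ for the first inequality, and the iterated weighted subject reduction for the second.

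The only genuinely delicate point is the need to use a single $r$ throughout the whole reduction sequence; this is why I fix $r = \rk{\Pi}$ at the start and rely on Lemma~\ref{lem:wsubj} giving the strict decrease uniformly in any admissible $r$. Once that is clear, the two bounds are essentially the same estimate of $\W{\Pi}{r}$ read in two different ways.
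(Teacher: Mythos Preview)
Your proposal is correct and follows essentially the same route as the paper: fix $r=\rk{\Pi}$, bound $\W{\Pi}{r}\leq |\M|^{\D{\Pi}+1}$ via Lemma~\ref{lem:polylem}, then iterate Lemma~\ref{lem:wsubj} to get $m\leq\W{\Pi}{r}$ for (i), and combine $|\M'|=\W{\Pi_m}{1}\leq\W{\Pi_m}{r}\leq\W{\Pi}{r}$ for (ii). Your write-up is slightly more explicit than the paper's about the monotonicity of the weight in $r$ and about reusing a single $r$ along the whole reduction, but the argument is the same.
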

\begin{proof}
Let $r = \rk{\Pi}$ and let $\Pi' \dem \Gamma \der \M': \sigma$.
\begin{enumerate}[i)]
\item By Lemma~\ref{lem:polylem}, $r \leq |\M|$ and ${\W{\Pi}{r} \leq r^{\D{\Pi}} \cdot \W{\Pi}{1} = r^{\D{\Pi}} \cdot |\M| \leq |\M|^{\D{\Pi}} \cdot |\M|}$.
By Lemma~\ref{lem:wsubs}, if $\Pi$ rewrites to $\Pi_1$ in $1$ step then ${\W{\Pi_1}{r} \leq \W{\Pi}{r} - 1}$; by the same Lemma, if $\Pi_1$ rewrites to $\Pi_2$ in $1$ step then ${\W{\Pi_2}{r} \leq \W{\Pi_1}{r} - 1 \leq \W{\Pi}{r} - 2}$, and so on; then, after $m$ reduction steps we get $\W{\Pi'}{r} \leq \W{\Pi}{r} - m$ and thus $m \leq \W{\Pi}{r}$.
By substituting ${\W{\Pi}{r} \leq |\M|^{\D{\Pi} + 1}}$ in the above expression, we obtain $m \leq |\M|^{\D{\Pi} + 1} $.
\item By Lemma~\ref{lem:wsubs}, ${\W{\Pi'}{r} < \W{\Pi}{r}}$.
Since ${\W{\Pi'}{1} \leq \W{\Pi'}{r}}$ and by Lemma~\ref{lem:polylem} ${|\M'| = \W{\Pi'}{1}}$, we get 
\[ {|\M'| \leq \W{\Pi}{r} \leq r^{\D{\Pi}} \cdot \W{\Pi}{1} = r^{\D{\Pi}} \cdot |\M| \leq |\M|^{\D{\Pi} + 1}} .\]
\end{enumerate}
\end{proof}

Since the number of steps in the reduction path for a given term is a finite number, we also get a proof of strong normalization for all terms typable in \STR.

\begin{theorem}\label{cor:str1}
If a term $\M$ is typed in \STR, then $\M$ is strongly normalizing.
\end{theorem}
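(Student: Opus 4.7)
The plan is to obtain Theorem~\ref{cor:str1} as an immediate corollary of Lemma~\ref{th:polystep}(i). Given $\Pi \dem \Gamma \der \M : \sigma$, that lemma already supplies a uniform bound $|\M|^{\D{\Pi}+1}$ on the length $m$ of \emph{any} reduction from $\M$, depending only on $\M$ and the fixed derivation $\Pi$, not on the reduction strategy. So I would simply argue: if there were an infinite reduction $\M = \M_0 \redbeta \M_1 \redbeta \cdots$, then truncating it after $|\M|^{\D{\Pi}+1}+1$ steps would produce a reduction from $\M$ of length strictly greater than the bound, contradicting Lemma~\ref{th:polystep}(i). Hence no infinite reduction exists, which is exactly strong normalization.

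Equivalently — and this is the variant I would probably write up, since it makes clear which lemma is doing the real work — one can argue directly from the weighted subject reduction Lemma~\ref{lem:wsubj}. Fix $r = \max\{\rk{\Pi},1\}$, which is a positive integer depending only on $\Pi$. Iterating Lemma~\ref{lem:wsubj} along any reduction $\M = \M_0 \redbeta \M_1 \redbeta \cdots$ yields derivations $\Pi = \Pi_0, \Pi_1, \ldots$ with $\W{\Pi_{i+1}}{r} < \W{\Pi_i}{r}$, hence a strictly decreasing sequence of non\nobreakdash-negative integers, which must be finite.

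The only point that requires a sentence of justification is the iteration of Lemma~\ref{lem:wsubj} with the \emph{same} parameter $r$: the lemma demands $r \geq \rk{\Pi_i}$ at each step, so one should observe that the rank does not grow along reduction. This is already implicit in the proof of Lemma~\ref{th:polystep}(i) — where the same $r = \rk{\Pi}$ is reused throughout the iteration — and is inherited from the weighted substitution Lemma~\ref{lem:wsubs}, whose estimate is valid for any single $r$ dominating the ranks of all derivations involved in the detour elimination. Since there is no real obstacle here, the proof is essentially a three-line deduction; the bulk of the technical difficulty has already been discharged in Lemma~\ref{lem:wsubs} and Lemma~\ref{lem:wsubj}.
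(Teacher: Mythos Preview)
Your proposal is correct and matches the paper's approach exactly: the paper derives Theorem~\ref{cor:str1} in one sentence as an immediate consequence of Lemma~\ref{th:polystep}(i), observing that the bound $|\M|^{\D{\Pi}+1}$ on the length of any reduction path forces every reduction sequence to be finite. Your second variant simply unwinds the proof of Lemma~\ref{th:polystep}(i), and your remark about reusing the same $r$ across iterations is a fair observation---the paper itself silently reuses $r=\rk{\Pi}$ throughout the iteration in the proof of Lemma~\ref{th:polystep}, so you are not adding any gap that is not already present there.
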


Note that any typable term can be assigned an infinite number of types, so every derivation for it supplies a bound on the number of its normalization steps; it is easy to see that every typable term has a minimal typing, which gives the minimal bound on its normalization time.
\subsection{Strong normalization vs. typability} \label{subsec:snt}
In this subsection we show that all strongly normalizing terms are typed in \STR.
We follow the technique used in \cite{RaamsdonkSSX99, Barendregt2013}.\\
The set of strongly normalizing terms, denoted by $\SN$, is the smallest set of terms closed under the following three rules:
\begin{gather}
\infer{\x \M_{1} ... \M_{n} \in \SN}{\M_{1} \in \SN &  ... & \M_{n} \in \SN} \\
\infer{\lam \x. \M \in \SN}{\M \in \SN} \\
\infer{(\lam \x. \M) \N \M_{1} ... \M_{n} \in \SN}{\M[ \N/\x ] \M_{1} ... \M_{n} \in \SN & \N \in \SN}
\end{gather}

For each of these rules, we prove that if the premises of the rule are typable then the conclusion is typable.
We begin by giving a sort of inversion of Lemma~\ref{lem:subs}.

\begin{lemma}\label{lem:invsub}
Let $\Gamma, \Delta \der \M[\N_{i}/\x_{i}]_{i=1}^{n}: \tau$, where
${ \dom{\Gamma} = \FV{\M} \setminus \{ \x_{1}, ... , \x_{n} \} }$,
${ \{ \x_{1}, ... , \x_{n} \} \cap \dom{\Delta} = \emptyset }$ and
${\N_{1}, ... , \N_{n}}$ are typable in $\STR$;
then there are ${\Delta_i \der \N_i:\sigma_i}$, for ${1 \leq i \leq n}$,
such that
${\Gamma, \x_{1}: \sigma_{1},..,\x_{n}: \sigma_{n} \der \M: \tau}$.
\end{lemma}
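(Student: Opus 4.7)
The proof proceeds by structural induction on $\M$. The Generation Lemma (Lemma~\ref{lem:gen}) decomposes the hypothesis derivation when $\tau$ is linear; when $\tau$ is stratified, Property~\ref{prop:stra} splits the derivation into linear-typed branches, each handled separately and then recombined via rule $(st)$. The typability hypothesis on each $\N_j$ supplies a default derivation for any $\x_j$ that does not actually appear in $\M$.

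The variable case is bookkeeping: if $\M \equiv \x_i$, the hypothesis gives $\Delta \der \N_i : \tau$ directly (note that $\dom{\Gamma}=\emptyset$ here), so set $\sigma_i = \tau$ and $\Delta_i = \Delta$, use typability for the remaining $\N_j$, and derive $\Gamma, \x_1:\sigma_1, \ldots, \x_n:\sigma_n \der \x_i : \tau$ from $(Ax)$ followed by weakening (Property~\ref{prop:weak}). The subcase in which $\x$ is a variable distinct from all the $\x_j$ is analogous, extracting the axiom step for $\x$ from Generation and filling in the remaining context by weakening. The abstraction case $\M \equiv \lam \y. \M'$ follows by Lemma~\ref{lem:gen}(3): peel off the $(\lin I)$ step and the subsequent renaming-plus-quantifier sequence that produces $\tau = \forall \vec{\tvar}. \rho \lin \B$, apply the induction hypothesis to the premise on $\M'$ with enlarged context $\Gamma',\y:\rho$, and reassemble by $(\lin I)$ and the original sequence.

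The main obstacle is the application case $\M \equiv \M' \R$, particularly when some $\x_i$ occurs in both $\M'$ and $\R$. Lemma~\ref{lem:gen}(4) yields an instance $\U' \V'$ of $\M'[\N_i/\x_i] \R[\N_i/\x_i]$ whose sub-derivations have disjoint contexts $\Gamma_1 \der \U' : \rho \lin \A$ and $\Gamma_2 \der \V' : \rho$, so that the free variables shared between the two occurrences of $\N_i$ have been renamed apart. We may therefore view $\U'$ as $\M'[\N_i^{(1)}/\x_i^{(1)}]$ and $\V'$ as $\R[\N_i^{(2)}/\x_i^{(2)}]$, for suitable fresh copies $\N_i^{(1)}, \N_i^{(2)}$ of $\N_i$ and fresh copies $\x_i^{(1)}, \x_i^{(2)}$ of $\x_i$, and apply the induction hypothesis independently to each side. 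This yields types $\sigma_i^{(1)}, \sigma_i^{(2)}$ and derivations $\Delta_i^{(j)} \der \N_i : \sigma_i^{(j)}$, which we merge: equalise their contexts by Property~\ref{prop:weak}, combine via $(st)$ into a single $\Delta_i \der \N_i : \stra{\sigma_i^{(1)}, \sigma_i^{(2)}}$, and contract $\x_i^{(1)}$ and $\x_i^{(2)}$ in the $(\lin E)$ conclusion by rule $(m)$ into a single $\x_i$ of stratified type $\stra{\sigma_i^{(1)}, \sigma_i^{(2)}}$. The renaming and quantifier sequence from Lemma~\ref{lem:gen}(4) then recovers $\tau$, and an entirely analogous branchwise merging disposes of the case when $\tau$ is stratified.
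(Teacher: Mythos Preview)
Your approach—structural induction on $\M$, using the Generation Lemma to access the sub-derivations—is genuinely different from the paper's, which proceeds by induction on the derivation $\Pi$ and therefore treats each rule, in particular the multiplexor $(m)$, as an explicit case. Your route is workable in principle, and the variable, abstraction, and stratified-type cases are fine, but the application case has a real gap that your sketch does not close.

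The issue is your claim that ``$\U'$ may be viewed as $\M'[\N_i^{(1)}/\x_i^{(1)}]$''. First, the direction of the instance relation in Lemma~\ref{lem:gen}(4) is the opposite of what you wrote: the subject $\M'[\N_i/\x_i]\,\R[\N_i/\x_i]$ is an instance of $\U'\V'$, not the other way round. More substantively, the renaming sequence supplied by Lemma~\ref{lem:gen}(4) may split \emph{any} free variable of the subject, not only those contributed by the $\N_i$. So a variable $\y\in\dom{\Gamma}$ occurring in both $\M'$ and $\R$ (or several times inside $\M'$) will typically appear in $\U'$ as distinct fresh variables $\y_1,\y_2,\dots$. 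Consequently $\U'$ is of the form $\tilde\M'[\tilde\N_i^{(1)}/\x_i^{(1)}]$ for some variant $\tilde\M'$ of $\M'$ in which the $\Gamma$-variables have also been split; the induction hypothesis then yields a typing of $\tilde\M'$, not of $\M'$, in a context that does not match $\Gamma$ on those variables. Contracting them back via $(m)$ changes their types, so you cannot simply replay the original renaming sequence to recover $\Gamma$. The same phenomenon arises when a single $\x_i$ has several occurrences in $\M'$ whose copies of $\N_i$ get renamed apart inside $\U'$.

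This is exactly the bookkeeping that the paper's $(m)$ case handles head-on: it distinguishes whether the contracted variable $\y$ lies in $\bigcup_i\FV{\N_i}$ or not, and in the hard case it reconstructs the stratified types for the $\N_i$ from the pieces produced by splitting. Your sketch would need an analogous argument to track how the renaming sequence acts separately on the $\Gamma$-variables and on the $\N_i$-variables, and to show that the former portion still yields $\Gamma$ after the induction hypothesis has replaced the $\N_i$ by $\x_i$. The paper's induction on $\Pi$ buys exactly this: each $(m)$ step is peeled off one at a time, so the interaction with the substitution is localised rather than bundled into an opaque renaming sequence.
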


\begin{proof}
Let $\Pi$ be a derivation proving $\Gamma, \Delta \der \M[\N_{i}/\x_{i}]_{i=1}^{n}: \tau$.
The proof is by induction on $\Pi$.

Let $\Pi$ be
\[ \infer[(Ax)]{\x: \A \der \x: \A}{} \]
so $\M$ is a variable.
By hypothesis $\N_{i}$ is typable, so by Lemma~\ref{lem:gen} there is $\Sigma_{i} \dem \Delta_{i} \der \N_{i}: \A_{i}$ for some linear type $\A_{i}$, for $1 \leq i \leq n$.\\
Let $\M \equiv \x_{k}$, for $1 \leq k \leq n$, so $\Gamma = \emptyset$ and ${\N_{k} \equiv \x}$. Then ${\x_{1}:\A_{1},...,\x_{n}:\A_{n} \der \x_{k}: \A_{k}}$ follows from axiom ${ \x_{k}: \A_{k} \der \x_{k}: \A_{k} }$ by Property~\ref{prop:weak}.\\
Otherwise, let $\M \equiv \x$ and $\x \not\in \{\x_{1}, ... , \x_{n} \}$, so $\Gamma = \x: \A$.
Then ${\x: \A, \x_{1}:\A_{1},...,\x_{n}:\A_{n} \der \x: \A}$ follows from axiom $\x: \A \der \x: \A$ by Property~\ref{prop:weak}.

\medskip

Let $\Pi$ be
\[ \infer[(\lin I)]{\Gamma, \Delta \der  \lambda \y. \R: \rho \lin \A}
 {\Gamma, \Delta, \y:\rho \der \R: \A} \]
where $\lambda \x.\R\equiv\M[\N_{i}/\x_{i}]_{i=1}^{n}$, so either $\M\equiv\lam \y. \R'$ or $\M \equiv \x_{k}$, for some $k\in \{1,...,n\}$.
If ${\M\equiv\lam \y. \R'}$, since ${(\lam \y. \R')[\N_{i}/\x_{i}]_{i=1}^{n} \equiv \lam \y. (\R'[\N_{i}/\x_{i}]_{i=1}^{n})}$, by induction there are 
$\Delta_i \der \N_i: \sigma_i$ such that ${\Gamma, \y: \rho, \x_{1}: \sigma_{1}, ... , \x_{n}: \sigma_{n} \der \R: \A}$ 
($1\leq i\leq n$), so the proof follows by applying rule $(\lin I)$.\\
If $\M \equiv \x_{k}$, then $\N_k\equiv\lambda \y.\R$, ${\A_k = \rho \lin \A}$ and ${\Gamma=\emptyset}$. Since ${\x_k:\rho \lin \A \der \x_k:\rho \lin \A}$ by the axiom rule, ${\x_{1}: \sigma_{1},..,\x_k: \rho \lin \A,...\x_{n}: \sigma_{n}\der \x_k:\rho \lin \A}$ follows by Property~\ref{prop:weak}.

\medskip

Let $\Pi$ be
\[ \infer[(\lin E)]{\Gamma, \Delta \der \R\Q: \A}
{\Gamma', \Delta' \der \R: \rho \lin \A & \Gamma'', \Delta'' \der \Q: \rho} \]
where $\Gamma = \Gamma', \Gamma''$,
so either
$\M\equiv\R' \Q'$ or $\M \equiv \x_{k}$, for some $k\in \{1,...,n\}$.\\
%so $\M$ is either an application or a variable in $\{\x_{1}, ... , \x_{n} \}$.\\
Let $\M \equiv \R' \Q'$, so $\R \Q \equiv \R' [\N_{i}/\x_{i}]_{i=1}^{n}  \Q'[\N_{i}/\x_{i}]_{i=1}^{n}$.
Since ${\R \equiv \R'[\N_{i}/\x_{i}]_{i=1}^{n}}$, by inductive hypothesis there are $\Delta'_i \der \N_i: \sigma'_i$, for ${1 \leq i \leq n}$, such that
${\Gamma', \x_{1}: \sigma'_{1}, ... , \x_{n}: \sigma'_{n} \der \R': \rho \lin \A}$;
moreover, since  ${\Q \equiv \Q'[\N_{i}/\x_{i}]_{i=1}^{n}}$, by inductive hypothesis there are ${\Delta''_i \der \N_i: \sigma''_i}$, for ${1 \leq i \leq n}$, such that
${\Gamma'', \x_{1}: \sigma''_{1}, ... , \x_{n}: \sigma''_{n} \der \Q': \rho}$.
From such derivations, we obtain respectively
${\Sigma' \dem \Gamma', \x'_{1}: \stra{ \sigma'_{1} }, ... , \x'_{n}: \stra{ \sigma'_{n} } \der \R'[\x'_{i}/\x_{i}]_{i=1}^{n}: \rho \lin \A}$ and ${\Sigma'' \dem \Gamma'', \x''_{1}: \stra{ \sigma''_{1} }, ... , \x''_{n}: \stra{ \sigma''_{n} } \der \Q'[\x''_{i}/\x_{i}]_{i=1}^{n}: \rho}$ by a renaming sequence, such that ${ \{ \x'_{1}, ... , \x'_{n} \} \cap \{ \x''_{1}, ... , \x''_{n} \} = \emptyset}$.
For every $1 \leq i \leq n$, let $\sigma_{i} = \stra{ \stra{ \sigma'_{i} } } \cup \stra{ \stra{ \sigma''_{i} } }$: if $\sigma'_{i} \not= \sigma''_{i}$, then we can build
\[ \infer[(st)]{\Delta_{i} \der \N_{i}: \sigma_{i}}{ \infer[(st)]{\stra{ \Delta'_{i} } \der \N_{i}: \stra{ \sigma'_{i} } }{\Delta'_{i} \der \N_{i}: \sigma'_{i} } & \infer[(st)]{\stra{ \Delta''_{i} } \der \N_{i}: \stra{ \sigma''_{i} }}{ \Delta''_{i}  \der \N_{i}: \sigma''_{i} } } \]
where $\Delta_{i} = \stra{ \stra{ \Delta'_{i} } } \cup \stra{ \stra{ \Delta''_{i} } }$; otherwise,
if $\sigma'_{i} = \sigma''_{i}$, then we build the following
\[ \infer[(st)]{\Delta_{i} \der \N_{i}: \sigma_{i}}{ \infer[(st)]{\stra{ \Delta'_{i} } \der \N_{i}: \stra{ \sigma'_{i} } }{\Delta'_{i} \der \N_{i}: \sigma'_{i} } } \]
where $\Delta_{i} = \stra{ \stra{ \Delta'_{i} } }$.
Then there are $\Delta_{i} \der \N_{i}: \sigma_{i}$, for $1 \leq i \leq n$ and a renaming sequence $\delta$ such that the desired derivation is
\small
\[ \infer=[\delta]{\Gamma, \x_{1}: \sigma_{1}, ... , \x_{n}: \sigma_{n} \der \R' \Q': \A}{\infer[(\lin E)]{\Gamma, \x'_{1}: \stra{ \sigma'_{1} }, ... , \x'_{n}: \stra{ \sigma'_{n} }, \x''_{1}: \stra{ \sigma''_{1} }, ... , \x''_{n}: \stra{ \sigma''_{n} } \der \R'[\x'_{i}/\x_{i}]_{i=1}^{n} \Q'[\x''_{i}/\x_{i}]_{i=1}^{n}: \A}{\Sigma' & \qquad \qquad \Sigma''}} \]
\normalsize
If $\M \equiv \x_{k}$, then the proof follows easily as in the previous case $(\lin I)$.
%Now let $\M = \x_{k}$, for $1 \leq k \leq n$, so $\Gamma = \emptyset$ and ${\R \Q = \x_{k}[\N_{i}/\x_{i}]_{i=1}^{n}}$, where $\N_{k} = \R \Q$.
%By hypothesis $\N_{i}$ is typable, so by Lemma~\ref{lem:gen} there is ${\Sigma_{i} \dem \Delta_{i} \der \N_{i}: \A_{i}}$ for some linear type $\A_{i}$, for $1 \leq i \leq n$ and $i \not= k$; moreover, take $\Sigma_{k}$ to be the same derivation as $\Pi$. Then ${\x_{1}: \A_{1},..., \x_{n}:\A_{n} \der \x_{k}: \A}$ follows from the axiom $\x_{k}: \A \der \x_{k}: \A$ by Property~\ref{prop:weak}.

\medskip

Let $\Pi$ be
\[ \infer[(m)]{\Gamma, \Delta, \y: \sm{\cup}{j=1}{m} \stra{\rho_{j}} \der \R[\y/\y_{j}]_{j=1}^{m}: \tau}
{\Gamma, \Delta, \y_{1}: \rho_{1}, ... , \y_{m}: \rho_{m}: \tau_{m} \der \R: \tau } \]
where $ \R[\y/\y_{j}]_{j=1}^{m}\equiv\M[ \N_{i} / \x_{i} ]_{i=1}^{n}$.\\
Let $\y \not \in \sm{\cup}{i=1}{n} \FV{\N_{i}}$, so ${ \M \equiv \M'[ \y / \y_{j}]_{j=1}^{m} }$ and 
${ \R \equiv \M' [ \N_{i} / \x_{i} ]_{i=1}^{n} }$.
If $\y \not\in \FV{\M}$, then ${ \R \equiv \R[\y / \y_{j}]_{j=1}^{m} \equiv \M[\N_{i} / \x_{i}]_{i=1}^{n} }$ and the proof comes by induction. Otherwise, let $\y \in \FV{\M}$ and ${ \{ \y_{s_{1}}, ... , \y_{s_{p}} \} = \{ \y_{1}, ... , \y_{m} \} \cap \FV{\M} }$. By inductive hypothesis there are $\Delta_{i} \der \N_{i}: \sigma_{i}$ such that ${\Gamma, \y_{s_{1}}: \rho_{s_{1}}, ..., \y_{s_{p}}: \rho_{s_{p}}, \x_{1}: \sigma_{1}, ... , \x_{n}: \sigma_{n} \der \M: \tau }$; then, from such derivation, we obtain the desired result by Property~\ref{prop:weak} and by one application of rule $(m)$ with domain $\{ \y_{1}, ... , \y_{n} \}$ and range $\{ \y \}$.\\
Now let us consider $\y \in \sm{\cup}{i=1}{n} \FV{\N_{i}}$, so
${ \M \equiv \M' [ \y / \y_{j} ]_{j=1}^{m} [ \x_{i} / \x^{1}_{i}, ... , \x_{i} / \x^{r_{i}}_{i} ]_{i=1}^{n} } $ and
${\M [\N_{i} / \x_{i} ]_{i=1}^{n} \equiv ( \M' [ \N^{1}_{i} / \x^{1}_{i}, ... , \N^{n_{i}}_{i} / \x^{r_{i}}_{i} ]_{i=1}^{n} ) [\y / \y_{j}]_{j=1}^{m} }$; moreover,
let ${ \{ \y_{s_{1}}, ... , \y_{s_{p}} \} = \{ \y_{1}, ... , \y_{m} \} \cap \FV{\M'} }$.
Since ${ \R \equiv \M' [ \N^{1}_{i} / \x^{1}_{i}, ... , \N^{r_{i}}_{i} / \x^{r_{i}}_{i} ]_{i=1}^{n} }$, by inductive hypothesis there are ${ \Delta^{h}_{i} \der \N^{h}_{i}: \sigma^{h}_{i} }$, for ${ 1 \leq i \leq n }$ and ${ 1 \leq h \leq r_{i} }$, such that \[ { \Psi \dem \Gamma, \y_{s_{1}}: \rho_{s_{1}}, ... , \y_{s_{p}}: \rho_{s_{p}}, \x^{1}_{1}:\sigma^{1}_{1}, ... , \x^{r_{1}}_{1}: \sigma^{r_{1}}_{1}, ... , \x^{1}_{n}: \sigma^{1}_{n}, ... , \x^{r_{n}}_{n}: \sigma^{r_{n}}_{n} \der \M': \tau } \]
Note that ${ \N_{i} \equiv \N^{h}_{i} [ \y / \y_{m} ]_{j=1}^{m} }$, so we can build ${ \Delta'^{h}_{i} \der \N^{h}_{i} [ \y / \y_{m} ]_{j=1}^{m}: \sigma^{h}_{i} }$ by applying rule $(m)$ with domain ${ \FV{\N^{h}_{i}} \cap \{ \y_{1}, ... , \y_{m} \} }$ and range $\{ \y \}$ to ${ \Delta^{h}_{i} \der \N^{h}_{i}: \sigma^{h}_{i} }$, for ${1 \leq i \leq n}$ and ${1 \leq h \leq r_{i}}$.
Then, for every $1 \leq i \leq n$, if ${ \sigma_{i} = \sm{\cup}{h=1}{r_{i}} \stra{ \sigma^{h}_{i} } = \{ \sigma^{s^{i}_{1}}_{i}, ..., \sigma^{s^{i}_{q}}_{i} \} }$ we can build
\[ \infer[(st)]{\Delta_{i} \der \N_{i}: \sigma_{i}}{( \Delta'^{h}_{i} \der \N^{h}_{i} [ \y / \y_{m} ]_{j=1}^{m}: \sigma^{h}_{i} )_{s^{i}_{1} \leq h \leq s^{i}_{q}} } \] 
Let $\delta$ be a renaming sequence containing $n$ applications of rule $(m)$, the $i$-th one having domain $\{ \x^{1}_{i}, ... , \x^{r_{i}}_{i} \}$ and range $\{ \x_{i} \}$, for $1 \leq i \leq n$.
If $\y \in \FV{\M}$, then from $\Psi$ we derive
\[ \Gamma, \y: \sm{\cup}{j=1}{m} \stra{\rho_{j}}, \x_{1}: \sigma_{1}, ... , \x_{n}: \sigma_{n} \der \M' [ \y / \y_{j} ]_{j=1}^{m} [ \x_{i} / \x^{1}_{i}, ... , \x_{i} / \x^{r_{i}}_{i} ]_{i=1}^{n}: \tau \]
by Property~\ref{prop:weak} and by one application of rule $(m)$ with domain $\{ \y_{1}, ... , \y_{n} \}$ and range $\{ \y \}$, followed by sequence $\delta$.
Otherwise, if $\y \not \in \FV{\M}$, then $\M' \equiv \M$ and ${ \{ \y_{s_{1}}, ... , \y_{s_{p}} \} = \emptyset }$, so we obtain the desired result by applying renaming sequence $\delta$ to ${ \Gamma, \x^{1}_{1}:\sigma^{1}_{1}, ... , \x^{r_{1}}_{1}: \sigma^{r_{1}}_{1}, ... , \x^{1}_{n}: \sigma^{1}_{n}, ... , \x^{r_{n}}_{n}: \sigma^{r_{n}}_{n} \der \M: \tau }$.

\medskip

Let $\Pi$ be
\[ \infer[(st)]{\Gamma, \Delta \der \M [ \N_{i} / \x_{i} ]_{i=1}^{n} : \tau}{ ( \Gamma_{j}, \Theta_{j} \der \M [ \N_{i} / \x_{i} ]_{i=1}^{n} : \tau_{j} )_{1 \leq j \leq m} } \]
where $\Gamma = \sm{\cup}{j=1}{m} \stra{\Gamma_{j}}$ and $\tau = \str{\tau}{m}$.
For ${1 \leq j \leq m}$, by inductive hypothesis there are ${\Delta^{j}_{i} \der \N_{i}: \sigma^{j}_{i}}$, for ${1 \leq i \leq n}$, such that $\Gamma_{j}, \x_{1}: \sigma^{j}_{1}, ... , \x_{n}: \sigma^{j}_{n} \der \M: \tau_{j}$.
Then, for $1 \leq i \leq n$, if ${ \sigma_{i} = \sm{\cup}{j=1}{m} \stra{ \sigma^{j}_{i} } = \stra{ \sigma^{s^{i}_{1}}_{i}, ... , \sigma^{s^{i}_{p}}_{i} } }$ there are
\[ \infer[(st)]{\Delta_{i} \der \N_{i}: \sigma_{i} }{ ( \Delta^{j}_{i} \der \N_{i}: \sigma^{j}_{i} )_{s^{i}_{1} \leq j \leq s^{i}_{p}}}\]
such that the desired derivation is
\[\infer[(st)]{\Gamma, \x_{1}: \sigma_{1}, ... , \x_{n}: \sigma_{n} \der \M: \tau }{ ( \Gamma_{j}, \x_{1}: \sigma^{j}_{1}, ... , \x_{n}: \sigma^{j}_{n} \der \M: \tau_{j} )_{1 \leq j \leq m}} \]

\medskip
Let $\Pi$ end with an application of a quantifier rule: then the proof follows easily by induction.
 
\end{proof}

The following lemma proves a particular case of the (typed) subject expansion. 

\begin{lemma}
\label{lem:subexp}
$\Theta \der \M [\N / \x]: \sigma$ and $\N$ typable in \STR\ imply there exists $\Theta'$ such that
$\Theta' \der (\lambda \x.\M)\N: \sigma$.
\end{lemma}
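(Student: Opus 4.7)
The plan is to proceed by induction on the structure of the type $\sigma$, using the inverse substitution Lemma~\ref{lem:invsub} as the central ingredient.

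For the base case when $\sigma$ is linear, say $\sigma = \A$: split the context as $\Theta = \Gamma, \Delta_0$ with $\dom{\Gamma} = \FV{\M} \setminus \{\x\}$ and $\x \not\in \dom{\Delta_0}$. Applying Lemma~\ref{lem:invsub} with $n = 1$, $\x_1 \equiv \x$, $\N_1 \equiv \N$ yields a type $\tau$, a derivation $\Delta \der \N : \tau$, and a derivation $\Gamma, \x : \tau \der \M : \A$. Assemble the desired derivation by one application of $(\lin I)$ followed by one of $(\lin E)$:
\[
\infer[(\lin E)]{\Gamma, \Delta \der (\lambda \x. \M)\N : \A}{\infer[(\lin I)]{\Gamma \der \lambda \x. \M : \tau \lin \A}{\Gamma, \x : \tau \der \M : \A} & \Delta \der \N : \tau}
\]
Shared free variables between $\M$ and $\N$ (names in $\dom{\Gamma} \cap \dom{\Delta}$) violate the side condition $\Gamma \# \Delta$; they are handled by the standard pattern noted in the paper, namely working over fresh copies to apply $(\lin E)$ with disjoint contexts and then re-contracting via rule $(m)$.

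For the inductive case when $\sigma = \{\sigma_1, \ldots, \sigma_n\}$ is stratified: by Corollary~\ref{cor:stra}, writing $\bar{\sigma} = [\A_1, \ldots, \A_m]$, one obtains derivations $\Gamma_i \der \M[\N/\x] : \A_i$ for $1 \leq i \leq m$. Applying the base case to each yields derivations $\Gamma'_i \der (\lambda \x. \M)\N : \A_i$. The type $\sigma$ is then rebuilt from its linear components by iteratively applying rule $(st)$ along the nesting structure of $\sigma$, with the $\eqdom$ side condition enforced by Property~\ref{prop:weak} whenever the intermediate contexts need alignment.

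The main obstacle is the bookkeeping of the contextual side conditions — in particular $\Gamma \# \Delta$ for $(\lin E)$ in the base case, and $\sm{\eqdom}{i=1}{n} \Gamma_i$ for $(st)$ in the rebuilding step. Both are resolved by routine use of the renaming rules $(m)$, $(w)$ and Property~\ref{prop:weak} to align contexts without disturbing the subject, so that the final derivation genuinely has subject $(\lambda \x. \M)\N$ and type $\sigma$.
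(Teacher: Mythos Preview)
Your proof is correct and follows essentially the same strategy as the paper: induction on $\sigma$, with the linear base case handled via Lemma~\ref{lem:invsub} followed by $(\lin I)$ and $(\lin E)$, and the disjointness side condition resolved by passing to a fresh copy and re-contracting with $(m)$ (the paper phrases this as taking a copy $\Delta'\der\N':\tau$ and invoking Lemma~\ref{rem:instance}). The only stylistic difference is in the stratified case: you flatten all the way to the linear components via Corollary~\ref{cor:stra} and then rebuild $\sigma$ by iterated $(st)$, whereas the paper peels just one layer using Property~\ref{prop:stra}, applies the inductive hypothesis to each $\sigma_i$, and closes with a single $(st)$; both routes work.
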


\begin{proof}
By induction on $\sigma$.

Let $\sigma = \A$;
consider $\Theta = \Gamma, \Xi$ where $\dom{\Gamma} = \FV{\M} \setminus \{ \x \}$.
By Lemma~\ref{lem:invsub}, there is $\Delta \der \N: \tau$ such that $\Gamma, \x: \tau \der \M: \A$. Let $\Delta' \der \N': \tau$ be a copy of $\Delta \der \N: \tau$, such that $\Gamma \# \Delta'$: then we can build the following derivation
\[ \infer[(\lin E)]{\Gamma, \Delta' \der (\lam \x. \M) \N': \A}{\infer[(\lin I)]{\Gamma \der \lam \x. \M: \tau \lin \A}{\Gamma, \x: \tau \der \M: \A}  &  \Delta' \der \N': \tau} \]
and, since $(\lam \x. \M) \N$ is an instance of $(\lam \x. \M) \N'$, the result follows by Remark~\ref{rem:instance}.\\
Otherwise, let $\sigma = \str{\sigma}{n}$. By Property~\ref{prop:stra}, $\Pi \dem \Gamma \der \M [ \N / \x ]: \str{\sigma}{n}$ implies there are derivations $\Pi_{i} \dem \Gamma_{i} \der \R: \sigma_{i}$ $(1 \leq i \leq n)$
such that $\Pi$ is obtained by an application of rule $(st)$ to $(\Pi_{i})_{i=1}^{n}$, followed by a renaming sequence $\delta$.
By applying sequence $\delta$ to $\Pi_{i}$ we obtain ${ \Gamma''_{i} \der \M [ \N / \x ] : \sigma_{i} }$, for $1 \leq i \leq n$.
By inductive hypothesis there are $\Gamma'_{i}$ such that $\Gamma'_{i} \der (\lam \x. \M) \N : \sigma_{i}$, for $1 \leq i \leq n$: then by applying rule $(st)$ to such derivations we obtain $\Gamma' \der (\lam \x. \M) \N: \sigma$, where $\Gamma' = \sm{\cup}{i=1}{n} \stra{\Gamma'_{i}}$.
\end{proof}

\setcounter{equation}{0}

Now we can finally prove the desired result.

\begin{theorem} \label{cor:str2} If $\M$ is strongly normalizing, then $\M$ is typable in \STR.
\end{theorem}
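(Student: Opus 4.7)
The plan is to proceed by induction on the derivation that $\M \in \SN$, using the three inference rules in the inductive definition of $\SN$. In each case the induction hypothesis supplies typings for the premise terms, and we construct a typing for the conclusion term.

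For rule (1), $\M \equiv \x \M_{1} \ldots \M_{n}$: by IH each $\M_{i}$ has a typing $\Delta_{i} \der \M_{i}: \sigma_{i}$. Start from the axiom $\x: \sigma_{1} \lin \ldots \lin \sigma_{n} \lin \tvar \der \x: \sigma_{1} \lin \ldots \lin \sigma_{n} \lin \tvar$ for a fresh type variable $\tvar$, take copies $\M'_{i}$ of each $\M_{i}$ (with Lemma~\ref{rem:instance}) so that the $\Delta'_{i}$ become pairwise disjoint and disjoint from $\{\x\}$, and apply rule $(\lin E)$ successively to reach $\x \M'_{1} \ldots \M'_{n} : \tvar$; finally, a renaming sequence of $(m)$-rules contracts the renamed variables back, yielding a typing of $\x \M_{1} \ldots \M_{n}$.

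For rule (2), $\M \equiv \lam \x. \M'$: by IH there is $\Gamma \der \M': \sigma$. If $\sigma$ is stratified, Corollary~\ref{cor:stra} supplies a linear type $\A$ and a context $\Gamma'$ with $\Gamma' \der \M': \A$. If $\x \in \dom{\Gamma'}$, say $\Gamma' = \Gamma'', \x: \tau$, rule $(\lin I)$ gives $\Gamma'' \der \lam \x. \M': \tau \lin \A$; otherwise, first weaken by $\x: \tvarb$ using Property~\ref{prop:weak} (for a fresh $\tvarb$) and then apply $(\lin I)$.

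The hard case is rule (3), $\M \equiv (\lam \x. \M') \N \M_{1} \ldots \M_{n}$: the IH gives typings of $\M'[\N/\x] \M_{1} \ldots \M_{n}$ and of $\N$. Lemma~\ref{lem:subexp} treats only the case $n=0$; one cannot directly apply it to $(\M' \M_{1} \ldots \M_{n})[\N/\x]$, since that would expand to the wrong term $(\lam \x. \M' \M_{1} \ldots \M_{n}) \N$. The plan is to prove, by induction on $n$, an auxiliary generalized subject-expansion lemma: \emph{if $\Theta \der \M'[\N/\x] \M_{1} \ldots \M_{n}: \sigma$ and $\N$ is typable, then $(\lam \x. \M') \N \M_{1} \ldots \M_{n}$ is typable}. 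The base case is Lemma~\ref{lem:subexp}. For the inductive step, reduce to the linear case via Property~\ref{prop:stra} (treating each component of a stratified type separately and restratifying at the end with rule $(st)$); then apply Lemma~\ref{lem:gen}(4) to the outermost application to obtain typings of an instance of $\M'[\N/\x] \M_{1} \ldots \M_{n-1}$ at an arrow type and of an instance of $\M_{n}$; invoke the inner IH on the head part, reassemble with rule $(\lin E)$, and use Lemma~\ref{rem:instance} to absorb the instance-renaming via $(m)$. Applying this auxiliary lemma to the IH for case (3) yields the required typing.

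The main obstacle is the bookkeeping in case (3): the Generation Lemma is stated for linear result types and produces typings for \emph{instances} of the subterms, so one must combine the stratified-to-linear reduction, the instance-to-term renaming via $(m)$, and the cleanness discipline when reapplying rule $(\lin E)$. The other cases are routine once the right choice of fresh type variables and disjoint copies is made.
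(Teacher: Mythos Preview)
Your treatment of cases (1) and (2) matches the paper's essentially verbatim (the paper uses Corollary~\ref{cor:stra} explicitly in case (1) to obtain linear argument types, but your choice of possibly stratified $\sigma_i$ works too, since $\sigma_1 \lin \cdots \lin \sigma_n \lin \tvar$ is still a linear type and hence admissible in $(Ax)$).

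Case (3) is where you diverge from the paper. The paper does not prove an auxiliary lemma by induction on $n$; instead it unwinds the whole spine at once: it applies the Generation Lemma iteratively to $\Gamma \der \M'[\N/\x]\M_{1}\cdots\M_{n}:\A$, exposing the head subderivation $\Theta_{1}\der \R_{1}:\sigma^{1}_{1}\lin\cdots\lin\sigma^{1}_{n}\lin\B_{1}$ (with $\M'[\N/\x]$ an instance of $\R_{1}$), passes to $\M'[\N/\x]$ via Lemma~\ref{rem:instance}, applies Lemma~\ref{lem:subexp} \emph{once} to obtain $(\lam\x.\M')\N$ at that same arrow type, and then rebuilds the spine reusing the original argument subderivations and (the renaming part of) the intermediate renaming/quantifier sequences. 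Your inductive peeling is a legitimate alternative, and arguably cleaner to state.

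One point to tighten: your auxiliary lemma as written (``$(\lam\x.\M')\N\M_{1}\cdots\M_{n}$ is typable'') is too weak to close its own induction. In the step you peel off $\M_{n}$, obtain a typing of (an instance of) $\M'[\N/\x]\M_{1}\cdots\M_{n-1}$ at some arrow type $\tau\lin\A$, invoke the inner IH, and then ``reassemble with $(\lin E)$''. That reassembly needs the IH to return $(\lam\x.\M')\N\M_{1}\cdots\M_{n-1}$ typed \emph{at $\tau\lin\A$}, not merely typable. So the auxiliary lemma must be stated as: from $\Theta\der\M'[\N/\x]\M_{1}\cdots\M_{n}:\sigma$ and $\N$ typable, there exists $\Theta'$ with $\Theta'\der(\lam\x.\M')\N\M_{1}\cdots\M_{n}:\sigma$. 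The base case (Lemma~\ref{lem:subexp}) already has this type-preserving form, and your inductive step then goes through. With that strengthening your argument is correct.
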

\begin{proof}
For each of the three rules defining the set \SN, we show that if the premises of the rule are typable then the conclusion is typable in \STR.

Let us consider rule (1), so ${ \M \equiv \x\N_{1}...\N_{n} }$. Let ${\x',\N'_{1}, ... , \N'_{n}}$ be instances of ${\x,\N_{1},...,\N_{n}}$ respectively, such that ${ \x' \not\in \sm{\cup}{i=1}{n} \FV{ \N'_{i} } }$ and $j \not= h$ implies ${\FV{\N'_{j}} \cap \FV{\N'_{h}} = \emptyset}$, for ${1 \leq j, h \leq n}$: we prove that there is a derivation, assigning to $\x'\N'_{1}...\N'_{n}$ a linear type, from which we can derive a typing for $\x\N_{1}...\N_{n}$.
If $n=0$, then the proof is obvious.
Otherwise, by inductive hypothesis on $\N'_1, ..., \N'_n$ and Corollary~\ref{cor:stra}, there are $\Gamma_i' \der \N'_i : \A_i$, for $1 \leq i \leq n$. Let ${\A = \A_1 \lin \A_2 \lin ... \lin \A_n \lin \B}$; then we can build
\[ \infer[(\lin E)]{\x': \A, \Gamma'_1, ..., \Gamma'_n \der \x' \N'_1 ... \N'_n : \B}{{\infer[(\lin E)]{\vdots}{\infer[(\lin E)]{\x': \A, \Gamma'_1 \der \x' \N'_1 : \A_2 \lin ... \lin \A_n \lin \B}{\infer[(Ax)]{\x': \A \der \x': \A}{} & \Gamma'_1 \der \N'_1 : \A_1} &\Gamma'_2 \der \N'_2 : \A_2}}} \]
and, since $\x \N_{1} ... \N_{n}$ is an instance of $\x' \N'_{1} ... \N'_{n}$, the desired derivation follows by Remark~\ref{rem:instance}.

Now let us consider rule (2), so $\M \equiv \lambda \x. \N$. Since $\N \in \SN$, by inductive hypothesis on $\N$ there is a derivation $\Gamma\der \N : \B$. If $\x \in \FV{\N}$, then $\Gamma = \Gamma', \x: \tau$, so the desired derivation follows by one application of rule $(\lin I)$ to $\Pi'$; otherwise, we first apply a rule $(w)$ with range $\{\x\}$, followed by an application of rule $(\lin I)$ to abstract over $\x$.

Finally, let us consider rule (3), so ${ \M\equiv(\lam \x. \M) \N \M_{1} ... \M_{n} }$.
By induction, both ${ \M [\N/\x] \M_{1} ... \M_{n} }$ and $\N$ are typable.
By Lemma~\ref{lem:gen}, the derivation proving ${\Gamma \der \M[ \N / \x] \M_{1} ... \M_{n}: \A}$ is
\[ \infer=[\delta_{n}]{\Gamma \der \M[ \N / \x] \M_{1} ... \M_{n}: \A }{\deduce[\vdots]{\Theta_{n}, \Delta_{n} \der \R_{n} \M^{n}_{1} ... \M^{n}_{n}: \B_{n} }{\infer=[\delta_{2}]{\Theta_{3} \der \R_{3} \M^{3}_{1} \M^{3}_{2}: \sigma^{3}_{3} \lin ... \lin \sigma^{3}_{n} \lin \B_{3}}{\infer[(\lin E)]{\Theta_{2}, \Delta_{2} \der \R_{2} \M^{2}_{1} \M^{2}_{2}: \sigma^{2}_{3} \lin ... \lin \sigma^{2}_{n} \lin \B_{2}}{\infer=[\delta_{1}]{\Theta_{2} \der \R_{2} \M^{2}_{1}: \sigma^{2}_{2} \lin ... \lin \sigma^{2}_{n} \lin \B_{2}}{\infer[(\lin E)]{\Theta_{1}, \Delta_{1} \der \R_{1} \M^{1}_{1}: \sigma^{1}_{2} \lin ... \lin \sigma^{1}_{n} \lin \B_{1}}{\Theta_{1} \der \R_{1}: \sigma^{1}_{1} \lin ... \lin \sigma^{1}_{n} \lin \B_{1}   &   \Delta_{1} \der \M^{1}_{1}: \sigma^{1}_{1} }} & \Delta_{2} \der \M^{2}_{2}: \sigma^{2}_{2}}}}} \]
where $\M [ \N / \x ]$ is an instance of $\R_{i}$, $\M_{i}$ is an instance of $\M^{j}_{i}$ and $\delta_{i}$ is a renaming and quantifier sequence, for $1 \leq i \leq n$ and $i \leq j \leq n$.
We assume w.l.o.g. that ${ \FV{\M} \cap \sm{\cup}{i=1}{n} \FV{\M^{i}_{i}} = \emptyset }$ and ${\FV{\N} \cap \sm{\cup}{i=1}{n} \FV{\M^{i}_{i}} = \emptyset }$.
By Lemma~\ref{lem:riarrange}, each sequence $\delta_{i}$ can be rearranged in such a way that the applications of quantifier rules precede the applications of renaming rules; let $\delta'_{i}$ be such renaming sequence.
Since $\M [ \N/ \x]$ is an instance of $\R_{1}$, by Remark~\ref{rem:instance} we can build ${\Theta \der \M[\N / \x]: \sigma^{1}_{1} \lin ... \lin \sigma^{1}_{n} \lin \B_{1}}$ from ${\Theta_{1} \der \R_{1}: \sigma^{1}_{1} \lin ... \lin \sigma^{1}_{n} \lin \B_{1}}$.
By hypothesis $\N$ is typable in \STR, so by Lemma~\ref{lem:subexp} there is $\Theta'$ such that ${\Theta' \der (\lam \x. \M) \N: \sigma^{1}_{1} \lin ... \lin \sigma^{1}_{n} \lin \B_{1}}$; then we can build
\[ \infer=[\delta'_{n}]{\Theta', \Delta_{1}, ... , \Delta_{n} \der (\lam \x. \M) \N \M^{1}_{1} ... \M^{n}_{n}: \A }{\deduce[\vdots]{\Theta', \Delta_{1}, ... , \Delta_{n} \der (\lam \x. \M) \N \M^{1}_{1} ... \M^{n}_{n}: \B_{n} }{\infer=[\delta'_{2}]{\Theta', \Delta_{1}, \Delta_{2} \der (\lam \x. \M) \N \M^{1}_{1} \M^{2}_{2}: \sigma^{3}_{3} \lin ... \lin \sigma^{3}_{n} \lin \B_{3}}{\infer[(\lin E)]{\Theta', \Delta_{1}, \Delta_{2} \der (\lam \x. \M) \N \M^{1}_{1} \M^{2}_{2}: \sigma^{2}_{3} \lin ... \lin \sigma^{2}_{n} \lin \B_{2}}{\infer=[\delta'_{1}]{\Theta', \Delta_{1} \der (\lam \x. \M) \N \M^{1}_{1}: \sigma^{2}_{2} \lin ... \lin \sigma^{2}_{n} \lin \B_{2}}{\infer[(\lin E)]{\Theta', \Delta_{1} \der (\lam \x. \M) \N \M^{1}_{1}: \sigma^{1}_{2} \lin ... \lin \sigma^{1}_{n} \lin \B_{1}}{\Theta' \der (\lam \x. \M) \N: \sigma^{1}_{1} \lin ... \lin \sigma^{1}_{n} \lin \B_{1}   &   \Delta_{1} \der \M^{1}_{1}: \sigma^{1}_{1} }} & \Delta_{2} \der \M^{2}_{2}: \sigma^{2}_{2}}}}} \]
Since $(\lam \x. \M) \N \M_{1} ... \M_{n}$ is an instance of $(\lam \x. \M) \N \M^{1}_{1} ... \M^{n}_{n}$, the proof desired derivation follows by Remark~\ref{rem:instance}.
\end{proof}
\section{Polynomial characterization} \label{sec:poly}
In this section we prove that \STR\ is sound and complete with respect to \FPTIME; therefore,
while having more typability power,
\STR\ characterizes exactly the same functions as the Soft Type Assignment System \STA\ \cite{GaboardiRonchi07csl}, 
which was proved to be sound and complete with respect to \FPTIME. 

\subsection{From \STA\ to \STR} \label{subsec:trad}

Let us briefly recall the type assignment system \STA.

\begin{definition}
\begin{itemize}
\item The set $\TS$ of \STA-types is defined by the following syntax:
$$\U::= \tvar \mid \mu \lin \U \mid \forall \tvar.\U \quad \mbox{ linear types}$$ 
$$\mu, \nu ::= \U \mid !\mu \quad \mbox{ modal types }$$
\item Derivations in \STA\ assign \STA-types to $\lambda$-terms. The rules are given in Table~\ref{table:sta}.
We extend to \STA\ the notations we already introduced for \STR.
\item Measures in \STA\  are defined in a similar way as in \STR. 
\begin{itemize}
\item The \textbf{rank} of a multiplexor rule $(m)$
$$\infer[(m)]{\Theta, \x:!\mu\ders \sub{\M}{\x}{\x_{1},...,\x_{n}}:\nu}
{\Theta,\x_1:\mu,\ldots,\x_n:\mu\ders \M:\nu}$$
is the cardinality of the set $\{ \x_{1}, ... , \x_{n} \} \cap \FV{\M}$.
Let $r$ be the maximum rank of all rules $(m)$ in $\Pi$; then the rank $\rks{\Pi}$ of $\Pi$ is the maximum between $1$ and $r$.

\item The \textbf{degree} of a proof $\Pi$, denoted by $\DS{\Pi}$, is the maximal nesting of applications of the $(sp)$ rule in $\Pi$, i.e. the maximal number of applications of the $(sp)$ rule in any path connecting the conclusion with some axiom of $\Pi$.

\item Let $r$ be a positive integer. The \textbf{weight} $\WS{\Pi}{r}$ of $\Pi$ with respect to $r$ is defined inductively as follows:
\begin{itemize}
\item if $\Pi$ ends with an application of rule $(Ax)$, then $\WS{\Pi}{r} = 1$;
\item if $\Pi$ ends with an application of rule $(\lin I)$ with premise $\Pi'$, then $\WS{\Pi}{r} = \WS{\Pi'}{r} + 1$;
\item if $\Pi$ ends with an application of rule  $(\lin E)$ with premises $\Pi_1$ and $\Pi_2$, then $\WS{\Pi}{r} = \WS{\Pi_1}{r} + \WS{\Pi_2}{r} + 1$;
\item if $\Pi$ ends with an application of rule $(sp)$ with premise $\Pi'$, then $\WS{\Pi}{r} = r \cdot \WS{\Pi'}{r}$
\item in every other case, $\WS{\Pi}{r} = \WS{\Pi'}{r}$ where $\Pi'$ is the premise of the rule.
\end{itemize}

\end{itemize}
\end{itemize}
\end{definition}

\begin{table}
\begin{center}
\begin{tabular}{|c|}
\hline
\\
\small$
\infer[(Ax)]{\x: \U\ders \x: \U}{}
\quad
\infer[(w)]{\Theta,\x: \U \ders \M:\mu}{\Theta \ders \M:\mu &\x \not \in dom{\Theta} }
\quad
\infer[(\lin I)]{\Theta\ders \lambda \x.\M:\mu\lin \U}{\Theta, \x:\mu\ders \M:\U }
$
\\
\\
$\infer[(\lin E)]{\Theta,\Xi\ders \M\N:\U}{\Theta\ders \M:\mu\lin \U & \Xi \ders \N:\mu & \Theta \# \Xi}
\quad 
\infer[(\forall I)]{\Theta\ders {\tt M}:\forall \tvar .\U}
{\Theta\ders \M:\U & \tvar\not \in FV(\Theta)}
$
\\
\\
$
\infer[(\forall E)]{\Theta\ders\M:\sub{\B}{\U}{\tvar}}
{\Theta\ders \M:\forall \tvar .\B }
\quad
\infer[(m)]{\Theta, \x:!\mu\ders \sub{\M}{\x}{\x_{1},...,\x_{n}}:\nu}
{\Theta,\x_1:\mu,\ldots,\x_n:\mu\ders \M:\nu}
\quad
\infer[(sp)]{!\Theta\ders \M:!\mu}{\Theta\der \M:\mu}
$\normalsize
\\
\\
\hline
\end{tabular}
\end{center}
\caption{The Soft Type Assignment (\STA) system}
\label{table:sta}
\end{table} 

Here we recall the key technical property of \STA, which is very similar to the property for \STR\ proved in Theorem~\ref{th:polystep}.
\begin{property}{\rm \cite{GaboardiRonchi07csl}}
Let $\Pi \dem \Theta \ders \M: \mu$ and $\M\redbetas\M'$ in $m$ steps. Then
\begin{enumerate}
\item $m \leq |\M|^{\DS{\Pi} + 1}$.
\item $|\M'| \leq |\M|^{\DS{\Pi} + 1}$.
\end{enumerate}
\end{property}

A translation from \STA\ to \STR\ can be obtained in a straightforward way: indeed, \STA\ can be seen as a restriction of \STR, where only variables with the same type can be contracted.

\begin{definition}\label{def:SI}\ 
\begin{itemize}
\item The translation $\SI{.}$ from $\TS$ to $\T$ is defined as:
$$\SI{\tvar}=\tvar; \quad \SI{\mu \lin \U}=\SI{\mu}\lin \SI{\U}; \quad \SI{ !\mu}=\stra{\SI{\mu}}$$
\item Let $\Theta$ be a context in \STA. $\SI{\Theta}$ is the context such that $\SI{\Theta}(x)= \SI{\Theta(x)}$.
\end{itemize}
\end{definition}

\begin{lemma}\label{lem:sta-str}
$\Theta \ders \M:\mu$ implies $\SI{\Theta} \der \M: \SI{\mu}$.
\end{lemma}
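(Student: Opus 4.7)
The proof proceeds by straightforward induction on the \STA\ derivation $\Pi \dem \Theta \ders \M : \mu$, matching each \STA\ rule with its \STR\ counterpart after applying the translation $\SI{\cdot}$ pointwise to the context and type. The rules $(Ax)$, $(\lin I)$, $(\lin E)$, $(\forall I)$ and $(\forall E)$ are present in both systems with the same shape, so those cases are immediate from the inductive hypothesis and the equality $\SI{\mu \lin \U} = \SI{\mu} \lin \SI{\U}$ (and $\SI{\sub{\B}{\U}{\tvar}} = \sub{\SI{\B}}{\SI{\U}}{\tvar}$, which is established by an easy auxiliary induction on types).

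The two cases that require a short verification are $(w)$ and $(m)$. For $(w)$, \STA\ allows weakening by an arbitrary modal type $\mu$, while \STR\ rule $(w)$ only introduces a linear type; however, Property~\ref{prop:weak} generalises weakening to arbitrary types in \STR, so a single invocation of Property~\ref{prop:weak} with $\tau \equiv \SI{\mu}$ yields the desired derivation. For $(m)$, the \STA\ premise $\Theta, \x_1:\mu, \ldots, \x_n:\mu \ders \M : \nu$ gives by induction a derivation $\SI{\Theta}, \x_1 : \SI{\mu}, \ldots, \x_n : \SI{\mu} \der \M : \SI{\nu}$; applying \STR's $(m)$ produces $\SI{\Theta}, \x : \bigcup_{i=1}^n \stra{\SI{\mu}} \der \M[\x/\x_i]_{i=1}^n : \SI{\nu}$, and since stratified types are sets modulo $=$, we have $\bigcup_{i=1}^n \stra{\SI{\mu}} = \stra{\SI{\mu}} = \SI{!\mu}$, which is exactly the conclusion required.

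For the promotion rule $(sp)$ of \STA, the translation exploits the fact that \STR's $(st)$ rule, instantiated with $n=1$, degenerates into a ``soft promotion'' on types and contexts. From the induction hypothesis $\SI{\Theta} \der \M : \SI{\mu}$, one application of $(st)$ with the single premise $\SI{\Theta} \der \M : \SI{\mu}$ yields $\stra{\SI{\Theta}} \der \M : \stra{\SI{\mu}}$. Since $\SI{!\Theta}(\x) = \stra{\SI{\Theta(\x)}} = \stra{\SI{\Theta}}(\x)$ for every $\x \in \dom{\Theta}$ and $\SI{!\mu} = \stra{\SI{\mu}}$, this is precisely $\SI{!\Theta} \der \M : \SI{!\mu}$.

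The only step I expect to require any care is the $(m)$ case, because one must notice that multiple singletons of the same stratified pre-type collapse to a single component under the set-equivalence $=$ on stratified types; once this is observed, the correspondence between \STA's uniform contraction and \STR's more liberal multiplexor is transparent. All remaining cases are mechanical, so the lemma follows.
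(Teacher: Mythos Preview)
Your proof is correct and follows the same inductive strategy as the paper, which gives only a terse proof sketch highlighting the $(sp)$ case. One minor inaccuracy: \STA's $(w)$ rule, as given in Table~\ref{table:sta}, introduces a \emph{linear} type $\U$, not an arbitrary modal type, so that case is in fact immediate via \STR's $(w)$ rule without recourse to Property~\ref{prop:weak}; your argument still goes through, it is just slightly heavier than needed.
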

\begin{proof}
By induction on the derivation. If the last rule is $(Ax)$ the proof is trivial; all other cases follow easily by induction and by applying the respective rule in \STR;
in particular, if $\Pi$ ends with an application of rule $(sp)$ to $\Pi'$, then by induction hypothesis $\SI{\Pi'} \dem \SI{\Theta} \ders \M: \SI{\nu}$, so $\SI{\Pi}$ is obtained by applying rule $(st)$ to $\SI{\Pi'}$.
\end{proof}

\subsection{Polynomial soundness and completeness} \label{subsec:W}

We represent natural numbers in binary notation, following the Church representation of binary words, in which the natural number $0$ is represented by the term $\num{0} \equiv \lambda \s{0} \s{1} \x. \x$ and the natural number $n$ is represented by the term
$\num{n}\equiv\lambda \s{0} \s{1} \x. \s{i_{1}}( ... (\s{i_{m}} \x) ... )$, where the binary representation of $n$ is $\langle i_{1}...i_{m} \rangle$ ($i_{j} \in \{ 0,1 \}$ for $1 \leq j \leq m$).
\begin{example}
The natural number $6$ is represented by $\num{6} \equiv \lam \s{0} \s{1} \x. \s{1} (\s{1} (\s{0} \x))$, while $9$ is represented by $\num{9} \equiv \lam \s{0} \s{1} \x. \s{1} (\s{0} (\s{0} (\s{1} \x)))$.
\end{example}

In \STA, natural numbers in binary notation can be assigned both the uniform type ${ \Wtype= \forall \tvar.!(\tvar \lin \tvar) \lin !(\tvar \lin \tvar) \lin \tvar \lin \tvar }$ and a parametric type
${ \Wtypei{n}{m}=\forall \tvar.!^{n}(\tvar \lin \tvar) \lin !^{m}(\tvar \lin \tvar) \lin \tvar \lin \tvar }$, for any $n \geq 1$ and $m \geq 1$; note that $\Wtype = \Wtypei{1}{1}$. It is easy to check that every derivation $\Pi \dem \emptyset \ders \num{w}: \Wtypei{n}{m}$, for $n \geq 1$ and $m \geq 1$, is such that $\DS{\Pi} = 0$: indeed, the polynomiality of \STA\ depends on this very property.

The parametric numeral types play an essential role, since a term representing a numerical function can have different parameters for its input(s) and output types; in that case, the iteration of such functions is forbidden, with the result that terms representing non-polynomial functions (like exponentiation) cannot be typed. 
Analogously, binary numbers can be assigned in \STR\ both the uniform type ${ \Witype = \forall \tvar.\stra{\tvar \lin \tvar} \lin \stra{\tvar \lin \tvar} \lin \tvar \lin \tvar }$ and a parametric type 
${ \Witypei{n}{m}=\forall \tvar.\{ \tvar \lin \tvar\}^{n} \lin \{ \tvar \lin \tvar\}^{m} \lin \tvar \lin \tvar }$, for any $n \geq 1$ and $m \geq 1$. Again, it is easy to check that any derivation $\Pi \dem \emptyset \ders \num{w}:\Witypei{n}{m}$ is such that $\D{\Pi} = 0$, for $n \geq 1$ and $m \geq 1$.

Now we are able to formally define the representation of functions in both \STA\ and \STR. This definition is a straightforward extension of the classical definition of $\lambda$-representation of functions \cite{Barendregt84, ronchi04book} in a typed setting \cite{BucciaPiper03, GaboardiRonchi07csl}.
The additional power of \STR\ with respect to \STA\ is displayed by the fact that, while in \STA\ every input data must be assigned a parametric type for natural numbers, in \STR\ we allow every input number to have a set of types, with the proviso that all its linear components are numerical types.

\begin{definition}\label{def:num}
Let a {\em program} be a closed term in normal form, and let ${ \phi : \Num^{p}\longrightarrow \Num }$ be a function of arity $p$. 
\begin{enumerate}[(i)]
\item\label{def:numSTA} A program $\M \equiv \lam \x_{1} ... \x_{p}. \R$ represents $\phi$ in \STA\ if and only if:
\begin{itemize}
\item ${\M \num{n_{1}}...\num{n_{p}}=\num{\phi(n_{1},...,n_{p})} }$;
\item ${ \ders \M: !^{i_{1}} \Wtypei{j_{1}}{k_{1}} \lin ... \lin !^{i_p} \Wtypei{j_{p}}{k_{p}} \lin \Wtypei{j}{k} }$, for some ${ j, k, j_{h}, k_{h} }$ ${(1\leq h \leq p)}$.
\end{itemize}
\item\label{def:numSTR} A program $\M \equiv \lam \x_{1} ... \x_{p}. \R$ represents $\phi$ in \STR\ if and only if:
\begin{itemize}
\item ${ \M\num{n_{1}}...\num{n_{p}}=\num{\phi(n_{1},...,n_{p})} }$;
\item ${ \der \M:  \sigma_{1} \lin ... \lin \sigma_{p} \lin \Witypei{h}{k} }$
and
${ \bar{\sigma_{i}} = [ \Witypei{h^{i}_{1}}{k^{i}_{1}},...,\Witypei{h^{i}_{q_{i}}}{k^{i}_{q_{i}}}  ] }$,
for some $h, k, h_{i}, k_{i}, q_{i}, h^{i}_{r}, k^{i}_{r}$ $(1 \leq i \leq p, 1 \leq r \leq q_{i})$.
\end{itemize}
\end{enumerate}
\end{definition}

In \cite{GaboardiRonchi07csl}, the authors proved that \STA\ is sound and complete with respect to \FPTIME; here we exploit the translation from \STA\ to \STR\ in order to extend the completeness result to \STR.

\begin{lemma}[\FPTIME\ completeness]\label{lem:complSTR}
All polynomial functions are representable in \STR.
\end{lemma}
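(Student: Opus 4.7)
The plan is to reduce the statement to the known \FPTIME\ completeness of \STA\ via the translation $\SI{\cdot}$ established in Lemma~\ref{lem:sta-str}. By the completeness result of \cite{GaboardiRonchi07csl}, any polynomial function ${\phi : \Num^{p}\longrightarrow \Num}$ admits a program $\M \equiv \lam \x_{1}...\x_{p}.\R$ representing it in \STA\ in the sense of Definition~\ref{def:num}(\ref{def:numSTA}): we have ${\M\num{n_{1}}...\num{n_{p}} = \num{\phi(n_{1},...,n_{p})}}$ and ${\ders \M : !^{i_{1}} \Wtypei{j_{1}}{k_{1}} \lin ... \lin !^{i_p} \Wtypei{j_{p}}{k_{p}} \lin \Wtypei{j}{k}}$ for suitable indices. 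Since the term $\M$ and the numerals $\num{n_i}$ are literally the same $\lambda$-terms in both systems, the functional equation carries over to \STR\ without modification; only the typing needs to be transported.

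I would then apply Lemma~\ref{lem:sta-str} to the above \STA-judgement to obtain a derivation in \STR. A straightforward structural induction on \STA-types, using $\SI{\tvar} = \tvar$, $\SI{\mu \lin \U} = \SI{\mu}\lin\SI{\U}$ and $\SI{!\mu} = \stra{\SI{\mu}}$, shows that $\SI{\Wtypei{n}{m}} = \Witypei{n}{m}$ and $\SI{!^{n}\rho} = \stra{\rho}^{n}$. Consequently the \STA-type above is translated to
\[ \der \M : \stra{\Witypei{j_{1}}{k_{1}}}^{i_{1}} \lin ... \lin \stra{\Witypei{j_{p}}{k_{p}}}^{i_{p}} \lin \Witypei{j}{k}, \]
which has exactly the shape ${\sigma_{1} \lin ... \lin \sigma_{p} \lin \Witypei{j}{k}}$ required by Definition~\ref{def:num}(\ref{def:numSTR}).

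It remains to verify the constraint on the linear components, namely that ${\bar{\sigma_{i}}}$ is a list of numerical types. Setting ${\sigma_{i} = \stra{\Witypei{j_{i}}{k_{i}}}^{i_{i}}}$, an easy induction on $i_{i}$ using $\overline{\stra{\rho}} = \bar{\rho}$ gives ${\bar{\sigma_{i}} = [\Witypei{j_{i}}{k_{i}}]}$, a singleton whose unique linear component is itself a parametric numeral type. This fits Definition~\ref{def:num}(\ref{def:numSTR}) with ${q_{i} = 1}$, ${h^{i}_{1} = j_{i}}$ and ${k^{i}_{1} = k_{i}}$, concluding the proof.

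I do not anticipate any real obstacle in this argument: the entire polynomial expressivity is inherited for free from \STA\ thanks to Lemma~\ref{lem:sta-str}, and the only computational content is the routine verification that the translation $\SI{\cdot}$ sends the \STA\ shape of a program representing a polynomial function to an \STR\ shape acceptable by Definition~\ref{def:num}(\ref{def:numSTR}).
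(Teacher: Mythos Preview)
Your proposal is correct and follows essentially the same argument as the paper: both invoke the \FPTIME\ completeness of \STA, apply the translation $\SI{\cdot}$ via Lemma~\ref{lem:sta-str}, and check that $\SI{!^{i}\Wtypei{j}{k}} = \{\Witypei{j}{k}\}^{i}$ to recover a type of the shape required by Definition~\ref{def:num}(\ref{def:numSTR}). You add an explicit verification that $\bar{\sigma_i} = [\Witypei{j_i}{k_i}]$, which the paper leaves implicit; otherwise the structure is identical. (Minor slip: you write $\SI{!^{n}\rho} = \stra{\rho}^{n}$ where it should be $\stra{\SI{\rho}}^{n}$, but the conclusion you draw from it is correct.)
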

\begin{proof}
By Definition~\ref{def:num}.(\ref{def:numSTA}), let $\M$ be a term representing the polynomial function $\phi: \Num^{p}\longrightarrow \Num$ in \STA, such that
$\M \num{n_{1}}...\num{n_{p}}\equiv\num{\phi(n_{1},...,n_{p})}$ and
${ \ders \M: !^{i_1} \Wtypei{j_{1}}{k_{1}} \lin ... \lin !^{i_p} \Wtypei{j_{p}}{k_{p}} \lin \Wtypei{j}{k} }$,
for some ${ j, k, j_{h}, k_{h} }$ ${(1\leq h \leq p)}$.\\
By Definition~\ref{def:SI}, we know that ${ \SI{!^{i_s} \Wtypei{j_{s}}{k_{s}}}=\{ \Witypei{j_{s}}{k_{s}}\}^{i_s}}$, for ${1 \leq s \leq p}$, and ${ \SI{\Wtypei{j}{k}} = \Witypei{j}{k} }$: then,
by Lemma~\ref{lem:sta-str}, the above \STA\ derivation is translated to
${ \der \M: \{ \Witypei{j_{1}}{k_{1}} \}^{i_1} \lin ... \lin \{ \Witypei{j_{p}}{k_{p}}\}^{i_p} \lin \Witypei{j}{k} }$,
so $\M$ represents $\phi$ in \STR.
\end{proof}

Let $\M$ be a term representing a numerical function $\phi : \Num^{p}\longrightarrow \Num$ in \STR;
in order to prove the soundness of \STR\ w.r.t. \FPTIME, we show that the reduction of $\M \num{n_{1}} ... \num{n_{p}}$ to its normal form can be performed on a Turing Machine of time polynomial in the size of the input. First we need to introduce the notion of ancestors, to keep track of the axioms introducing a given variable in the context.

\begin{definition}[Ancestors]
Let $\Pi\dem\Gamma \der \M: \tau$ and $\x \in \dom{\Gamma}$; the set of ancestors of $\x$ in $\Pi$, denoted by $A(\x,\Pi)$, is defined inductively as follows:
\begin{itemize}
\item if $\Pi$ is $$\infer[(Ax)]{\x: \A \der \x: \A}{}$$ then $A(\x,\Pi )=\{\x\}$;
\item if $\Pi$ is $$\infer[(w)]{\Gamma, \y: \A \der \M: \tau}{\Pi' \dem \Gamma\der \M:\tau}$$ then  $A(\x,\Pi )=\{\x\}$ if $\y\equiv\x$, $A(\x,\Pi )=A(\x,\Pi' )$ otherwise;
\item if $\Pi$ is $$\infer[(\lin I)]{\Gamma \der \lambda \y.\R:\sigma\lin \A}{\Pi'\dem\Gamma, \y: \sigma \der \R:\A }$$ then $A(\x,\Pi )=A(\x,\Pi' )$;
\item if $\Pi$ is $$\infer[(\lin E)]{\Gamma', \Gamma'' \der \R \Q : \A}{\Pi' \dem \Gamma' \der \R: \sigma \lin \A & \Pi'' \dem \Gamma'' \der \Q: \sigma }$$ then $A(\x,\Pi )=A(\x,\Pi' )$ if ${\x \in \dom{\Gamma'}}$, $A(\x,\Pi )=A(\x,\Pi'' )$ otherwise;
\item if $\Pi$ is $$\infer[(m)]{\Gamma, \y: \sm{\cup}{i=1}{n} \stra{\sigma_{i}} \der \M[\y/\y_{i}]_{i=1}^{n}: \tau}{\Pi' \dem \Gamma, \y_1:\sigma_1,...,\y_n:\sigma_n\der \M: \tau}
$$ then $A(\x,\Pi)= \sm{\cup}{i=1}{n} A(\y_i,\Pi')$ if $\x \equiv \y$, $A(\x, \Pi) = A(\x, \Pi')$ otherwise;
\item if $\Pi$ is
$$\infer[(st)]{\sm{\cup}{i=1}{n} \stra{\Gamma_{i}} \der \M: \str{\sigma}{n}}
{(\Pi_i \dem \Gamma_{i} \der \M:\sigma_{i})_{1 \leq i \leq n} }$$
then $A(\x,\Pi)=\sm{\cup}{i=1}{n} A(\x,\Pi_i)$;
\item if $\Pi$ is
$$\infer[(R)]{\Gamma \der \M: \sigma}{\Pi' \dem\Gamma \der \M':\sigma'}$$
where $(R)$ is a quantifier rule, then $A(\x,\Pi)=A(\x,\Pi')$.
\end{itemize} 
\end{definition}

\begin{theorem}[\FPTIME\ soundness]\label{th:polbound}
Let $\phi : \Num^{p}\longrightarrow \Num$ and let $\M$ be a program representing $\phi$ in \STR; then 
$\M \num{n_{1}}...\num{n_{p}}$ can be evaluated to its normal form on a Turing Machine in time $O(P(n_{1} + ... + n_{p}))$, for some polynomial $P$.
\end{theorem}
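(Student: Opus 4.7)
The plan is to reduce the statement to Lemma~\ref{th:polystep}, by first constructing a type derivation for the application $\M\num{n_{1}}\ldots\num{n_{p}}$ whose degree is bounded by a constant that depends only on $\M$, not on the inputs. Recall that $\der \M: \sigma_{1} \lin \ldots \lin \sigma_{p} \lin \Witypei{h}{k}$ with $\bar{\sigma_{i}}$ consisting only of numeral types $\Witypei{h^{i}_{r}}{k^{i}_{r}}$. For each $i$, I would first type $\num{n_{i}}$ against each linear component of $\sigma_{i}$ with a (degree $0$) derivation of the parametric type $\Witypei{h^{i}_{r}}{k^{i}_{r}}$, and then reconstruct the full stratified type $\sigma_{i}$ by induction on its stratification structure using applications of rule $(st)$. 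The degree of the resulting derivation of $\num{n_{i}} : \sigma_{i}$ equals the stratification depth of $\sigma_{i}$, which is a quantity fixed by $\M$.

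Next I would combine these with the typing of $\M$ by $p$ applications of rule $(\lin E)$. Since $(\lin E)$ does not create nestings of $(st)$, the degree of the resulting derivation $\Pi$ of $\der \M\num{n_{1}}\ldots\num{n_{p}} : \Witypei{h}{k}$ is at most $\max(\D{\Pi_{\M}}, d_{1}, \ldots, d_{p})$, where $\Pi_{\M}$ is the fixed derivation of $\M$ and $d_{i}$ is the stratification depth of $\sigma_{i}$; call this constant $D$. Crucially, $D$ is independent of $n_{1},\ldots,n_{p}$. On the other hand, the size $|\M\num{n_{1}}\ldots\num{n_{p}}|$ is bounded linearly by $|\M| + \sum_{i} |\num{n_{i}}| + p$, and since $|\num{n}|$ grows linearly in the number of bits of $n$, this size is $O(|\M| + \log n_{1} + \cdots + \log n_{p})$, which is in particular $O(n_{1} + \cdots + n_{p})$.

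Applying Lemma~\ref{th:polystep} to $\Pi$, the number of $\beta$-reduction steps needed to reach the normal form $\num{\phi(n_{1},\ldots,n_{p})}$ is at most $|\M\num{n_{1}}\ldots\num{n_{p}}|^{D+1}$, and every intermediate reduct has size bounded by the same quantity. Both bounds are polynomials (of degree $D+1$, a fixed constant) in $n_{1}+\cdots+n_{p}$.

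Finally, I would invoke the standard fact that a single $\beta$-reduction on a term of size $s$ can be simulated on a deterministic Turing Machine in time polynomial in $s$ (locate the leftmost redex, perform the capture-avoiding substitution on the tape representation). Composing this per-step cost with the polynomial bound on the number of steps and on the size of every intermediate term yields an overall Turing Machine running time of the form $O(P(n_{1} + \cdots + n_{p}))$, as claimed. The only delicate point is making precise the claim that $D$ is independent of the inputs: this is where the separation between the fixed type skeleton of $\M$ and the input-dependent numerical content matters, and where the construction of the derivations for the numerals (via $(st)$ only, with the $\Witypei{h^{i}_{r}}{k^{i}_{r}}$-derivations themselves of degree $0$) has to be carried out uniformly in $n_{i}$. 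Everything else is a direct composition of the already-established weighted reduction bound with a standard simulation argument.
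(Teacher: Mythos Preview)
Your proposal is correct and follows the same overall strategy as the paper: build a derivation for $\M\num{n_1}\ldots\num{n_p}$ whose degree is a constant depending only on $\M$, then invoke Lemma~\ref{th:polystep} and the standard quadratic Turing-machine simulation of a single $\beta$-step.

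The one noteworthy difference is in how the input-independent degree bound is obtained. You work directly with the $(\lin E)$-assembled derivation and bound its degree by $\max(\D{\Pi_{\M}},d_1,\ldots,d_p)$, where $d_i$ is the stratification depth of $\sigma_i$; this is clean and needs nothing beyond the definition of degree. The paper instead builds the same $(\lin E)$ derivation, but then invokes the Substitution Lemma to pass to a derivation $\Pi$ of $\R[\num{n_i}/\x_i]$ and argues, via the auxiliary notion of \emph{ancestors} of a context variable, that the substitution amounts to replacing each axiom $\y:\Witypei{h^i_t}{k^i_t}\der\y:\Witypei{h^i_t}{k^i_t}$ in $\Phi$ by a degree-$0$ numeral derivation, whence $\D{\Pi}=\D{\Phi}$ exactly. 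This buys a possibly sharper exponent (since the $d_i$ need not be bounded by $\D{\Phi}$), at the cost of extra machinery; for the qualitative statement of the theorem both routes are equally valid, and yours is arguably the more elementary.
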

\begin{proof}
By Definition~\ref{def:num}.(\ref{def:numSTR}), if a program $\M \equiv \lam \x_{1} ... \x_{p}. \R$ represents $\phi$ in \STR, then ${\M \num{n_{1}}...\num{n_{p}}\equiv\num{\phi(n_{1},...,n_{p})} }$ and ${\der \M: \sigma_{1} \lin ... \lin \sigma_{p} \lin \Witypei{h}{k} }$, where
${ \bar \sigma_{i}= [ \Witypei{h^{i}_{1}}{k^{i}_{1}},...,\Witypei{h^{i}_{q_{i}}}{k^{i}_{q_{i}}}  ] }$
for some $h, k, h_{i}, k_{i}, q_{i}, h^{i}_{r}, k^{i}_{r}$ $({1 \leq i \leq p,} {1 \leq r \leq q_{i}})$.
By Lemma~\ref{lem:gen}, we can assume w.l.o.g that ${\M: \sigma_{1} \lin ... \lin \sigma_{p} \lin \Witypei{h}{k} }$ ends with $p$ applications of rule $(\lin I)$ with premise ${\Phi \dem \x_{1}: \sigma_{1}, ... , \x_{p}: \sigma_{p} \der \R: \Witypei{h}{k} }$.
Then we can build the following derivation:
\[ \infer[(\lin E)] {\der \M \num{n_{1}}...\num{n_{p}} : \Witypei{h}{k} }{ \infer[(\lin E)]{\deduce{ \der \M \num{n_{1}}...\num{n_{p-1}}  : \sigma_{p} \lin \Witypei{h}{k}}{\vdots}}{\infer=[(\lin I)]{\der \M: \sigma_{1} \lin ...\lin \sigma_{p} \lin \Witypei{h}{k} }{\Phi \dem \x_{1}: \sigma_{1}, ... , \x_{p}: \sigma_{p} \der \R: \Witypei{h}{k} } & \Phi_{1} \der \num{n_{1}}: \sigma_{1}} & \Phi_{p} \der \num{n_{p}}: \sigma_{p} } \]
where each $\Phi_{i}$ ${(1 \leq i \leq p)}$ is obtained from derivations ${ \der \num{n_{i}}: \Witypei{h^{i}_{t}}{k^{i}_{t}} }$ ${(1 \leq t\leq q_i)}$, each of depth $0$, by a suitable sequence of applications of rule $(st)$.\\
By Lemma~\ref{lem:subs}, there is a derivation ${\Pi \dem \der \R [ \num{n_{i}} / \x_{i} ]_{i=1}^{p}: \Witypei{h}{k} }$.
By observing the proof of Lemma~\ref{lem:subs}, it is easy to see that such derivation is obtained by replacing axiom ${\y: \Witypei{h^{i}_{t}}{k^{i}_{t}} \der \y: \Witypei{h^{i}_{t}}{k^{i}_{t}} }$ of $\Phi$, for each $\y \in A(\x_{i}, \Phi)$, with derivation ${ \der \num{n_{i}}: \Witypei{h^{i}_{t}}{k^{i}_{t}} }$ of depth $0$, for $1 \leq i \leq p$ and ${1 \leq t\leq q_{i}}$; therefore $\D{\Pi} = \D{\Phi}$, so $\D{\Pi}$ does not depend on the size of the input.\\
Let ${ \M \num{n_{1}} ... \num{n_{p}} \redbetas \R[ \num{n_{i}} / \x_{i} ]_{i=1}^{p} \redbetas \num{\phi(n_{1},...,n_{p})} }$ in $m$ $\beta$-reduction steps;
by Theorem~\ref{th:polystep}, $m\leq |\M \num{n_{1}}...\num{n_{p}} |^{(\D{\Pi}+1)}$ and each intermediate term $\N$ in the reduction sequence is such that ${ |\N| \leq |\M \num{n_{1}}...\num{n_{p}} |^{(\D{\Pi}+1)} }$.
Since a $\beta$-reduction step $\N \redbeta \N'$ can be simulated in time $O(|\N|^2)$ on a Turing machine (see \cite{DBLP:conf/lics/Terui01}), each reduction step takes a time $O( |\M \num{n_{1}}...\num{n_{p}} |^{2(\D{\Pi}+1)}$. Then, since 
%${\num{\phi(n_{1},...,n_{p})} \leq |\M \num{n_{1}}...\num{n_{p}} |^{(\D{\Pi}+1)} }$ and 
the size of the program is a constant with respect to the computation,
the conclusion follows.
\end{proof}

%Therefore, when we consider programs represented by closed $\lambda$-terms in normal form, the application of a program $\M$ to any data $\num{n_{1}},...,\num{n_{p}}$ produces the expected result in a number of steps which is polynomial in the size of the input data.
Since \STR\ is both sound and complete w.r.t. \FPTIME, the following characterization result hold.

\begin{corollary}[\FPTIME\ Characterization]
\STR\ characterizes \FPTIME.
\end{corollary}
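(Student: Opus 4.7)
The plan is straightforward: this is a corollary that simply packages the two directions of the equivalence between \STR-representability (in the sense of Definition~\ref{def:num}.(\ref{def:numSTR})) and membership in \FPTIME, both of which have already been established in this section.

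First, for the containment \FPTIME $\subseteq$ \STR, I would appeal directly to Lemma~\ref{lem:complSTR}: given any polynomial-time computable function $\phi : \Num^{p} \to \Num$, that lemma provides a term $\M$ representing $\phi$ in \STR, obtained by translating the \STA-representation (known to exist by \cite{GaboardiRonchi07csl}) via Definition~\ref{def:SI} and Lemma~\ref{lem:sta-str}. No further work is required here, since the output of the translation already satisfies the type shape required in Definition~\ref{def:num}.(\ref{def:numSTR}).

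Second, for the converse containment \STR $\subseteq$ \FPTIME, I would invoke Theorem~\ref{th:polbound}: any program $\M$ representing a function $\phi$ in \STR\ admits a derivation of fixed degree (independent of the inputs), so the evaluation of $\M \num{n_{1}} \cdots \num{n_{p}}$ on a Turing machine takes time $O(P(n_{1}+\cdots+n_{p}))$ for some polynomial $P$, placing $\phi$ in \FPTIME.

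Combining the two inclusions yields the characterization. There is no genuine obstacle: all technical work has been discharged by Lemma~\ref{lem:complSTR} and Theorem~\ref{th:polbound}, and the corollary is really just an observation tying the two ends together. The only point that deserves a sentence of comment is that the notion of ``representation'' in Definition~\ref{def:num}.(\ref{def:numSTR}) matches exactly the hypotheses of Theorem~\ref{th:polbound}, so that the polynomial bound on the Turing-machine simulation gives precisely \FPTIME-membership of $\phi$.
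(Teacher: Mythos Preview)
Your proposal is correct and matches the paper's own proof, which simply states that the result follows from Lemma~\ref{lem:complSTR} and Theorem~\ref{th:polbound}. Your additional commentary on how the two directions fit together is accurate but not strictly needed.
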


\begin{proof}
The proof follows by Lemma~\ref{lem:complSTR} and Theorem~\ref{th:polbound}. 
\end{proof}

Since \STA\ is also sound and complete w.r.t. \FPTIME, the functions representable in \STR\ are exactly the ones representable in \STA.

Consider the \textsl{successors}, concatenating a binary word with either $0$ or $1$:
\begin{itemize}
\item $\underline{\mathtt{succ}_{0}} = \lam \w. \lam f_{0}. \lam f_{1}. \lam \x. \w f_{0} f_{1} (f_{0} \x)$ has type $\Witypei{m}{n} \lin \Witypei{m+1}{n}$ in $\STR$ (resp. $\Wtypei{m}{n} \lin \Wtypei{m+1}{n}$ in $\STA$) and corresponds to the function $f(x)=2x$;
\item $\underline{\mathtt{succ}_{1}} = \lam \w. \lam f_{0}. \lam f_{1}. \lam \x. \w f_{0} f_{1} (f_{1} \x)$ has type $\Witypei{m}{n} \lin \Witypei{m}{n+1}$ in $\STR$ (resp. $\Wtypei{m}{n} \lin \Wtypei{m}{n+1}$ in $\STA$) and corresponds to the function $f(x)=2x + 1$.
\end{itemize}

As an example of the gain in expressivity of $\STR$ with respect to $\STA$, let us consider the \textsl{iteration} of the successor function over binary words; such programming construct is not typable in $\STA$ with a meaningful type, because the polynomial bound is enforced by the fact that functions cannot be iterated, but only composed, in order to forbid the construction of exponential functions.
Nonetheless, in $\STR$ it is possible to type a limited notion of iteration, as shown in the following example.

\begin{example}[Iteration]\label{ex:str-iteration}
In \STR, it is possible to iterate the successor of a binary number a constant number $k$ of times.\\
Consider the $k$-loop term $\underline{\mathtt{ITER}_{k}} = \lam f. \lam \x. f^{k} \x$; in order to be applied to the program $\underline{\mathtt{succ}_{0}}$, the $\underline{\mathtt{ITER}_{k}}$ term can by typed in the following way:
\[ \infer=[(\lin I)]{ \der \lam f. \lam \x. f^{k} \x: \str{ \V}{k} \lin \Witypei{m}{n} \lin \Witypei{m+k}{n} }{ \infer[(m)]{ f: \str{ \V}{k}, \x: \Witypei{m}{n} \der f^{k} \x: \Witypei{m+k}{n} }{ \deduce{ f_{1}: \V_{1}, ... , f_{k}: \V_{k}, \x: \Witypei{m}{n} \der f_{k} ( ... (f_{1} \x ) ... ): \Witypei{m+k}{n} }{ \vdots } } } \]
where $\V_{i} = \Witypei{m+i-1}{n} \lin \Witypei{m+i}{n}$, for $1 \leq i \leq k$.\\
The term $\underline{\mathtt{ITER}_{k}}\ \underline{\mathtt{succ}_{0}}$ can then be obtained by applying rule $(\lin E)$ to the derivation above and to the one obtained as follows:
\[ \infer[(st)]{ \der \underline{\mathtt{succ}_{0}}: \str{ \V}{k} }{ ( \der \underline{\mathtt{succ}_{0}}: \V_{i} )_{1 \leq i \leq k} } \]

Similarly, in order to apply $\underline{\mathtt{ITER}_{k}}$ to $\underline{\mathtt{succ_{1}}}$, the $k$-loop term can be assigned the following type:
\[ \stra{ \Witypei{m}{n} \lin \Witypei{m}{n+1} , ... , \Witypei{m}{n+k-1} \lin \Witypei{m}{n+k} } \lin \Witypei{m}{n} \lin \Witypei{m}{n+k} \]

Observe that this term is not typable in $\STA$, because it is not possible to assign the same type to every $f_{i}$ $(1 \leq i \leq k)$.

\end{example}

\section{Stratification vs Intersection}\label{sec:strat-int}
It is natural to ask if there is a connection between stratified and intersection types. 
Let us consider the set ${\cal I}$ of intersection and quantifier types, where types are strict (no intersection on the right of the arrow) and intersection is a n-ary connective, for $n\geq 2$:
\[
\begin{array}{lcl}
\C & ::= &  \tvar \mid \zeta \rightarrow \C \mid \forall \tvar. \C \\ 
\zeta & ::= & \C \mid \underbrace{\zeta\wedge...\wedge\zeta}_n \mbox{    } (n \geq 2)
\end{array}
\]
where, for simplicity, the type constant $\tvar$ ranges over the same set as in Definition \ref{def:int-types}.
There is a natural translation $(.)^*$ from $\T$ to ${\cal I}$:
\[
\begin{array}{cc}
(\tvar)^*=\tvar,& (\sigma \lin \A)^*=(\sigma)^* \rightarrow (\A)^* \\
 (\stra{\sigma_1,...,\sigma_n})^*=
(\sigma_1)^*\wedge...\wedge(\sigma_n)^*, & (\forall \tvar.\A)^*=\forall\tvar.(\A)^*
\end{array}
\]

The translation can be extended to contexts:
\[ (\emptyset)^* = \emptyset, \qquad(\Gamma, \x:\sigma)^*= (\Gamma)^*, \x:(\sigma)^*\]
Then we can define a type assignment system $\IN$, obtained from $\STR$ by the translation $(.)^*$, such that
for each rule $(R)$ of $\STR$:
$$\infer[(R)]{\Gamma \der \M:\sigma}{(\Gamma_{i} \der \M:\sigma_{i})_{i\in I}}$$
(where the cardinality of $I$ depends on $(R)$) there is a corresponding rule $(R^*)$ in $\IN$:
\[
\infer[(R^*)]{(\Gamma)^* \der \M:(\sigma)^*}{((\Gamma_{i})^* \deri \M:(\sigma_{i})^*)_{i\in I}}
\]
Usually, intersection is considered modulo idempotency ($\zeta=\zeta\wedge\zeta$), commutativity 
($\zeta_1\wedge\zeta_2 = \zeta_2\wedge\zeta_1$) and associativity $ (\zeta_1 \wedge \zeta_2)\wedge \zeta_3=
 \zeta_1 \wedge (\zeta_2\wedge \zeta_3)$. 
It is easy to check that these two systems are equivalent, i.e., $\Gamma \der \M: \sigma$ if and only if $(\Gamma \der \M: \sigma)^*$, with the proviso that intersection is considered modulo idempotency and commutativity, \textsl{but not associativity}.
\section{Conclusion} \label{sec:concl}
We defined a type assignment system, \STR, which not only characterizes all and only the polynomial functions, but is also complete
with respect to strong normalization. The key ingredient for achieving the latter property is the types stratification, i.e. the possibility of contracting different premises $\x_{i}:\sigma_{i}$ ($1 \leq i \leq n$) into a single one $\x:\{\sigma_{1},...,\sigma_{n}\}$. This is clearly inspired by intersection types: indeed, in the previous section we showed that stratification corresponds to non-associative intersection;
in particular, the type ${ \{\sigma_{1},...,\sigma_{n}\} }$ could be written as ${ \sigma_{1}\wedge ...\wedge\sigma_{n} }$, but the first notation seems to better stress the fact that we consider intersection as set formation.
%So our stratification corresponds to use intersection as an idempotent, but non-associative connective, since $\{\{\sigma,\rho\}, \{\tau\}\} \not=\{\sigma, \rho, \tau\}$.
This is not the standard use of intersection, as intersection in the literature usually enjoys idempotency, associativity and commutativity, but all these properties together erase any quantitative information from a typing.
In fact, intersection types have been traditionally used for proving only qualitative properties of terms. 

\medskip
In our setting, if we consider the types of \STR\ without idempotency, in a derivation $\Pi$ the number of premises $\x_{i}:\A_{i}$ where $\x\in \FV{\M}$ corresponds exactly to the number of occurrences of $\x$ in the normal form of the subject of $\Pi$. 
Our first attempt was to design a type assignment similar to \STR\, where types did not enjoy idempotency nor associativity, in order to mimic intersection by multisets instead of sets. Removing associativity is necessary in order to express a bound on the complexity, since the stratification (which derives from the lack of associativity) gives a bound on the number of nested duplications of subterms. This alternative approach is introduced in \cite{DeBen11} and further developed in \cite{DeBenRDR12}. In that system, a property very similar to that of Theorem~\ref{th:polystep} is proved, namely that a term $\M$ typable by a derivation $\Pi$ reduces to normal form by a number of $\beta$-steps bounded by $|\M|^{\D{\Pi} + 1}$, but this result is not very suitable for implicit characterization of complexity classes: indeed non-idempotent types are too informative, 
as there is not a common type that can be assigned to all Church numerals (or binary words) and consequently the notion of data types is not satisfied.  Here we proved that cutting only associativity is sufficient for a polynomial characterization.

\medskip
In the literature, non-idempotent intersection types are used for studying quantitative properties.
Kfoury \cite{kfoury00} connected non-idempotent intersection types with linear $\beta$-reduction, and, together with Wells, he uses non-idempotent intersection for designing a type inference algorithm \cite{KfWells2004}.
Recently non-idempotent intersection types have been used by Pagani and Ronchi Della Rocca for characterizing the
solvability in the resource $\lambda$-calculus \cite{pagani10fossacs}.
In \cite{DiGia08} a game semantics of a typed $\lambda$-calculus has been described in logical form using an intersection type assignment system where the intersection does not enjoy any of its original properties (idempotence, commutativity, associativity). 
Some complexity results have been obtained by De Carvalho \cite{deCarvalho09CORR} and by Bernadet and Lengrand  \cite{bernadetleng11}. A logical description of relational model of $\lambda$-calculus \cite{bucciarelli07csl} has been designed, through a non-idempotent type assigment system, %\cite{paolini12draft},
but in our knowledge this is the first use of intersection types in the ICC setting.

\medskip

%\paragraph{\bf \em Acknowledgments} We would like to thank the referees for their extremely useful observations on the first version of this paper.

\bibliographystyle{plain}
\bibliography{main}

\end{document}